\documentclass[acmsmall,nonacm,screen]{acmart}
\usepackage{macros}

\title{Fuss-free cumulative universes: theory and practice}
\subtitle{Extended version with supplementary appendices}

\author{Rapha\"el Sterbac}
\email{raphael.sterbac@ens-paris-saclay.fr}
\orcid{0009-0001-6779-8323}
\affiliation{%
  \institution{ENS Paris-Saclay --- Université Paris-Saclay}
  \city{Gif-sur-Yvette}
  \country{France}
}

\author{Jonathan Sterling}
\email{js2878@cl.cam.ac.uk}
\orcid{0000-0002-0585-5564}
\affiliation{%
  \department{Computer Laboratory}
  \institution{University of Cambridge}
  \city{Cambridge}
  \country{United Kingdom}
}

\begin{document}

\begin{abstract}
  Universes are central to dependent type theory, and they are notoriously difficult to handle in a way that is both correct and usable. We propose a new \emph{``fuss-free''} generalised algebraic presentation for polymorphic cumulative universes that dispenses with the intricate theory of coherent universe coercions in favour of a simpler formulation, which we prove equivalent by means of a normalisation theorem for the former. Evidence for the utility of the fuss-free formulation is provided in the form of (1) an abstract specification of its bidirectional elaboration algorithm, and (2) a concrete implementation in Haskell. We also describe and implement an extension of the fuss-free universe hierarchy with a judgemental notion of datatype description from which prior notions of cumulative inductive type may be derived. \end{abstract}

\maketitle

\section{Introduction}\label{sec:introduction}

In dependently typed programming languages like Lean, Rocq, and Agda, \emph{universes} turn types into ordinary data that can be taken as arguments, returned from functions, and even embedded in more complex data structures. This is achieved by introducing a type $\TpUniv$ whose elements encode types:
\begin{mathparpagebreakable}
  \ebrule[univ-formation]{
    \infer0{\Gamma\vdash \TpUniv\IsType}
  }
  \and
  \ebrule[univ-decoding]{
    \hypo{\Gamma\vdash A:\TpUniv}
    \infer1{\Gamma\vdash \TpEl\prn{A}\IsType}
  }
\end{mathparpagebreakable}

It is important that we do not have an inverse to the decoding function $\TpEl$, for if $\TpUniv$ can be encoded by an element of $\TpUniv$, then the type system becomes inconsistent in the sense that every type becomes inhabited. More importantly, the intended semantic models of type theory (\emph{e.g.}\ sets!) would be lost. Instead, we must arrange for the universe $\TpUniv$ to be closed under the \emph{connectives} of type theory; for example, in addition to the usual formation rule for the dependent function space, we need a separate rule targeting the universe:
\begin{mathparpagebreakable}
  \ebrule[dfun-formation]{
    \hypo{\Gamma\vdash A\IsType}
 -2021   \hypo{\Gamma, x:A\vdash B\brk{x}\ \mathit{type}}
    \infer2{\Gamma\vdash \prn{x:A}\to B\brk{x}\ \mathit{type}}
  }
  \and
  \ebrule[dfun-code-formation]{
    \hypo{\Gamma\vdash M:\TpUniv}
    \hypo{\Gamma, x:\TpEl\prn{M}\vdash N\brk{x} : \TpUniv}
    \infer2{\Gamma\vdash\prn{x:M}\to_{\TpUniv} N\brk{x} : \TpUniv}
  }
  \and
  \ebrule[dfun-decoding]{
    \hypo{\Gamma\vdash M:\TpUniv}
    \hypo{\Gamma,x:\TpEl\prn{M}\vdash N\brk{x} : \TpUniv}
    \infer2{
      \Gamma\vdash
      \TpEl\prn[\big]{
        \prn{x:M}\to_{\TpUniv} N\brk{x}
      }
      =
      \prn{x:\TpEl\prn{M}}\to \TpEl\prn{N\brk{x}}\IsType
    }
  }
\end{mathparpagebreakable}

The decoding function $\TpEl\prn{-}$ and the subscripts $\to_{\TpUniv}$ are \emph{implicit} in most production systems; it is nonetheless important for the ``official'' rules of the type theory to be fully explicit, as it is with respect only to the explicit version that appropriate notions of \emph{model} can be specified. Implicit presentations of the calculus are then justified by proving metatheorems about the explicit syntax:

\begin{proposition}\label{prop:decoding-injective}
  In intensional Martin-L\"of type theory with one universe, the decoding function is injective in the sense that the following rule is \emph{admissible}:
  \begin{mathparpagebreakable}
    \ebrule[\textsc{univ-decoding-inj}]{
      \hypo{\Gamma\vdash M,N:\TpUniv}
      \hypo{\Gamma\vdash \TpEl\prn{M}= \TpEl\prn{N}\IsType}
      \infer[admissible]2{\Gamma\vdash M = N: \TpUniv}
    }
  \end{mathparpagebreakable}
\end{proposition}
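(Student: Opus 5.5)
We prove the rule admissible by a normalisation argument, followed by a syntactic inspection of normal forms. The first step is to invoke normalisation for intensional Martin-L\"of type theory with one universe, established for instance by normalisation-by-evaluation or by a gluing / logical-relations argument. This gives, for every context $\Gamma$, two bijections:
\begin{itemize}
\item between types $\Gamma\vdash A\IsType$ modulo judgemental equality and \emph{normal types};
\item between terms $\Gamma\vdash M:A$ modulo judgemental equality and \emph{normal terms} of the normal form of $A$.
\end{itemize}
We fix the grammar of normal forms carefully.
\begin{itemize}
\item Normal types are the type formers applied to normal types, for instance $\prn{x:A}\to B\brk{x}$ and $\TpUniv$, together with $\TpEl\prn{R}$ where $R$ is a \emph{neutral} term of type $\TpUniv$.
\item Normal codes in $\TpUniv$ are the code formers applied to normal codes, for instance $\prn{x:M}\to_{\TpUniv}N\brk{x}$, together with neutral terms.
\end{itemize}
The hypothesis $\TpEl\prn{M}=\TpEl\prn{N}$ then says that the normal forms of the two decoded types coincide syntactically.

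The second step is to define a syntactic decoding $\mathsf{el}$ from normal codes to normal types, by recursion on the structure of normal codes:
\begin{itemize}
\item $\mathsf{el}\prn{\prn{x:M}\to_{\TpUniv}N} = \prn{x:\mathsf{el}\prn{M}}\to \mathsf{el}\prn{N}$, and similarly for the other code formers;
\item $\mathsf{el}\prn{R} = \TpEl\prn{R}$ for neutral $R$.
\end{itemize}
We then prove two lemmas.
\begin{enumerate}
\item \textbf{Soundness.} The normal form of $\TpEl\prn{M}$ is $\mathsf{el}\prn{\mathrm{nf}\prn{M}}$. This follows from the decoding equations such as \textsc{dfun-decoding}, by induction on the normal code, using that normal types are closed under these constructors.
\item \textbf{Injectivity.} $\mathsf{el}$ is injective on normal codes. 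We argue by induction on normal codes, comparing head constructors. Each code former is sent to the corresponding type former, neutrals are sent to $\TpEl$ of a neutral, and these heads are pairwise distinct. Matching heads lets us apply the induction hypothesis to the subterms; this needs to be done under extended contexts, for instance $\Gamma,x:\TpEl\prn{M}$, so the statement is formulated for all contexts at once.
\end{enumerate}
Combining the two lemmas, $\mathrm{nf}\prn{M}=\mathrm{nf}\prn{N}$, and hence $\Gamma\vdash M=N:\TpUniv$ by soundness of normalisation.

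The main obstacle is ensuring that the normal-form grammar really keeps codes and types apart. In particular, $\TpUniv$ itself must have no code, which holds since there is only one universe. We must also not identify $\TpEl\prn{R}$ for neutral $R$ with any canonical type; this is exactly where intensionality and the absence of $\eta$-like rules on $\TpUniv$ matter. The real work is therefore the normalisation theorem itself. I would take it from the literature, for example Abel's NbE or a presheaf/gluing model whose normal types are presented as above. The extra care needed is to check that the model interprets $\TpEl$ on neutral codes as the neutral type $\TpEl\prn{R}$ and not as something else. Once that is verified, the admissibility argument is a routine induction.
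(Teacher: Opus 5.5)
Your proposal is correct and follows essentially the same route as the paper's proof of the analogous injectivity of decoding for \ThyMLTTNat{} (the paper states this one-universe proposition without proof). There, normalisation reduces the goal to normal forms of codes, cross-cases are excluded by no-confusion, and matching heads are handled by injectivity of type constructors together with the induction hypothesis under binders; your explicit decoding $\mathsf{el}$ on normal codes simply packages those two corollaries of normalisation into a single syntactic injectivity lemma.
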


It is the combination of \textsc{dfun-decoding} and \textsc{univ-decoding-inj} that justifies an implicit syntax in which $\TpEl$ is silent and $\to_{\TpUniv}$ is written as $\to$. The failure of \textsc{dfun-decoding} would lead to nonsensical error messages like \texttt{``$\prn{\_:\mathbb{N}}\to\mathbb{N}$  could not be unified with $\prn{\_:\mathbb{N}}\to\mathbb{N}$''}. The role of \textsc{univ-decoding-inj} is more subtle: we argue that the notional meaning of a \emph{silent} coercion from the universe to the class of all types must be an inclusion, and inclusions are always injective.

\subsection{Background: type theory with multiple universes}

Although most of pure mathematics can in principle be formalised without more than one universe~\citep{mclarty-2020}, it can be very inconvenient to do so. For this reason, most production implementations of dependent types have an entire hierarchy of universes $\TpUniv_0,\TpUniv_1,\ldots$, with each one encoded in the next. The idea of a universe hierarchy is simple enough, but there are so many possibilities for the exact formalism and its rules that no two systems agree.
Many of the notable discrepancies between different accounts of universe hierarchies come down to the question of \emph{cumulativity}, which asks:
\begin{quote}
  Given $i\leq j$ and term $\Gamma\vdash M:\TpUniv_i$, is there some term $\Gamma\vdash M':\TpUniv_j$ such that $\Gamma\vdash\TpEl_j\prn{M'}=\TpEl_i\prn{M}\IsType$?
\end{quote}

\subsubsection{Design of non-cumulative hierarchies}
In systems like Lean and Agda, the hierarchy is explicitly \emph{non}-cumulative. In these systems, although it is possible to define a record type $\mathsf{Lift}_{i}^{j}\prn{M:\TpUniv_i}:\TpUniv_j$ such that $\TpEl_j\prn[\big]{\mathsf{Lift}_i^j\prn{M}}$ is \emph{isomorphic} to $\TpEl_i\prn{M}$, we would not have the judgemental equality that cumulativity requires. In the absence of cumulativity, it is not convenient to close \emph{each} universe under type connectives (like the dependent function space); instead, systems like Lean and Agda define the type constructors on the \emph{least upper bounds} of the universe levels of their components:
\begin{mathparpagebreakable}
  \ebrule[dfun-code-formation-lub]{
    \hypo{\Gamma\vdash M:\TpUniv_i}
    \hypo{\Gamma,x:\TpEl_i\prn{M}\vdash N\brk{x}:\TpUniv_j}
    \infer2{
      \Gamma\vdash \prn{x:M} \to_{i,j} N\brk{x}: \TpUniv_{i\sqcup j}
    }
  }
  \and
  \ebrule[dfun-decoding-lub]{
    \hypo{\Gamma\vdash M:\TpUniv_i}
    \hypo{\Gamma,x:\TpEl_i\prn{M}\vdash N\brk{x}:\TpUniv_j}
    \infer2{
      \Gamma\vdash
        \TpEl\prn[\big]{\prn{x:M} \to_{i,j} N\brk{x}}
        =
        \prn{x:\TpEl_i\prn{M}}\to \TpEl_j\prn{N\brk{x}}\IsType
    }
  }
\end{mathparpagebreakable}

Of course, our description does not perfectly match the implementations of Lean and Agda, which actually lack a separate top-level $\Gamma\vdash A\IsType$ judgement. We aim here only to draw out the important distinctions between different systems.

There are technical reasons why non-cumulative hierarchies are favoured in production systems:

\begin{enumerate}
\item Non-cumulative hierarchies are very easy to specify, and their rules are similar to those of \emph{Pure Type Systems}~\citep{barendregt-1992}.
\item Non-cumulative universes can avoid difficult implementation problems, including the contention between higher-order unification and the insertion of subtyping coercions.\footnote{For example, Agda has an experimental cumulative mode, but it works very poorly due to the interaction with unification.}
\item Non-cumulative hierarchies are compatible with a broad range of semantic models, including all Grothendieck $\infty$-toposes~\citep{shulman-2019}.
\end{enumerate}

On the other hand, most practitioners of pen-and-paper type theory continue to use cumulative universes, or assume (possibly erroneously) that anything they can write informally with a cumulative hierarchy can be formalised with a non-cumulative one. We believe that at least part of the gap between formal and informal practice can be explained by the current ``folklore'' status of both the metatheory of cumulative hierarchies \emph{and} the appropriate techniques for their practical implementation. We aim to bridge that gap here so that future implementers of dependently typed systems can assess the design space for universe hierarchies on informed grounds.

\subsubsection{Design of cumulative hierarchies}

In a cumulative hierarchy, a new ``lifting'' constructor is added that allows codes from one universe to be lifted to a higher one:
\begin{mathparpagebreakable}
  \ebrule[univ-lift-formation]{
    \hypo{\Gamma\vdash i \leq j}
    \hypo{\Gamma\vdash M:\TpUniv_i}
    \infer2{\Gamma\vdash \Lift{i}{j}{M} : \TpUniv_j}
  }
  \and
  \ebrule[univ-lift-decoding]{
    \hypo{\Gamma\vdash i \leq j}
    \hypo{\Gamma\vdash M:\TpUniv_i}
    \infer2{\Gamma\vdash\TpEl_j\prn{\Lift{i}{j}{M}} = \TpEl_i\prn{M}\IsType}
  }
\end{mathparpagebreakable}

Unfortunately, the equational rules for the lifting operation quickly become extremely complicated. First, we require lifting to be ``functorial'' in the preorder of universe levels in the sense that sequences of lifts can be collapsed:
\begin{mathparpagebreakable}
  \ebrule[univ-lift-func-idn]{
    \hypo{\Gamma\vdash M:\TpUniv_i}
    \infer1{\Gamma\vdash \Lift{i}{i}{M} = M : \TpUniv_i}
  }
  \and
  \ebrule[univ-lift-func-cmp]{
    \hypo{\Gamma\vdash i \leq j \leq k}
    \hypo{\Gamma\vdash M:\TpUniv_i}
    \infer2{\Gamma\vdash \Lift{j}{k}\Lift{i}{j}M = \Lift{i}{k}M : \TpUniv_k}
  }
\end{mathparpagebreakable}

In addition, we require a system of ``naturality'' rules that ensure that the lifting operation commutes with all built-in type constructors. For example:
\begin{mathparpagebreakable}
  \ebrule[univ-lift-dfun-natural]{
    \hypo{\Gamma\vdash i\leq j}
    \hypo{\Gamma\vdash M:\TpUniv_i}
    \hypo{\Gamma,x:\TpEl_i\prn{M}\vdash N\brk{x}\colon \TpUniv_i}
    \infer3{
      \Gamma\vdash
      \Lift{i}{j}\prn[\big]{
        \prn{x:M}\to_i N\brk{x}
      }
      =
      \prn{x:\Lift{i}{j}{M}}\to \Lift{i}{j}N\brk{x}
      : \TpUniv_i
    }
  }
\end{mathparpagebreakable}

We will observe that a counterpart to \zcref{prop:decoding-injective} holds for a cumulative universe hierarchy satisfying the described laws:

\begin{proposition}\label{prop:hierarchy-decoding-injective}
  In Martin-L\"of type theory with a cumulative universe hierarchy, each decoding function is injective in the sense that the following rule is admissible:
  \begin{mathparpagebreakable}
    \ebrule[\textsc{univ-decoding-inj}]{
      \hypo{\Gamma\vdash M,N:\TpUniv_i}
      \hypo{\Gamma\vdash \TpEl_i\prn{M}= \TpEl_i\prn{N}\IsType}
      \infer[admissible]2{\Gamma\vdash M = N: \TpUniv}
    }
  \end{mathparpagebreakable}
\end{proposition}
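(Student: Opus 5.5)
The plan is to prove the rule by normalisation, as for \zcref{prop:decoding-injective} in the one-universe case. We first orient the equations for lifting into a rewriting system on codes. Using \textsc{univ-lift-func-idn} and \textsc{univ-lift-func-cmp}, any stack of lifts collapses to at most one. Using the naturality rules such as \textsc{univ-lift-dfun-natural}, lifts are pushed through every code constructor. This yields \emph{normal codes} for $\TpUniv_i$, which take one of two forms:
\begin{itemize}
\item a constructor code at level $i$ (dependent function, $\TpUniv_k$ for $k<i$, base types, \ldots) whose subcodes are again normal;
\item a single lift $\Lift{k}{i}{R}$ with $k<i$ applied to a \emph{neutral} code $R:\TpUniv_k$ (a variable or eliminator application), where no further computation is possible.
\end{itemize}
Normal types are defined analogously, with $\TpEl_k(R)$ for neutral $R$ as the neutral types. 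We then state a normalisation theorem: every well-typed code and type is judgementally equal to a unique normal form, and judgemental equality coincides with syntactic equality of normal forms. We would prove it by a gluing / normalisation-by-evaluation model (presheaves over renamings), interpreting each $\TpUniv_i$ as normal codes together with a decoding into semantic types. In this model, $\Lift{i}{j}{-}$ is interpreted by a \emph{syntactic} operation on normal codes that pushes the lift to the neutral leaves and composes it with any lift already there. Functoriality and naturality then hold strictly in the model.

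The injectivity argument then runs as follows. Suppose $\Gamma\vdash\TpEl_i(M)=\TpEl_i(N)\IsType$. We normalise $M$ and $N$ to normal codes $\hat M$ and $\hat N$, and show that decoding sends normal codes to normal types in an injective way, by induction on $\hat M$:
\begin{itemize}
\item a function code decodes to $\prn{x:\TpEl(\hat A)}\to\TpEl(\hat B)$, and we recurse on $\hat A$ and $\hat B$;
\item $\TpUniv_k$ decodes to $\TpUniv_k$;
\item $\Lift{k}{i}{R}$ decodes, by \textsc{univ-lift-decoding}, to the neutral type $\TpEl_k(R)$.
\end{itemize}
Since the normal forms of $\TpEl_i(M)$ and $\TpEl_i(N)$ agree, a case split on the head constructor together with the induction hypothesis gives $\hat M=\hat N$ syntactically, and hence $\Gamma\vdash M=N$.

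The delicate case, and the main obstacle, is the neutral one. A neutral type $\TpEl_k(R)$ must determine both the level $k$ and the neutral code $R$. This forces the normal-form discipline above: a neutral code of level $k$ may sit in $\TpUniv_i$ only under exactly one lift $\Lift{k}{i}$, with the identity lift erased. Without this, $\TpEl_i(\Lift{k}{i}R)$ and $\TpEl_i(\Lift{k'}{i}R')$ could coincide through differently indexed lifts. To close this case, neutral codes will be annotated with their intrinsic level, so that $\TpEl_k(R)$ recovers $k$ from $R$ itself. The other hard step is the strictness of naturality and functoriality inside the model; we would handle it by defining lifting on semantic codes by recursion on normal forms and verifying the laws pointwise. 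Level variables, the $\Gamma\vdash i\leq j$ premises, would be treated as part of the context, and normal forms would be taken modulo the free preorder on level expressions.
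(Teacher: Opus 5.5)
Your proposal is correct and follows essentially the same route as the paper. The paper also proves normalisation for the natural hierarchy by a gluing model over renamings. Its normal codes are canonical code constructors or a single lift of a neutral code (its \textsc{nf-lift}), and injectivity then follows by induction on normal forms: cross-cases are excluded by no-confusion, and same-head cases use injectivity of $\TpPi$ (or of the neutral-type former \textsc{nftp-el}) together with the induction hypothesis.

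The differences are only in bookkeeping. The paper keeps the lift even when source and target levels coincide, so the neutral's level is always recorded and there is no case split on $k<i$. Its normalisation model also stores the normal forms of all higher lifts inside each code's computability witness, so lifting is a projection, rather than defining lift by recursion on normal forms as you do.
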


We stress that the admissibility of \textsc{univ-decoding-inj} is extremely sensitive to the presence of naturality laws like \textsc{univ-lift-dfun-natural}: if any one of them were omitted, \zcref{prop:hierarchy-decoding-injective} would surely fail. When the necessary laws are included, however, it furthermore follows that the lifting operations too are injective.

\begin{corollary}\label{cor:univ-lift-inj}
  In Martin-L\"of type theory with a cumulative universe hierarchy, each lifting function is injective in the sense that the following rule is admissible:
  \begin{mathparpagebreakable}
    \ebrule[\textsc{univ-lift-inj}]{
      \hypo{\Gamma\vdash i \leq j}
      \hypo{\Gamma\vdash M,N:\TpUniv_i}
      \hypo{\Gamma\vdash \Lift{i}{j}M= \Lift{i}{j}N:\TpUniv_j}
      \infer[admissible]3{\Gamma\vdash M = N: \TpUniv_i}
    }
  \end{mathparpagebreakable}
\end{corollary}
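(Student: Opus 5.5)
The corollary follows directly from \zcref{prop:hierarchy-decoding-injective} together with the decoding rule for lifts; no new normalisation argument is needed. The idea is to push the hypothesised equation of lifted codes through $\TpEl_j$, strip off the lifts using \textsc{univ-lift-decoding}, and then invoke injectivity of $\TpEl_i$.

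Fix $\Gamma\vdash i\leq j$, $\Gamma\vdash M,N:\TpUniv_i$ and $\Gamma\vdash \Lift{i}{j}M = \Lift{i}{j}N:\TpUniv_j$.

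First, $\TpEl_j\prn{-}$ is a type former in the generalised algebraic presentation, so it respects judgemental equality (congruence). This gives
$\Gamma\vdash \TpEl_j\prn{\Lift{i}{j}M} = \TpEl_j\prn{\Lift{i}{j}N}\IsType$.

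Second, \textsc{univ-lift-decoding} applies on each side, using the premises $i\leq j$ and $M,N:\TpUniv_i$. It yields $\TpEl_j\prn{\Lift{i}{j}M} = \TpEl_i\prn{M}$ and $\TpEl_j\prn{\Lift{i}{j}N} = \TpEl_i\prn{N}$.

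Third, symmetry and transitivity of type equality combine these into
$\Gamma\vdash \TpEl_i\prn{M} = \TpEl_i\prn{N}\IsType$.

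Finally, \zcref{prop:hierarchy-decoding-injective} (\textsc{univ-decoding-inj} at level $i$), applied to $\Gamma\vdash M,N:\TpUniv_i$, gives $\Gamma\vdash M=N:\TpUniv_i$, as required.

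The only point needing care is that the injectivity rule is stated for an arbitrary level $i$, which may be a level variable from the context, and is \emph{admissible} rather than derivable. Admissibility suffices here because we use it only as a closing step on derivable premises, so the composite rule is again admissible. If levels are internal to contexts, I would also check that the proposition is stated uniformly in level contexts, so that it applies at the possibly non-closed level $i$. All the real difficulty (the naturality laws and normalisation) is absorbed into \zcref{prop:hierarchy-decoding-injective}.
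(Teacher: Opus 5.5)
Your proposal is correct and follows the paper's proof exactly: apply congruence of $\TpEl_j$ to the hypothesis, rewrite both sides with \textsc{univ-lift-decoding} to get $\TpEl_i\prn{M}=\TpEl_i\prn{N}$, and conclude by \zcref{prop:hierarchy-decoding-injective}. Your remark that chaining derivable steps into an admissible rule yields an admissible rule is sound, and you write $\TpEl_j$ in the congruence step, correcting the paper's slip of writing $\TpEl_i$ there.
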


\begin{proof}
  Suppose we have $\Lift{i}{j}M= \Lift{i}{j}{N}$. By congruence of $\TpEl_i$ we therefore have $\TpEl_j\prn{\Lift{i}{j}M}= \TpEl_j\prn{\Lift{i}{j}{N}}$, which by \textsc{univ-decoding-lift} is equivalent to $\TpEl_i\prn{M}=\TpEl_i\prn{N}$, from which we conclude by \zcref{prop:hierarchy-decoding-injective} that $M= N$.
\end{proof}

\subsection{Key ideas: the fuss-free approach to cumulative universes}

In light of \zcref{prop:hierarchy-decoding-injective,cor:univ-lift-inj}, we might revisit the design of the cumulative hierarchy. If we considered a version of the hierarchy in which we take \textsc{univ-decoding-inj} as a primitive (derivable) rule, the functoriality and naturality rules for the lifting operation become redundant!  For example:

\begin{enumerate}
  \item \textsc{univ-lift-func-cmp}: To deduce  $\Lift{j}{k}\Lift{i}{j}M = \Lift{i}{k}M$, it suffices by \textsc{univ-decoding-inj} to show that $\TpEl_k\prn{\Lift{j}{k}\Lift{i}{j}M}= \TpEl_k\prn{\Lift{i}{k}{M}}$ which follows from a few applications of \textsc{univ-lift-decoding}.
  \item \textsc{univ-lift-dfun-natural}: To deduce $\Lift{i}{j}\prn[\big]{\prn{x:M}\to_i N\brk{x}}=\prn{x:\Lift{i}{j}{M}}\to \Lift{i}{j}N\brk{x}$, it suffices by \textsc{univ-decoding-inj} to show that $\TpEl_j\prn[\Big]{\Lift{i}{j}\prn[\big]{\prn{x:M}\to_i N\brk{x}}}=\TpEl_j\prn[\big]{\prn{x:\Lift{i}{j}{M}}\to \Lift{i}{j}N\brk{x}}$, which follows from a combination of \textsc{univ-lift-decoding} and \textsc{dfun-decoding}.
\end{enumerate}

Replacing all functoriality and naturality rules with a \emph{single} injectivity law is already a significant simplification, but we can go further. At the moment, our syntax must have constructors for type connectives at every universe level which duplicate the corresponding constructors at the type level; this duplication must then be mediated by decoding laws like \textsc{dfun-decoding}. We can mitigate the duplication, however, by introducing a \emph{judgemental notion of smallness} (\zcref{sec:judgemental-smallness}) that axiomatises the image of the decoding maps.

\subsubsection{A judgemental notion of smallness}\label{sec:judgemental-smallness}

Our final simplification is to take the injectivity of decoding seriously, and simply replace $\TpUniv_i$ with its \emph{\textbf{image}} in the class of all types. In type theory, this amounts to defining a judgemental notion of \emph{smallness} and then using this to give uniform introduction and elimination rules to each universe. In particular, we define a new form of judgement $\Gamma\vdash A\IsSmall_i$ presupposing $\Gamma\vdash A\IsType$ and close it under rules that express the closure properties of the type connectives:\footnote{Smallness is here formulated as a ``synthetic'' judgement in the sense of \citet{martin-lof-1994}. In particular, $A$ is not the \emph{subject} of the judgement $A\IsSmall_i$ but rather one of its presuppositions. We could (and later shall) equivalently formulate smallness analytically as $p::A\IsSmall_i$ with subject $p$ ranging over smallness witnesses, with an additional rule asserting $p= q::A\IsSmall_i$ for all $p,q$. In an implementation, the synthetic formulation is ideal because there is no need to carry around witnesses that can be decidably reconstructed on demand.}
\begin{mathparpagebreakable}
  \ebrule[dfun-small]{
    \hypo{\Gamma\vdash A\IsSmall_i}
    \hypo{\Gamma,x:A\vdash B\brk{x}\IsSmall_i}
    \infer2{\Gamma\vdash \prn{x:A}\to B\brk{x}\IsSmall_i}
  }
  \and
  \ebrule[univ-small]{
    \hypo{\Gamma\vdash i < j}
    \infer1{\Gamma\vdash \TpUniv_i\IsSmall_j}
  }
  \and
  \ebrule[small-mono]{
    \hypo{\Gamma\vdash i \leq j}
    \hypo{\Gamma\vdash A\IsSmall_i}
    \infer2{\Gamma\vdash A\IsSmall_j}
    }
\end{mathparpagebreakable}

Then introduction and elimination rules for $\TpUniv_i$ are given with respect to the smallness judgement, satisfying appropriate $\beta$- and extensionality laws:\footnote{Of course, the extensionality law is equivalent to an $\eta$-law.}
\begin{mathparpagebreakable}
  \ebrule[univ-intro]{
    \hypo{\Gamma\vdash A\IsSmall_i}
    \infer1{\Gamma\vdash\Shrink{i}{A} : \TpUniv_i}
  }
  \and
  \ebrule[univ-elim]{
    \hypo{\Gamma\vdash M:\TpUniv_i}
    \infer1{\Gamma\vdash \Enlarge{i}{M}\IsType, \Enlarge{i}{M}\IsSmall}
  }
  \and
  \ebrule[univ-beta]{
    \hypo{\Gamma\vdash A\IsSmall_i}
    \infer1{\Gamma\vdash \Enlarge{i}{\Shrink{i}{A}} = A\IsType}
  }
  \and
  \ebrule[univ-ext]{
    \hypo{\Gamma\vdash \Enlarge{i}{M} = \Enlarge{i}{N}\IsType}
    \infer1{\Gamma\vdash M= N:\TpUniv_i}
  }
\end{mathparpagebreakable}

In fact, our rules for universes are parametric in \emph{any} notion of smallness and therefore can be made to modularly accommodate a variety of systems in both theory and implementation.

Because it is easy to specify, easy to implement, and easy to use, we refer to the calculus with judgemental smallness as the \textbf{fuss-free approach to universes}.

\subsubsection{Syntactic equivalence of fuss-free universes}
We will show that, up to admissibility, the simplified calculus with judgemental smallness is \emph{equivalent} to the more complicated cumulative hierarchy with functorial lifting operations and naturality laws. Indeed, the lifting operation can be implemented as $\Lift{i}{j}{M}:= \Shrink{j}{\Enlarge{i}{M}}$ and its entire equational theory is derivable from the \textsc{univ-beta} and \textsc{univ-eta}. This has consequences both for theory and practice:

\begin{enumerate}
  \item \textbf{Theory}: We can interpret the very convenient type theory with judgemental smallness into semantic models of cumulative hierarchies \emph{without} requiring the model to satisfy injectivity laws, as these can first be eliminated via our syntactic equivalence.
  \item \textbf{Practice}: The smallness calculus offers non-trivial simplifications for practical implementation, because many equational rules and generating syntactic forms (which are necessarily duplicated across the syntax and the semantic domain) are replaced by a very small collection of easily programmed rules. Furthermore, a direct implementation of the functorial lifting coercions would require techniques like those of \citet{allais-mcbride-boutillier-2013}, whereas the smallness calculus can be implemented by adapting a stock algorithm like that of \citet{coquand-type-checking} as we shall show.
\end{enumerate}

\subsubsection{Cumulative inductive types in the fuss-free style: staying large}\label{sec:cumulative-inductive-types}

For several years, the Rocq proof assistant has had experimental modes for universe polymorphism and universe cumulativity. Inductive datatypes were an early sticking point in the interaction between these two features, as \citet{timany-jacobs-2015} pointed out; the problem is easiest to illustrate for a universe polymorphic list datatype constructor, where given levels $i\leq j$ one could have $t:\mathsf{List}_i~A$ but not $t:\mathsf{List}_j~A$ unless $i=j$. This problem was solved by \citet{timany-sozeau-2018}, who proposed to have subtypings $\mathsf{List}_i~A\leq\mathsf{List}_j~A$ regardless of how $i$ and $j$ are related in the level order.

The subtyping relation proposed by \citet{timany-sozeau-2018} is perfectly sensible, and it also appears in our own system (up to the insertion of $\Enlarge{}/\Shrink{}$ coercions). But in the fuss-free style, we may go a little further and avoid parameterising datatypes ahead of time in universe levels. Indeed, excessive level parameterisation quickly becomes unwieldy and may even lead to serious problems at library boundaries~\cite{int-tree-issues}.

Our methodology starts from the following principle: \textbf{stay large}. In other words, rather than adding level parameters everywhere to anticipate the closure of universes under a defined type constructor, the constructor should instead be defined on \emph{types}; then, when it needs to be passed as an element of a universe, the fuss-free rules allow it to be shrunk appropriately. We believe that ``staying large'' is most in keeping with standard mathematical practice, where (\emph{e.g.}) the concept of a category is prior to the concept of a \emph{locally small} category. Returning to lists, in a fuss-free system we would define the list type constructor without reference to universes at all:
\begin{lstlisting}
data List A where {
   nil : List A ;
   cons : A -> List A -> List A
} ;
\end{lstlisting}

The declaration above would correspond in a proper implementation to the definitional extension of the theory by a new datatype constructor
\begin{mathpar}
  \ebrule{
    \hypo{A\ \mathit{type}}
    \infer1{\mathsf{List}~A\ \mathit{type}}
  }
  \and
  \ebrule{
    \hypo{A\ \mathit{type}}
    \infer1{\mathsf{nil}_A : \mathsf{List}~A}
  }
  \and
  \ebrule{
    \hypo{A\ \mathit{type}}
    \hypo{x:A}
    \hypo{\mathit{xs}:\mathsf{List}~A}
    \infer3{\mathsf{cons}_A~x~\mathit{xs} : \mathsf{List}~A}
  }
\end{mathpar}
whose instances $\mathsf{List}\prn{A}$ may all be implemented in terms of the fixed point of an appropriate datatype description in the style of \citet{dagand:2013}. The rules of the fuss-free system would then, however, ensure that when $A$ is small, so is $\mathsf{List}~A$. We are \emph{able} to define a polymorphic type code constructor
\begin{align*}
  \mathsf{list}_i &: \TpUniv_i\to\TpUniv_i
  \\
  \mathsf{list}_i &\coloneq \lambda X\mathpunct{.} \Shrink{i}\prn[\big]{\mathsf{List}~\prn{\Enlarge{i}X}}
\end{align*}
with the property that for each $i\leq j$ and $M:\TpUniv_i$ we have $\Enlarge{i}\mathsf{list}_i\prn{M} = \Enlarge{j}\mathsf{list}_j\prn{\Shrink{j}\Enlarge{i}M}$, but the real lesson is that we don't have to. Instead, we keep $\mathsf{List}$ large and shrink it down only when needed; because the $\Enlarge{}/\Shrink{}$ coercions can easily be inserted automatically by an elaborator in a syntax-directed way, we may simply write $\mathsf{List}$ everywhere without bothering with levels.

Our proposal will work for any inductive type that can be encoded by a datatype description, which we adapt from \citet{dagand:2013}. Unlike Dagand, we have a top-level judgemental notion of datatype description which we then instrument, in fuss-free fashion, with a hierarchy of universes of small datatype descriptions.
This is a shift of paradigm: instead of universe levels being an intrinsic part of the datatype definition \citep{timany-sozeau-2018}, we describe the shape of a datatype without regard to size except where absolutely necessary; size then emerges naturally at the \emph{use site} instead of the \emph{definition site}.

\subsection{History and related work}

\subsubsection{Structural \emph{vs.}\ material perspective}\label{sec:structural-vs-material}
We have so far adopted a ``structural'' or \emph{generalised algebraic} perspective on type theory in which we treat different form of judgement as ranging over disjoint syntactic categories (\emph{e.g.}\ types and terms). We allow that these distinctions may be elided where justified by a coherent system of implicit coercions, but we have stressed that the official rules are given in a structural style that directly evinces the universal property of the initial categorical model of type theory~\citep{dybjer-1996,uemura-2021-thesis,bcde-2021}.
An alternative historical approach is to treat the embedding of (certain) terms into types not as a coercion but as an \emph{actual} subset inclusion, which (by analogy with the contrast between structural and material set theory~\cite{shulman:2019:set-theory}) we might refer to as a \emph{material} perspective on type theory.

The advantage of structural formulations is that they give rise (automatically!) to a general notion of model that makes sense in any category. This is important: present-day applications of type theory to both computer science and general mathematics rely on constructing models not only in the category of sets, but also in other toposes~\citep{reus-1995,bmss-2011,sterling-harper:2021,cherubini-coquand-hutzler-2024,grodin-li-harper-2026} and even quasi-toposes~\citep{kammar-2021-pihoc}; in these cases, a structural approach is not just desirable but required. Conversely, even the \emph{syntactic metatheory} (normalisation, decidability, \emph{etc.}) calls for a structural notion of model in which Tait's method of computability can be phrased in terms of Artin glueing~\citep{fiore:2002,altenkirch-kaposi-2016-nbe,sterling-angiuli-2021,BocquetPhD}.

In contrast, the material interpretation of universes evinces no single general notion of model, although specific \emph{interpretations} do exist: there is Barras' set-theoretic interpretation~\citep{barras-2012} in which decoding is treated as a subset inclusion, and then there are realisability-style models~\citep{allen-1987-lics} in which decoding is treated as a partial equivalence relation inclusion. In spite of these examples, there is no general theory of model for the material formulation.
Moreover, the viability of the material perspective can be doubted in light of its failure to justify the obvious subtyping rule for dependent function types~\citep{barras-2012,timany-sozeau-2018}:
\begin{mathparpagebreakable}
  \ebrule[dfun-subtyping*]{
    \hypo{\Gamma\vdash A' <: A}
    \hypo{\Gamma,x:A'\vdash B[x] <: B'[x]}
    \infer2{
      \Gamma\vdash \prn{x:A}\to B[x] <: \prn{x:A'}\to B'[x]
    }
  }
\end{mathparpagebreakable}

The rule above presents no difficulty for structural accounts of type theory where subtyping inclusions are treated as (possibly implicit) coherent coercions~\citep{luo-1997}. For this reason, we advocate for the structural interpretation, for which we propose a simpler formulation.

\subsubsection{Universes \`a la Russell and \`a la Tarski}\label{sec:russell-vs-tarski}

In the case of universes, the material/structural divide (\zcref{sec:structural-vs-material}) is often described as the contrast between universes cooked ``\`a la Russell'' and ``\`a la Tarski'', where the latter is understood to involve an explicit decoding family that the former lacks. We have recently become aware that this folk interpretation of Per Martin-L\"of's contrast between Russellian and Tarskian universes~\citep{martin-lof-1984} is at odds with his original intent~\citep{dybjer-2025-universes}---in particular, Martin-L\"of actually meant to draw attention to the contrast between \emph{open} and \emph{closed} universes, and it seems to have been a coincidence that his exposition was notationally aligned with the choice of implicit \emph{vs.}\ explicit decoding.
We can clarify the point slightly:

\begin{enumerate}
  \item The eliminator of an open/Russellian universe is the (implicit \emph{or} explicit) decoding family.
  \item The eliminator of a closed/Tarskian universe is the \emph{induction-recursion} principle that expresses its universal property as the smallest family of types closed under certain operations.
\end{enumerate}

Our work primarily concerns the comparison of implicit and explicit presentations of Russellian (open) universes from the structural point of view, with direct applications to the practical implementation of dependently typed programming languages and interactive proof assistants.
Notably, the recent work of \citet{coquand-sterbac-2026} presents a syntactic equivalence of the explicit and implicit presentation which shows the complexity of relating the two systems.

\subsubsection{Universes \`a la Coquand}\label{sec:coquand-universes}

Our proposal (\zcref{sec:judgemental-smallness}) is similar to that of \citet{coquand-2012-presheaf-model}, except that (1) we retain a top-level notion of ``large'' type, and (2) we retain the judgemental/syntactic distinction between types and terms, in line with our commitment to the structural perspective on type theory (\zcref{sec:structural-vs-material}).
We note that several others have given a directly structural account of ``universes \`a la Coquand'' without introducing a top level of types; see for example the work of \citet{gratzer-kavvos-birkedal-2021}.
We believe that our proposal is the closest thing possible to a structural account of Martin-L\"of's \emph{universes \`a la Russell} (\zcref{sec:russell-vs-tarski}).

\subsubsection{Explicit universe polymorphism}
To avoid requiring a programmer to write the same code at multiple universe levels, most systems include some kind of polymorphism over universe levels. Recently, \citet{bcde-2022} proposed a structural account of several variations on universe polymorphism, which are currently in the process of being implemented in Rocq \citep{bezem-sozeau-2025}. Our work is complementary to that of \emph{op.\ cit.}, and we prove normalisation for the cumulative variant of this theory (\zcref{sec:normalisation}).
In fact, the fuss-free approach is designed to be compatible with \emph{any} other reasonable form of universe polymorphism. Notably, a first normalisation result was recently proven and formalised by \citet{dfk-2026} for a non-cumulative theory with \emph{first-class universe levels}, \emph {i.e.}\ where levels form a type like in Agda.

\subsubsection{Typical ambiguity}
A related desideratum is Feferman's ``typical ambiguity''~\citep{feferman-2004}, which allows programmers to leave out universe levels entirely and have the system infer them---either by means of ML-style implicit polymorphism~\citep{harper-pollack-1991,sozeau-tabareau-2014} or by means of a syntactic displacement operation as in McBride's \emph{Crude but Effective Stratification}~\citep{mcbride-2002-crude} (see also Favonia \emph{et al.}~\citep{favonia-angiuli-mullanix-2023}). Admittedly, our attitude on the practical importance of typical ambiguity is one of light skepticism, but our work will again be complementary to any reasonable approach.

\subsubsection{Elaboration of datatype descriptions}
We describe a fuss-free approach for cumulative inductive types in \zcref{sec:fuss-free-datatypes}, which builds upon Dagand's datatype descriptions \citep{dagand:2013}.
In a practical system, one would implement the elaboration algorithm from user-facing inductive definitions to the underlying internal description.
This was one of the core aspects of the experimental language Epigram\ 2, based on the work of \citet{levitation}, and also central to Bob Atkey's Foveran \citep{foveran}.

\subsection{Structure and contributions}

In summary, we make the following contributions:
\begin{itemize}
  \item[\bfseries\S\S~\ref{sec:natural-hierarchies}--\ref{sec:fuss-free}] We present and compare two algebraic presentations of cumulative universes with explicit universe polymorphism: a \emph{natural hierarchy} (\ThyMLTTNat{}) with labelled codes at every level satisfying naturality laws with respect to explicit level coercions, and a \emph{fuss-free hierarchy} (\ThyFussFree{}) with judgemental smallness and universes defined by an introduction rule, elimination rule, and $\beta$-/$\eta$-laws.
  \item[\bfseries\S~\ref{sec:fuss-free-datatypes}] We describe how to handle cumulative inductive types \citep{timany-sozeau-2018} in \ThyFussFree{} by introducing a judgemental notion of datatype description as well as fuss-free universes of datatype descriptions.
  \item[\bfseries\S~\ref{sec:normalisation}] We prove normalisation for \ThyMLTTNat{}, as suggested by \citet{bcde-2022}. Normalisation implies the admissibility of injectivity laws for universe decoding and lifting in \ThyMLTTNat{}, from which we deduce that \ThyMLTTNat{} and \ThyFussFree{} are equivalent up to syntactic translation.
  \item[\bfseries\S\S~\ref{sec:elaboration}--\ref{sec:tutorial-implementation}] We describe a bidirectional type checking algorithm for the fuss-free calculus, which we then implement in Haskell. Our implementation incorporates the elaboration of user-written datatype declarations to descriptions in the style of \citet{dagand:2013}, as well as record types.
\end{itemize}

We note that our fuss-free calculus is already impacting the design of new implementations of type theory; for example, \citet{lynch-2026} uses the fuss-free approach to seamlessly introduce a static/dynamic phase distinction~\citep{sterling-harper:2021} only at small types.
\section{Martin-L\"of type theory and natural universe hierarchies}\label{sec:natural-hierarchies}

We will describe here an ``orthodox'' presentation of a version of Martin-L\"of type theory (\ThyMLTT) that will serve as the basis for two presentations of universes that we will later show to be equivalent up to admissibility.  Although we have in \zcref{sec:introduction} for familiarity's sake used a traditional notation to informally describe the rules of type theory, we shall now switch to a more precise ``logical framework'' presentation in the doctrine of \emph{second-order generalised algebraic theories}~\citep{uemura-2021-thesis,kaposi-xie-2024}.

A second-order generalised algebraic theory is defined by setting down sorts, operations, and equations. Unlike in an ordinary algebraic theory, the sorts are \emph{dependent} and the operations are allowed to bind variables by means of a stratified dependent function space. A sort whose variables are allowed to be bound is called \emph{representable}. A \emph{context} in the sense of type theory is then defined to be a telescope in which every binding has a representable sort. The basic judgemental structure of Martin-L\"of type theory is described, then, by specifying a non-representable sort of types and a representable \emph{dependent} sort of elements:

\begin{rulebox}[title = {\ThyMLTT{} $\gg$ \emph{basic forms of judgment}}]
  \begin{mathparsmall}
    \ebrule{
      \infer0{\SortTp : \SORT}
    }
    \and
    \ebrule{
      \hypo{A:\SortTp}
      \infer1{\SortTm(A):\REPSORT}
    }
  \end{mathparsmall}
\end{rulebox}

We first add the top-level connectives of Martin-L\"of type theory; for brevity's sake, we focus on dependent function types, but other standard type constructives (like dependent pair types and the unit type) can be handled analogously.

\begin{rulebox}[title = {\ThyMLTT{} $\gg$ \emph{dependent function types}}]
  \begin{mathparsmall}
    \ebrule{
      \hypo{A : \SortTp}
      \hypo{B : \SortTm\to\SortTp}
      \infer2{\TpPi\prn{A,B}: \SortTp}
    }
    \and
    \ebrule{
      \hypo{A : \SortTp}
      \hypo{B : \SortTm\prn{A}\to \SortTp}
      \hypo{M : \prn{x : \SortTm\prn{A}}\to \SortTm\prn{B\prn{x}}}
      \infer3{\TmLam\brc{A,B}\prn{M}:\SortTm\prn{\TpPi\prn{A,B}}}
    }
    \and
    \ebrule{
      \hypo{A : \SortTp}
      \hypo{B : \SortTm\prn{A}\to \SortTp}
      \hypo{M : \SortTm\prn{\TpPi\prn{A,B}}}
      \hypo{N : \SortTm\prn{A}}
      \infer4{\TmApp\brc{A,B}\prn{M,N} : \SortTm\prn{B\prn{N}}}
    }
    \and
    \ebrule{
      \hypo{A : \SortTp}
      \hypo{B : \SortTm\prn{A}\to \SortTp}
      \hypo{M : \prn{x : \SortTm\prn{A}}\to \SortTm\prn{B\prn{x}}}
      \hypo{N : \SortTm\prn{A}}
      \infer4{
        \TmApp\brc{A,B}\prn{
          \TmLam\brc{A,B}\prn{M},
          N
        }
        =
        M\prn{N}: \SortTm\prn{B\prn{N}}
      }
    }
    \and
    \ebrule{
      \hypo{A : \SortTp}
      \hypo{B : \SortTm\prn{A}\to \SortTp}
      \hypo{M : \SortTm\prn{\TpPi\prn{A,B}}}
      \infer3{
        M = \TmLam\brc{A,B}\prn{
          \lambda x\mathpunct{.}
          \TmApp\brc{A,B}\prn{M,x}
        } :
        \SortTm\prn{\TpPi\prn{A,B}}
      }
    }
  \end{mathparsmall}
\end{rulebox}

\subsection{A judgemental preorder of universe levels}

We will consider universes that are parameterised in a preorder of levels; we will make the sort of levels \emph{representable} so that level variables may appear in the context.

\begin{rulebox}[title = {\ThyLevels{} $\gg$ \textit{judgemental levels preorder}}]
  \begin{mathparsmall}
    \ebrule{
      \infer0{\SortLvl:\REPSORT}
    }
    \and
    \ebrule{
      \hypo{i,j:\SortLvl}
      \infer1{i\leq j : \SORT}
    }
    \and
    \ebrule{
      \hypo{i,j:\SortLvl}
      \hypo{p,q: i \leq j}
      \infer2{p = q : i \leq j}
    }
    \and
    \ebrule{
      \hypo{i:\SortLvl}
      \infer1{\_:i\leq i}
    }
    \and
    \ebrule{
      \hypo{i,j,k:\SortLvl}
      \hypo{\_:i \leq j}
      \hypo{\_:j\leq k}
      \infer3{\_:i\leq k}
    }
    \and
    \ebrule{
      \infer0{0:\SortLvl}
    }
    \and
    \ebrule{
      \hypo{i:\SortLvl}
      \infer1{i^+ : \SortLvl}
    }
    \and
    \ebrule{
      \hypo{i:\SortLvl}
      \infer1{\_:0 \leq i}
    }
    \and
    \ebrule{
      \hypo{i:\SortLvl}
      \infer1{\_: i \leq i^+}
    }
  \end{mathparsmall}
\end{rulebox}

By virtue of the third rule above establishing the \emph{judgemental proof irrelevance} of the level inequality sort, we have felt free to write the rules in an implicit form with underscores. Formally, there are of course explicit named generators and witnesses for these but because the choice does not matter, we may leave them out as a matter of notation.
We can add additional operations on levels if we like, including least upper bounds, but we will leave it alone.

\subsection{Explicit universe polymorphism}

Finally we add explicit universe polymorphism in the style of \citet{bcde-2022}. For the sake of simplicity, we do not add constraint-indexed products.

\begin{rulebox}[title = {\ThyLevels{} $\gg$ \textit{explicit universe polymorphism}}]
  \begin{mathparsmall} 
    \ebrule{
      \hypo{A : \SortLvl \rightarrow \SortTp}
      \infer1{\TpLvlPi\prn{A} : \SortTp}
    }
    \and
    \ebrule{
      \hypo{A:\SortLvl\to \SortTp}
      \hypo{M:\prn{\alpha:\SortLvl}\to \SortTm\prn{A\prn{\alpha}}}
      \infer2{
        \TmLvlLam\brc{A}\prn{M} : \SortTm\prn{\TpLvlPi\prn{A}}
      }
    }
    \and
    \ebrule{
      \hypo{A:\SortLvl\to\SortTp}
      \hypo{M:\SortTm\prn{\TpLvlPi\prn{A}}}
      \hypo{i:\SortLvl}
      \infer3{
        \TmLvlApp\brc{A}\prn{M,i} : \SortTm\prn{A\prn{i}}
      }
    }
    \and
    \ebrule{
      \hypo{A:\SortLvl\to \SortTp}
      \hypo{M:\prn{\alpha:\SortLvl}\to \SortTm\prn{A\prn{\alpha}}}
      \hypo{i:\SortLvl}
      \infer3{
        \TmLvlApp\brc{A}\prn{\TmLvlLam\brc{A}\prn{M},i}
        =
        M\prn{i}
        :
        \SortTm\prn{A\prn{i}}
      }
    }
    \and
    \ebrule{
      \hypo{A:\SortLvl\to\SortTp}
      \hypo{M:\SortTm\prn{\TpLvlPi\prn{A}}}
      \infer2{
        M = \TmLvlLam\brc{A}\prn{\lambda \alpha\mathpunct{.}\TmLvlApp\brc{A}\prn{M,\alpha}}
      }
    }
  \end{mathparsmall}
\end{rulebox}

\subsection{Natural universe hierarchies}

We now describe the extension of Martin-L\"of type theory by a functorial universe lifting coercion that commutes with all connectives; we refer to this extension as a \emph{natural universe hierarchy} because the commutation with connectives can be seen to be a naturality law from an appropriate perspective. Formally, we define the signature \ThyMLTTNat{} extending \ThyMLTT{} + \ThyLevels{} by a number of new rules.

\begin{rulebox}[title = {\ThyMLTTNat{} $\gg$ \textit{universe types, decoding, and lifting}}]
  \begin{mathparsmall}
    \ebrule{
      \hypo{j:\SortLvl}
      \infer1{\strut\TpUniv_j:\SortTp}
    }
    \and
    \ebrule{
      \hypo{M:\TpUniv_i}
      \infer1{
        \strut \TpEl_i(M): \SortTp
      }
    }
    \and
    \ebrule{
      \hypo{i,j:\SortLvl}
      \hypo{\_: i\leq j}
      \hypo{M:\SortTm\prn{\TpUniv_i}}
      \infer3{\strut\Lift{i}{j}{M} : \SortTm\prn{\TpUniv_j}}
    }
    \and
    \ebrule{
      \hypo{i,j:\SortLvl}
      \hypo{\_: i\leq j}
      \hypo{M:\SortTm\prn{\TpUniv_i}}
      \infer3{\TpEl_j\prn[\big]{\Lift{i}{j}{M}} = \TpEl_i\prn{M} : \SortTp}
    }
    \and
    \ebrule{
      \hypo{i:\SortLvl}
      \hypo{M:\SortTm\prn{\TpUniv_i}}
      \infer2{
        \strut\Lift{i}{i}{M}= M : \SortTm\prn{\TpUniv_i}
      }
    }
    \and
    \ebrule{
      \hypo{i,j,k:\SortLvl}
      \hypo{\_:i\leq j}
      \hypo{\_:j\leq k}
      \hypo{M:\SortTm\prn{\TpUniv_i}}
      \infer4{
        \strut\Lift{j}{k}{\Lift{i}{j}{M}}= \Lift{i}{k}{M} : \SortTm\prn{\TpUniv_k}
      }
    }
  \end{mathparsmall}
\end{rulebox}

Next we must add \emph{codes} inside each universe for the universes themselves as well as other type connectives (including dependent function types).

\begin{rulebox}[title = {\ThyMLTTNat{} $\gg$ \textit{codes for types in universes}}]
\begin{mathparsmall}
  \ebrule{
    \hypo{i,j:\SortLvl}
    \hypo{\_:i^+\leq j}
    \infer2{
      \CodeUniv_i^j : \SortTm\prn{\TpUniv_j}
    }
  }
  \and
  \ebrule{
    \hypo{M:\SortTm\prn{\TpUniv_i}}
    \hypo{N:\SortTm\prn{\TpEl_i\prn{M}} \to \SortTm\prn{\TpUniv_i}}
    \infer2{
      \CodePi_i\prn{M,N} : \SortTm\prn{\TpUniv_i}
    }
  }
  \and
  \ebrule{
    \hypo{M:\SortLvl\to \SortTm\prn{\TpUniv_i}}
    \infer1{
      \CodeLvlPi^i\prn{M}:\SortTm\prn{\TpUniv_i}
    }
  }
  \and
  \ebrule{
    \hypo{i,j:\SortLvl}
    \hypo{\_:i^+\leq j}
    \infer2{
      \TpEl_j\prn[\big]{\CodeUniv_i^j} = \TpUniv_i : \SortTp
    }
  }
  \and
  \ebrule{
    \hypo{M:\SortTm\prn{\TpUniv_i}}
    \hypo{N:\SortTm\prn{\TpEl_i\prn{M}} \to \SortTm\prn{\TpUniv_i}}
    \infer2{
      \TpEl_i\prn{
        \CodePi_i\prn{M,N}
      }=
      \TpPi\prn{
        \TpEl_i\prn{M},
        \lambda x\mathpunct{.}
        \TpEl_i\prn{N\prn{x}}
      }:\SortTp
    }
  }
  \and
  \ebrule{
    \hypo{M:\SortLvl\to\SortTm\prn{\TpUniv_i}}
    \infer1{
      \TpEl_i\prn[\big]{\CodeLvlPi^i\prn{M}}
      =
      \TpLvlPi\prn{\lambda x\mathpunct{.}\TpEl_i\prn{M\prn{x}}}
    }
  }
\end{mathparsmall}
\end{rulebox}

The rules above express that the type encoded by $\CodePi_i\prn{M,N}$ is a $\Pi$-type, and that the type encoded by $\CodeUniv_i^j$ is the universe $\TpUniv_i$. We shall \emph{also} add rules that make the lifting operation $\Lift{i}{h}$ preserve the type codes; this can be seen to correspond to a \emph{naturality} law in a categorical formulation of the universe hierarchy, whence the name.

\begin{rulebox}[title = {\ThyMLTTNat{} $\gg$ \textit{naturality laws for type codes}}]
  \begin{mathparsmall}
    \ebrule{
      \hypo{i,j,k:\SortLvl}
      \hypo{\_:i^+\leq j}
      \hypo{\_:j\leq k}
      \infer3{
        \Lift{j}{k}{\CodeUniv_i^j} = \CodeUniv_i^k : \SortTm\prn{\TpUniv_k}
      }
    }
    \and
    \ebrule{
      \hypo{i,j:\SortLvl}
      \hypo{\_: i\leq j}
      \hypo{M:\SortTm\prn{\TpUniv_i}}
      \hypo{N:\SortTm\prn{\TpEl_i\prn{M}} \to \SortTm\prn{\TpUniv_i}}
      \infer4{
        \Lift{i}{j}\prn{
          \CodePi_i\prn{M,N}
        }
        =
        \CodePi_j\prn[\big] {
          \Lift{i}{j}{M},
          \lambda x\mathpunct{.}
          \Lift{i}{j}{N\prn{x}}
        }
        : \SortTm\prn{\TpUniv_j}
      }
    }
  \end{mathparsmall}
\end{rulebox}

\subsection{A theory with injective decoding and coercion}

We finally consider the extension \ThyMLTTNatInj{} of \ThyMLTTNat{} by an injectivity law for the universe decoding operation.

\begin{rulebox}[title={\ThyMLTTNatInj{} $\gg$ \textit{injectivity of decoding}}]
  \begin{mathparsmall}
    \ebrule{
      \hypo{i:\SortLvl}
      \hypo{M,N:\SortTm\prn{\TpUniv_i}}
      \hypo{
        \TpEl_i\prn{M} = \TpEl_i\prn{N} : \SortTp
      }
      \infer3{
        M = N : \SortTm\prn{\TpUniv_i}
      }
    }
  \end{mathparsmall}
\end{rulebox}

\begin{lemma}
  In \ThyMLTTNatInj, the lifting coercions are injective in the sense that the following rule is \emph{derivable}:
  \begin{mathparpagebreakable}
    \ebrule{
      \hypo{i,j:\SortLvl}
      \hypo{\_:i\leq j}
      \hypo{M,N:\SortTm\prn{\TpUniv_i}}
      \hypo{
        \Lift{i}{j}M = \Lift{i}{j}N : \SortTm\prn{\TpUniv_j}
      }
      \infer4{
        M = N : \SortTm\prn{\TpUniv_i}
      }
    }
  \end{mathparpagebreakable}
\end{lemma}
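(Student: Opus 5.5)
The derivation mirrors the argument already given for \zcref{cor:univ-lift-inj}, except that \emph{injectivity of decoding} is now a primitive rule of \ThyMLTTNatInj{} rather than an admissible one. Because it is a primitive rule, it can be applied under arbitrary hypotheses, so the resulting rule will be derivable and not merely admissible.

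We fix levels $i,j$ with $\_:i\leq j$, codes $M,N:\SortTm\prn{\TpUniv_i}$, and the hypothesis $\Lift{i}{j}M=\Lift{i}{j}N:\SortTm\prn{\TpUniv_j}$.

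First, we apply congruence of the operation $\TpEl_j$, which the generalised algebraic framework provides for every operation. This gives $\TpEl_j\prn{\Lift{i}{j}M}=\TpEl_j\prn{\Lift{i}{j}N}:\SortTp$.

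Second, we rewrite both sides using the lifting-decoding rule of \ThyMLTTNat{}, namely $\TpEl_j\prn{\Lift{i}{j}{M}}=\TpEl_i\prn{M}$, instantiated once at $M$ and once at $N$. Chaining these by symmetry and transitivity of judgemental equality yields $\TpEl_i\prn{M}=\TpEl_i\prn{N}:\SortTp$.

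Finally, we apply the injectivity-of-decoding rule at level $i$ to $M$ and $N$ with this equation as its premise, which concludes $M=N:\SortTm\prn{\TpUniv_i}$.

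No step here is a real obstacle. The only point needing care is that every rule invoked (congruence, the lift-decoding equation, and injectivity) is an actual rule of \ThyMLTTNatInj{}. Since the argument uses no induction on terms and no metatheoretic property of the syntax, it is a genuine derivation and remains valid in every model of the theory.
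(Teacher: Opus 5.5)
Your proof is correct and matches the paper's argument: the paper simply refers back to the proof of Corollary~\ref{cor:univ-lift-inj}, which uses the same three steps (congruence of $\TpEl_j$, the lift-decoding equation at $M$ and at $N$, then injectivity of decoding). You are also right that the rule is derivable because injectivity of decoding is a primitive rule of \ThyMLTTNatInj{}, not merely an admissible one.
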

\begin{proof}
  As we have argued in \zcref{cor:univ-lift-inj}.
\end{proof}

\section{``Fuss-free'' universes via judgemental smallness}\label{sec:fuss-free}

We now introduce a second take on the rules of cumulative universes based on a judgemental notion of \emph{smallness}. We will show in this section that our new theory, dubbed \ThyFussFree, is equivalent to the more complex theory \ThyMLTTNatInj{} of natural universe hierarchies with injective decoding. As we have alluded in \zcref{sec:judgemental-smallness}, we shall reformulate the universe hierarchy by replacing each universe $\TpUniv_i$ with its image in the sort of all types. We define the theory \ThyFussFree{} extending \ThyMLTT{}+\ThyLevels{} by a new proof-irrelevant form of judgement governing the size of types:

\begin{rulebox}[title={\ThyFussFree{} $\gg$ \textit{a judgemental notion of smallness}}]
  \begin{mathparsmall}
    \ebrule{
      \hypo{A:\SortTp}
      \infer1{\SortSmall_i\prn{A}: \SORT}
    }
    \and
    \ebrule{
      \hypo{A:\SortTp}
      \hypo{p,q:\SortSmall_i\prn{A}}
      \infer2{
        p = q : \SortSmall_i\prn{A}
      }
    }
  \end{mathparsmall}
\end{rulebox}

Now the universes can be defined as a \emph{negative} type connective with formation, introduction, elimination, $\beta$-, and extensionality rules:

\begin{rulebox}[title={\ThyFussFree{} $\gg$ \textit{rules for universes}}]
  \begin{mathparsmall}
    \ebrule{
      \hypo{i:\SortLvl}
      \infer1{\TpUniv_i:\SortTp}
    }
    \and
    \ebrule{
      \hypo{i:\SortLvl}
      \hypo{A:\SortTp}
      \hypo{\_:\SortSmall_i\prn{A}}
      \infer3{
        \Shrink{i} A : \SortTm\prn{\TpUniv_i}
      }
    }
    \and
    \ebrule{
      \hypo{i:\SortLvl}
      \hypo{A:\SortTm\prn{\TpUniv_i}}
      \infer2{\Enlarge{i}{A}:\SortTp}
    }
    \and
    \ebrule{
      \hypo{i:\SortLvl}
      \hypo{A:\SortTm\prn{\TpUniv_i}}
      \infer2{\_: \SortSmall_i\prn[\big]{\Enlarge{i}{A}}}
    }
    \and
    \ebrule{
      \hypo{i:\SortLvl}
      \hypo{A:\SortTp}
      \hypo{\_:\SortSmall_i\prn{A}}
      \infer3{
        \Enlarge{i}\Shrink{i}A = A: \SortTp
      }
    }
    \and
    \ebrule{
      \hypo{A,B:\SortTm\prn{\TpUniv_i}}
      \hypo{\Enlarge{i}{A} = \Enlarge{i}{B}:\SortTp}
      \infer2{A = B: \SortTm\prn{\TpUniv_i}}
    }
  \end{mathparsmall}
\end{rulebox}

To close each universe under type theoretic connectives, we simply add the corresponding rules to the smallness judgement. We emphasise that the entire system is parametric in these rules, in a style similar to the Pure Type Systems of yesteryear.

\begin{rulebox}[title={\ThyFussFree{} $\gg$ \textit{closure properties of smallness}}]
  \begin{mathparsmall}
    \ebrule{
      \hypo{
        \begin{array}{l}
          i:\SortLvl\qquad
          A:\SortTp\qquad
          B:\SortTm\prn{A}\to\SortTp\\
          \_ : \SortSmall_i\prn{A}\qquad
          \_ : \prn{x:\SortTm\prn{A}}\to \SortSmall_i\prn{B\prn{x}}
        \end{array}
      }
      \infer1{
        \_:\SortSmall_i\prn{
          \Pi\prn{A,B}
        }
      }
    }
    \and
    \ebrule{
      \hypo{i,j:\SortLvl}
      \hypo{\_:i^+\leq j}
      \infer2{
        \_:\SortSmall_j\prn{\TpUniv_i}
      }
    }
    \and
    \ebrule{
      \hypo{i:\SortLvl}
      \hypo{A : \SortLvl \rightarrow \SortTm}
      \hypo{\_ : \prn{\alpha:\SortLvl}\to \SortSmall_i\prn{A\prn{\alpha}}}
      \infer3{
        \_:\SortSmall_i\prn{\TpLvlPi\prn{A}}
      }
    }
  \end{mathparsmall}
\end{rulebox}

\NewDocumentCommand\MOD{m}{\textbf{Mod}\prn{#1}}

\subsection{Semantic equivalence of \texorpdfstring{\ThyMLTTNatInj}{NatHierInj} and \texorpdfstring{\ThyFussFree}{FussFree}}\label{sec:semantic-equivalence}

It is not difficult to give a bi-interpretation between $\ThyFussFree$ and $\ThyMLTTNatInj$. We can go further, however, and show that their syntactic categories \emph{qua} free categories with representable maps~\citep{uemura-2021-thesis} are equivalent. This implies that the two theories are \emph{semantically} equivalent in the sense that the syntactic translation induces a bi-equivalence between (2,1)-categories of models.

Letting $\mathbb{S}$ and $\mathbb{T}$ be the free categories with representable maps (CwRs) generated by $\ThyFussFree$ and $\ThyMLTTNatInj$ respectively, we shall construct an equivalence of CwRs from $\mathbb{S}$ to $\mathbb{T}$. To define a morphism of CwRs from $F\colon \mathbb{S}\to\mathbb{T}$ is, by the universal property, equivalent to constructing an algebra $\mathsf{M}_F$ for all the operations and equations of $\ThyFussFree$ in terms of those of $\ThyMLTTNatInj$. We allow the algebra $\mathsf{M}_F$ to inherit all constructs that it can, leaving the following for us to define:

\iblock{
  \mrow{
    \mathsf{M}_F.\SortSmall_i\prn{A} \coloneq
    \sum_{\prn{M:\SortTm\prn{\TpUniv_i}}}
    \TpEl_i\prn{M} = A
  }
  \mrow{
    \mathsf{M}_F.%
    \Shrink{i}\prn{
      A, \prn{M,\_}
    }
    \coloneq M
  }
  \mrow{
    \mathsf{M}_F.\Enlarge{i}\prn{M} \coloneq
    \TpEl_i\prn{M}
  }
}

The needful equational rules are all satisfied, as are the rules for smallness; the critical case is the proof-irrelevance law for smallness in \ThyFussFree, which follows immediately from the injectivity of decoding in \ThyMLTTNatInj.
Conversely, to construct a morphism of CwRs $G\colon \mathbb{T}\to\mathbb{S}$ is equivalent by the universal property of $\mathbb{T}$ to constructing an algebra $\mathsf{M}_G$ for all the operations and equations of \ThyMLTTNatInj{} in terms of those of \ThyFussFree.
\begin{mathpar}
  \mathsf{M}_G.\TpEl_i\prn{M} \coloneq \Enlarge{i}M
  \and
  \mathsf{M}_G.\Lift{i}{j}{M} \coloneq \Shrink{j}\Enlarge{i}M
  \and
  \mathsf{M}_G.\CodeUniv_i^j \coloneq \Shrink{j}\TpUniv_i
  \and
  \mathsf{M}_G.\CodePi_i\prn{M,N} \coloneq \Shrink{i}\prn{
    \TpPi\prn{
      \Enlarge{i} M,
      \lambda x\mathpunct{.}
      \Enlarge{i}N\prn{x}
    }
  }
  \and
  \mathsf{M}_G.\CodeLvlPi^i\prn{M} \coloneq \Shrink{i}\prn{
    \TpLvlPi\prn{
      \lambda \alpha\mathpunct{.}
      \Enlarge{i} M\prn{\alpha}
    }
  }
\end{mathpar}

\begin{theorem}
  The translations above induce an equivalence of CwRs between the syntactic categories of \ThyMLTTNatInj{} and \ThyFussFree, and so the (2,1)-categories of models of these type theories are (2,1)-equivalent.
\end{theorem}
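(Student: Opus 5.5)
We first check that $\mathsf{M}_F$ and $\mathsf{M}_G$ really are algebras, so that they induce CwR morphisms $F\colon\mathbb{S}\to\mathbb{T}$ and $G\colon\mathbb{T}\to\mathbb{S}$ (identity on the shared \ThyMLTT{}+\ThyLevels{} fragment). For $\mathsf{M}_F$ this is as already indicated. Proof irrelevance of $\SortSmall_i$ reduces to the injectivity rule of \ThyMLTTNatInj{}: two witnesses $(M,\_),(N,\_)$ have $\TpEl_i\prn{M}=A=\TpEl_i\prn{N}$, hence $M=N$, and the equality component is trivial because equality of types is a proof-irrelevant (extensional) sort. The smallness closure rules are witnessed by the codes: $\CodePi_i$, $\CodeUniv_i^j$ and $\CodeLvlPi^i$, via their decoding equations. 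The $\beta$-rule is the stored equation $\TpEl_i\prn{M}=A$, and extensionality is again injectivity.

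For $\mathsf{M}_G$, the decoding equations all follow from the $\beta$-rule of \ThyFussFree{}. Functoriality and the naturality laws follow the pattern of the introduction. By the extensionality rule it suffices to compare $\Enlarge{}$ of both sides, which reduce by $\beta$ to the same type. Smallness side conditions are discharged by the closure rules together with the fact that $\Enlarge{i}M$ is small. Here we will need $\SortSmall_i\prn{A}\to\SortSmall_j\prn{A}$ for $i\leq j$ in order to form $\Shrink{j}\Enlarge{i}M$. This is not among the listed rules, so we either check that it is intended (\textsc{small-mono} from the introduction) or derive it; we will take \textsc{small-mono} as part of \ThyFussFree{}.

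Next we show that $G\circ F\cong\mathrm{id}_{\mathbb{S}}$ and $F\circ G\cong\mathrm{id}_{\mathbb{T}}$. By the universal property of free CwRs, it suffices to check this on generators, up to the equations of the respective theories. Consider $F\circ G$ on generators of $\mathbb{T}$. First, $\TpEl_i\prn{M}\mapsto\TpEl_i\prn{M}$ on the nose. Second, $\Lift{i}{j}M\mapsto$ the first component of the smallness witness of $\TpEl_i\prn{M}$ at level $j$, which is $\Lift{i}{j}M$ if we interpret \textsc{small-mono} by lifting. The codes $\CodePi_i\prn{M,N}$ and the others return to themselves in the same way. In the other direction, for $G\circ F$ on $\mathbb{S}$, the sort $\SortSmall_i\prn{A}$ goes to $\sum_{M:\TpUniv_i}\Enlarge{i}M=A$. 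This is isomorphic to $\SortSmall_i\prn{A}$ via $p\mapsto(\Shrink{i}A,\beta)$ and $(M,e)\mapsto$ the canonical witness transported along $e$. The two maps are mutually inverse by proof irrelevance on one side and $\eta$/extensionality ($\Shrink{i}\Enlarge{i}M=M$, obtained by extensionality and $\beta$) on the other. Then $\Shrink{i}$ and $\Enlarge{i}$ are sent to themselves up to this isomorphism.

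The main obstacle is that $\SortSmall_i$ is a non-representable sort. Equivalence of CwRs therefore requires the image $\sum_M\ldots$ to be a legitimate sort in $\mathbb{S}$ and the comparison to be an isomorphism \emph{of sorts}, natural in $A$. We plan to handle this by noting that both sorts are subterminal over $\SortTp$: proof irrelevant on both sides, the $\mathbb{T}$-side by injectivity. They are then isomorphic iff each implies the other, which is exactly what the two translations witness. The equivalence of (2,1)-categories of models then follows from Uemura's result that models are CwR morphisms out of the syntactic CwR, so that equivalent syntactic CwRs have biequivalent model categories.
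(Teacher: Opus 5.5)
Your proposal is correct and follows essentially the same route as the paper. Both obtain $F$ and $G$ from the universal properties of $\mathbb{S}$ and $\mathbb{T}$, observe that $F\circ G$ is the identity on generators, and isolate the isomorphism $\sum_{M}\prn{\Enlarge{i}M = A}\cong\SortSmall_i\prn{A}$ as the only non-trivial round trip; you spell out (via $\beta$, extensionality and proof irrelevance) what the paper merely asserts. You are also right that \textsc{small-mono} is needed to type $\mathsf{M}_G.\Lift{i}{j}{M}\coloneq\Shrink{j}\Enlarge{i}M$ yet is missing from the \ThyFussFree{} rule box, and taking it from the introduction (as the paper tacitly does, cf.\ \textsc{coe-lift}) is the right fix.
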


\begin{proof}
  To construct the unit and counit of the equivalence, we may again use the universal properties of $\mathbb{S}$ and $\mathbb{T}$.
  The only non-trivial case concerns the round-trip translation of smallness:
  \begin{align*}
    \mathsf{M}_{G\circ F}.\SortSmall_i\prn{A} &=
    \textstyle\sum_{\prn{M:\mathsf{M}_G.\SortTm\prn{\mathsf{M}_G.\TpUniv_i}}}
    \mathsf{M}_G.\TpEl_i\prn{M} = A
    \\
    &=
    \textstyle\sum_{\prn{M:\SortTm\prn{\TpUniv_i}}}
    \Enlarge{i}M = A
    \\
    &\cong
    \SortSmall_i\prn{A}
  \end{align*}

  But for the above, $G\circ F$ would be the identity functor; instead, it is a natural isomorphism.
\end{proof}

\section{Extension: fuss-free inductive datatypes}\label{sec:fuss-free-datatypes}

We now show that the fuss-free presentation admits a particularly simple account of inductive datatypes, including large ones. Our starting point is the PhD thesis of Dagand~\cite{dagand:2013}, who studied datatype descriptions and their elaboration.

Dagand considered a non-cumulative hierarchy of universes of datatype descriptions; we will show in this section how the fuss-free methodology can be used to provide an easy to use cumulative version of Dagand's hierarchy without sacrificing the algebraic character of the system.

\subsection{Judgemental datatype descriptions}

Rather than first introducing stratified types of datatype descriptions at each universe level, we start with a purely judgemental notion of description.

\begin{rulebox}[title={\ThyFussFreeData{} $\gg$ \textit{judgemental datatype descriptions}}]
  \begin{mathparsmall}
    \ebrule{
      \infer0{\SortDesc: \SORT}
    }
    \and
    \ebrule{
      \infer0{
        \DescVar, \DescUnit:\SortDesc
      }
    }
    \and
    \ebrule{
      \hypo{D:\SortDesc}
      \hypo{E:\SortDesc}
      \infer2{
        D \DescTensor E : \SortDesc
      }
    }
    \and
    \ebrule{
      \hypo{A:\SortTp}
      \hypo{D:\SortTm\prn{A}\to \SortDesc}
      \infer2{
        \bigsqcup\DescSigil\prn{A,D},
        \bigsqcap\DescSigil\prn{A,D}:\SortDesc
      }
    }
  \end{mathparsmall}
\end{rulebox}

Next we introduce the \emph{extension} $\bbrk{D}:\SortTp\to\SortTp$ of a description $D:\SortDesc$.

\begin{rulebox}[title={\ThyFussFreeData{} $\gg$ \textit{the extension operator for descriptions}}]
  \begin{mathparsmall}
    \ebrule{
      \hypo{D:\SortDesc}
      \hypo{X:\SortTp}
      \infer2{\bbrk{D}X:\SortTp}
    }
    \and
    \ebrule{
      \hypo{X:\SortTp}
      \infer1{\bbrk{\DescVar}X = X : \SortTp}
    }
    \and
    \ebrule{
      \hypo{X:\SortTp}
      \infer1{\bbrk{\DescUnit}X = \TpUnit : \SortTp}
    }
    \and
    \ebrule{
      \hypo{D,E:\SortDesc}
      \hypo{X:\SortTp}
      \infer2{
        \bbrk{D \DescTensor E}X = \bbrk{D}X \times \bbrk{E}X:\SortTp
      }
    }
    \and
    \ebrule{
      \hypo{A:\SortTp}
      \hypo{D:\SortTm\prn{A}\to \SortDesc}
      \hypo{X:\SortTp}
      \infer3{
        \bbrk[\big]{
          \bigsqcup\DescSigil\prn{A,D}
        }X =
        \Sigma\prn[\big]{A, \lambda x\mathpunct{.} \bbrk{D\prn{x}}X}
      }
    }
    \and
    \ebrule{
      \hypo{A:\SortTp}
      \hypo{D:\SortTm\prn{A}\to \SortDesc}
      \hypo{X:\SortTp}
      \infer3{
        \bbrk[\big]{
          \bigsqcap\DescSigil\prn{A,D}
        }X =
        \Pi\prn[\big]{A, \lambda x\mathpunct{.} \bbrk{D\prn{x}}X}
      }
    }
  \end{mathparsmall}
\end{rulebox}

Following Dagand~\cite{dagand:2013}, we will also have a functorial action for each $\bbrk{D}$ as well as a family lifting operation $\square$; we omit their equational laws for lack of space, but they can be found in \emph{op.\ cit.}

\begin{rulebox}[title={\ThyFussFreeData{} $\gg$ \textit{functorial action of extension; family lifting}}]
  \begin{mathparsmall}
    \ebrule{
      \hypo{D:\SortDesc}
      \hypo{X,Y:\SortTp}
      \hypo{f : \SortTm\prn{X}\to \SortTm\prn{Y}}
      \hypo{M : \SortTm\prn[\big]{\bbrk{D}X}}
      \infer4{
        \fmap\brc{D,X,Y}\prn{f,M} : \SortTm\prn[\big]{\bbrk{D}Y}
      }
    }
    \and
    \ebrule{
      \hypo{D:\SortDesc}
      \hypo{X:\SortTp}
      \hypo{C:\SortTm\prn{X}\to\SortTp}
      \hypo{M:\SortTm\prn[\big]{\bbrk{D}X}}
      \infer4{\square\brc{D,X}\prn{C, M}:\SortTp}
    }
    \\
    \textit{(equational laws omitted for space)}
  \end{mathparsmall}
\end{rulebox}

With the above in hand, we can specify the rules for the datatype constructor associated to a datatype description.

\begin{rulebox}[title={\ThyFussFreeData{} $\gg$ \textit{the datatype constructor}}]
  \begin{mathparsmall}
    \ebrule{
      \hypo{D:\SortDesc}
      \infer1{
        \TpData\prn{D} : \SortTp
      }
    }
    \and
    \ebrule{
      \hypo{D:\SortDesc}
      \hypo{M:\SortTm\prn[\big]{\bbrk{D}\prn{\TpData\prn{D}}}}
      \infer2{
        \TmMake\brc{D}\prn{M} : \SortTm\prn{\TpData\prn{D}}
      }
    }
    \and
    \ebrule{
      \hypo{
        \begin{array}{l}
          D:\SortDesc \qquad
          C:\SortTm\prn{\TpData\prn{D}}\to \SortTp
          \\
          M:
            \prn[\big]{x:\SortTm\prn[\big]{\bbrk{D}\prn{\TpData\prn{D}}}}
            \to
            \SortTm\prn[\big]{\square\brc{D,\TpData\prn{D}}\prn{C,x}}
            \to \SortTm\prn{C\prn{\TmMake\brc{D}\prn{x}}}
          \\
          N : \SortTm\prn[\big]{\TpData\prn{D}}
        \end{array}
      }
      \infer1{
        \TmElim\brc{D}\prn{C,M,N} : \SortTm\prn{
          C\prn{N}
        }
      }
    }
    \\
    \textit{(equational laws omitted for space)}
  \end{mathparsmall}
\end{rulebox}

\begin{example}[Lists]
    We can define a description for the type of lists as follows.
    \[ 
    \TpList\DescSigil(A) \coloneq
    \bigsqcup\nolimits\DescSigil
    \begin{pmatrix*}[l]
     \brc{`\mathtt{nil}, `\mathtt{cons}},
      \\
      \begin{cases}
        `\mathtt{nil} &\mapsto \DescUnit\\
        `\mathtt{cons} &\mapsto \bigsqcup\DescSigil\prn{A, \lambda X\mathpunct{.} \DescVar \DescTensor \DescUnit}
      \end{cases}
    \end{pmatrix*}
    \] 

    Then, by unfolding the definition of its extension (\emph{i.e.} interpretation) $\bbrk{\TpList\DescSigil(A)}$, we recover the well-known signature functor $X \mapsto 1 + A \times X$.
\end{example}

\subsection{The fuss-free approach to datatype smallness}

When is a datatype $\TpData\prn{D}$ small? We could easily define a separate smallness judgement for \emph{descriptions}, but it so happens that we may soundly express the smallness of a datatype description in terms of the smallness of its \emph{extension}. In particular, we introduce the proof-irrelevant sort $\SortSmallD^i\prn{D}$ to \ThyFussFreeData{} as a definitional extension:
\[
  \SortSmallD^i\prn{D} \triangleq
  \prn{x : \SortTm\prn{\TpUniv_i}}\to
  \SortSmall_i\prn[\big]{
    \bbrk{D}\prn{\TpEl_i\prn{x}}
  }
\]

With this in hand, we may introduce a rule to make datatypes small when appropriate:

\begin{rulebox}[title={\ThyFussFreeData{} $\gg$ \textit{smallness for datatypes}}]
  \begin{mathparsmall}
    \ebrule{
      \hypo{i:\SortLvl}
      \hypo{D:\SortDesc}
      \hypo{\_ : \SortSmallD^i\prn{D}}
      \infer3{
        \_:\SortSmall_i\prn{\TpData\prn{D}}
      }
    }
  \end{mathparsmall}
\end{rulebox}

\begin{example}[Small Lists]
    For the datatype description of lists, the extension $\bbrk{\TpList\DescSigil\prn{A}}$ instantiated at a decoded term $\TpEl_i\prn{x}$ is the type $\TpUnit + A \times \TpEl_i\prn{x}$. This type satisfies the smallness predicate only if $A$ is small itself, such that we recover the desired rule:
  \begin{mathparpagebreakable}
    \ebrule{
      \hypo{
        \_:\SortSmall_i\prn{A}
      }
      \infer1{
        \_:\SortSmall_i\prn{\TpList\prn{A}}
      }
    }
  \end{mathparpagebreakable}
  
  For example, if we have $A:\TpUniv_i$  then the type of lists $\TpList\prn{\Enlarge{i}{A}}$ is itself small at size $i$. Although we could define a polymorphic code $\mathsf{list}\colon\TpLvlPi\prn{\lambda \alpha\mathpunct{.}\TpUniv_\alpha\to\TpUniv_\alpha}$ satisfying appropriate coherence equations, in some sense the point of the fuss-free methodology is that we do not need to do this: many constructs that require parameterising in universe levels in other systems can be expressed directly as types and then shrunk down to an appropriate universe at will.
\end{example}

\subsection{Universes of datatype descriptions}

So far, we have treated datatype descriptions only as a tool to unify the specification of inductive datatypes. Their original motivation, however, is to support \emph{metaprogramming} of two forms:
\begin{enumerate}
  \item \textbf{Generative:} allowing datatype descriptions to be computed at ``runtime''.
  \item \textbf{Introspective:} defining generic programs on arbitrary datatypes by recursion on the description (\emph{e.g.}\ pretty-printing).
\end{enumerate}

We will see how to support both in a staged manner. For generative metaprogramming, we need only a predicative hierarchy of \emph{open} or \emph{Russell-style} universes of datatype descriptions. We summarise the rules for these below:

\begin{rulebox}[title={\ThyFussFreeData{} $\gg$ \textit{universe of descriptions \`a la Russell}}]
  \begin{mathparsmall}
    \ebrule{
      \hypo{i:\SortLvl}
      \infer1{
        \TpDesc_i:\SortTp
      }
    }
    \and
    \ebrule{
      \hypo{i,j:\SortLvl}
      \hypo{\_:i^+\leq j}
      \infer2{
        \_ : \SortSmall_j\prn{\TpDesc_i}
      }
    }
    \and
    \ebrule{
      \hypo{i:\SortLvl}
      \hypo{D:\SortDesc}
      \hypo{\_:\SortSmallD^i\prn{D}}
      \infer3{
        \ShrinkD{i}D : \SortTm\prn{\TpDesc_i}
      }
    }
    \and
    \ebrule{
      \hypo{i:\SortLvl}
      \hypo{M:\SortTm\prn{\TpDesc_i}}
      \infer2{
        \EnlargeD{i}M : \SortDesc
      }
    }
    \and
    \ebrule{
      \hypo{i:\SortLvl}
      \hypo{M:\SortTm\prn{\TpDesc_i}}
      \infer2{
        \_:\SortSmallD^i\prn{\EnlargeD{i}M}
      }
    }
    \and
    \ebrule{
      \hypo{i:\SortLvl}
      \hypo{D:\SortDesc}
      \hypo{\_ : \SortSmallD^i\prn{D}}
      \infer3{
        \EnlargeD{i}\ShrinkD{i}D = D : \SortDesc
      }
    }
    \and
    \ebrule{
      \hypo{M, N : \SortTm\prn{\TpDesc_i}}
      \hypo{\EnlargeD{i}M = \EnlargeD{i}N : \SortDesc}
      \infer2{M = N : \SortTm\prn{\TpDesc_i}}
    }
  \end{mathparsmall}
\end{rulebox}

\paragraph{Bootstrapping closed universes of descriptions}

For introspective metaprogramming, where we may define things by recursion on a description's structure, we would need \emph{closed} or \emph{Tarski-style} universes of datatype descriptions. Taking a leaf from the \emph{gentle art of levitation} of Chapman~\emph{et al.}~\cite{levitation}, we note that these are already definable! 
In particular, we may define descriptions $\mathsf{DescD}_i:\SortDesc$ at each level $i:\SortLvl$ that encode the structure of small datatype descriptions:

\[ 
    \mathsf{DescD}_i \coloneq 
    \bigsqcup\nolimits\DescSigil
    \begin{pmatrix*}[l]
     \brc{`\mathtt{var}, `\mathtt{unit}, `\mathtt{tensor}, `\mathtt{prod}, `\mathtt{coprod}},
      \\
      \begin{cases}
        `\mathtt{var},`\mathtt{unit} &\mapsto \DescUnit\\
        `\mathtt{tensor} &\mapsto \DescVar\DescTensor\DescVar\\
        `\mathtt{prod},`\mathtt{coprod} &\mapsto \bigsqcup\DescSigil\prn[\big]{\TpUniv_i, \lambda X\mathpunct{.} \bigsqcap\DescSigil\prn{\Enlarge{i}{X},\lambda \_\mathpunct{.}\DescVar}}
      \end{cases}
    \end{pmatrix*}
\] 

Then we may define the \emph{universe} of datatype descriptions at level $i$ to be the datatype $\TpData\prn{\mathsf{DescD}_i}$. We conclude with a few quick observations:
\begin{enumerate}
\item The description $\mathsf{DescD}_i$ is easily seen to be $i^+$-small.
\item Hence the type $\TpData\prn{\mathsf{DescD}_i}$ is also $i^+$-small.%
\item The eliminator for $\TpData\prn{\mathsf{DescD}_i}$ allows us to define a \emph{decoding} function $\ceils{-}_i$ from $\TpData\prn{\mathsf{DescD}_i}$ to $\TpDesc_i$ that turns encodings of datatype descriptions into actual datatype descriptions.
\item Hence for any $d:\TpData\prn{\mathsf{DescD}_i}$ we have an $i$-small description $\EnlargeD{i}\ceils{d}_i : \SortDesc$ and hence an $i$-small datatype $\TpData\prn[\big]{\EnlargeD{i}\ceils{d}_i}$.
\end{enumerate}

In this way, we treat both generative and introspective metaprogramming in the fuss-free presentation of datatype descriptions and their universes.

\section{Metatheory: injectivity of decoding and coercion in \texorpdfstring{\ThyMLTTNat}{NatHierInj}}\label{sec:normalisation}

In this section, we will prove that in the \emph{initial} or \emph{syntactic} model of \ThyMLTTNat, the universe decoding operation $A:\TpUniv_i\vdash \TpEl_i\prn{A}:\SortTp$ is injective; this does not mean that the injectivity law from \ThyMLTTNatInj{} is derivable in \ThyMLTTNat{} but rather that it is \emph{admissible}. Admissible rules are special properties of syntax that need not be preserved under homomorphisms of models.

Unlike simpler metatheorems (\emph{e.g.}\ admissibility of substitution), normalisation for full Martin-L\"of type theory probably cannot be proved directly by induction on typing derivations. Instead, a variation on Tait's method of computability has proved the most scalable for proving normalisation results.
In the context of second-order generalised algebraic theories, it is particularly convenient to make use of Bocquet's \emph{relative induction principle}~\cite{BocquetPhD} which re-casts Tait's method in semantic terms for an \emph{arbitrary} SOGAT. Using this induction principle, normalisation for \ThyMLTTNat{} can be proved by glueing along the global sections functor of its initial model \emph{relativised} in a certain way into the category of presheaves on its category of renamings.

\subsection{Higher-order and first-order models}

Every SOGAT can be read as the signature of a dependent record type involving dependent function types in its fields. A \emph{higher-order model} of such a theory is simply defined to be an element of this dependent record type. This notion of higher-order model can, of course, be applied in any category where dependent type theory is interpreted, as we have done in \zcref{sec:semantic-equivalence}.

Higher-order models are oblivious to the representability structure of the SOGAT, and therefore do not treat \emph{contexts}. An ordinary model of dependent type theory (\emph{e.g.}\ a CwF) starts with a category of contexts $\mathbb{C}$, and then the rest of the type theory is interpreted in presheaves on $\mathbb{C}$ in such a way that contexts become representable. Models that incorporate contexts are referred to by Bocquet~\cite{BocquetPhD} as \emph{first-order models}:
\begin{quote}
  A \emph{first-order model} of $\mathbb{T}$ is a small category $\mathbb{C}_{\mathcal{M}}$ with a chosen terminal object together with a higher-order model $\mathcal{M}$ in $\Psh\prn{\mathbb{C}_{\mathcal{M}}}$ such that every representable family of sorts $x:B\vdash E\prn{x}:\REPSORT$ is interpreted by a representable family in $\Psh\prn{\mathbb{C}_{\mathcal{M}}}/\bbrk{B}_{\mathcal{M}}$.
\end{quote}

As described above, the notion of first-order model of $\mathbb{T}$ is essentially algebraic and therefore can be interpreted in any category with finite limits, as a model of a finite limit theory that we might call $\mathbb{T}^{\mathsf{fo}}$; this gives a notion of \emph{internal} first-order model, whose category of contexts is then an internal category. The notion of internal first-order model plays an important role in Bocquet's \emph{relative induction principle}, which we shall use to prove the normalisation theorem for $\ThyMLTTNat$.

\subsection{Abstract renamings and the relative induction principle}

\NewDocumentCommand\ThyCwF{}{\mathbb{T}_0}
\NewDocumentCommand\REN{}{\mathsf{Ren}}

Semantic normalisation theorems always rely on distinguishing a class of substitutions called \emph{renamings} under which normal forms shall be stable from the outset, as explained by Fiore~\cite{fiore:2002}. For simple type theory, it is very easy to write down explicitly what the renamings are: if a context is a finite set labelled by types, then the renamings are context morphisms (simultaneous substitutions) that can be formed without using only variables. In the presence of dependent sorts, it is comparatively more difficult to explicitly write down a reasonable notion of renaming; there is, however, a universal property that guides one to the correct definition:

\begin{definition}[Level-polymorphic renaming algebra]\label{def:renaming-algebra}
  Let $\ThyCwF$ be the SOGAT consisting of a representable sort of levels $\SortLvl : \REPSORT$ and a family of representable sorts $A:\SortTp\vdash \SortTm\prn{A}:\REPSORT$,\footnote{In other words, the SOGAT whose first-order models are level-indexed CwFs.} and let $\mathbb{T}$ be a SOGAT extending $\mathbb{T}_0$. A \emph{level-polymorphic renaming algebra} over a first-order model $\ModS$ of $\mathbb{T}$ is given by a first-order model $\mathcal{R}$ of $\ThyCwF$ equipped with a homomorphism $\rho\colon\mathcal{R}\to \ModS_{\vert\mathbb{T}_0}$ of $\ThyCwF$-models whose component at $\SortTp$ is an isomorphism $\rho_{\SortTp} \colon \mathcal{R}.\SortTp \cong \rho^*\ModS.\SortTp \in \Psh\prn{\mathbb{C}_{\mathcal{R}}}$. We shall abuse notation by leaving this isomorphism implicit.
\end{definition}

\subsubsection{The initial level-polymorphic renaming algebra}
In the context of Definition~\ref{def:renaming-algebra}, we shall write $\rho\colon \REN_{\ModS}\to \ModS_{\vert \ThyCwF}$  for the \emph{initial} level-polymorphic renaming algebra over the model $\ModS$. For brevity, we shall write $\mathbb{R}_{\ModS} \coloneq \mathbb{C}_{\REN_{\ModS}}$ for the category of contexts of the initial level-polymorphic renaming algebra; then $\rho\colon \mathbb{R}_{\ModS}\to\mathbb{C}_{\ModS}$ is the ``inclusion'' of contexts and renamings into contexts and substitutions. For each $A:\REN_{\ModS}.\SortTp$, we shall write
\[
  \mathsf{Var}_{\ModS}\prn{A} \coloneq
  \REN_{\ModS}.\SortTm\prn{A}
  \qquad
  \text{and}
  \qquad
  \LvlVar_{\ModS} \coloneq
  \REN_{\ModS}.\SortLvl
\]
for the presheaves of terms and levels in the level-polymorphic renaming algebra, which we shall refer to evocatively as ``variables'' and ``level variables''.

\subsubsection{The variable parameterisation of the initial model}

We shall now make a copy of the first-order model $\ModS$ inside of $\Psh\prn{\mathbb{R}_{\ModS}}$ in which the ambient variables from $\Var_{\ModS}\prn{A}$ and $\LvlVar_{\ModS}$ are incorporated as \emph{closed} terms. We call this the \emph{variable parameterisation} of $\ModS$, which we shall write $\VarParam{\ModS}$. To define $\VarParam{\ModS}$ as an internal first-order $\mathbb{T}$-model, we construct a finitely continuous functor $\VarParam{\ModS}\colon \mathbb{T}^{\mathsf{fo}}\to\Psh\prn{\mathbb{R}_{\ModS}}$ whose component
$\VarParam{\ModS}\prn{X}_{\Psi}$ for $X\in\mathbb{T}^{\mathsf{fo}}$ and $\Psi\in\mathbb{R}_{\ModS}$ is defined to be the corresponding component $\ModS\brk{\psi\colon \mathbf{1}\to\rho\prn{\Psi}}\prn{X}$ of the free extension of $\ModS$ by a formal parameter $\psi\colon\mathbf{1}\to\rho\prn{\Psi}$.

Nearly by definition, we have an isomorphism $\VarParam{\ModS}.\SortTp\prn{\mathbf{1}} \cong \REN_{\ModS}.\SortTp$ in $\Psh\prn{\mathbb{R}_{\ModS}}$ which we shall treat implicitly in our notation.
The variable parameterisation induces, internally, the following variable constructors:
\begin{mathparpagebreakable}
  \ebrule{
    \hypo{A:\REN_{\ModS}.\SortTp}
    \hypo{x:\mathsf{Var}_{\ModS}\prn{A}}
    \infer2{\mathsf{var}\prn{A,x} : \VarParam{\ModS}.\SortTm\prn{\mathbf{1},A}}
  }
  \and
  \ebrule{
    \hypo{\alpha:\LvlVar_{\ModS}}
    \infer1{\lvlVar\prn{\alpha} : \VarParam{\ModS}.\SortLvl\prn{\mathbf{1}}}
  }
\end{mathparpagebreakable}

The following lemma expresses the precise sense in which $\VarParam{\ModS}$ is ``parameterised'' in variables: binders are equivalently represented as open terms and as families of closed terms.

\begin{lemma}[Concretion]\label{lemma:substitution-is-iso}
  The following substitution functions induced by $\var$ and $\lvlVar$ are all isomorphisms:
  \begin{align*}
    \VarParam{\ModS}.\SortTp\prn{\mathbf{1}.A}
    &\xrightarrow{\mathsf{subst}_{\SortTm,\SortTp}} \prn[\big]{\Var_\ModS\prn{A}\to\VarParam{\ModS}.\SortTp\prn{\mathbf{1}}}
    \\
    \VarParam{\ModS}.\SortTp\prn{\mathbf{1}.\SortLvl}
    &\xrightarrow{\mathsf{subst}_{\SortLvl,\SortTp}} \prn[\big]{\LvlVar_\ModS\to\VarParam{\ModS}.\SortTp\prn{\mathbf{1}}}
    \\
    \VarParam{\ModS}.\SortTm\prn{\mathbf{1}.A,B}
    &\xrightarrow{\mathsf{subst}_{\SortTm,\SortTm}}
    \textstyle\prod_{x:\Var_\ModS\prn{A}}\VarParam{\ModS}.\SortTm\prn{\mathbf{1},\mathsf{subst}_{\SortTm,\SortTp}\prn{B}\prn{x}}
    \\
    \VarParam{\ModS}.\SortTm\prn{\mathbf{1}.\SortLvl,B}
    &\xrightarrow{\mathsf{subst}_{\SortLvl,\SortTm}}
    \textstyle\prod_{x:\LvlVar_\ModS}\VarParam{\ModS}.\SortTm\prn{\mathbf{1},\mathsf{subst}_{\SortLvl,\SortTp}\prn{B}\prn{x}}
  \end{align*}
\end{lemma}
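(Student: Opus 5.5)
The plan is to compute both sides of each map explicitly, componentwise at a renaming context $\Psi\in\mathbb{R}_{\ModS}$, and then check by Yoneda that $\mathsf{subst}$ is the composite of the resulting bijections. The first step identifies the free extension $\ModS\brk{\psi\colon\mathbf{1}\to\rho\prn{\Psi}}$ with the slice of $\ModS$ at the context $\rho\prn{\Psi}$. A closed type or term over the formal parameter $\psi$ is the same thing as a type or term of $\ModS$ in context $\rho\prn{\Psi}$; this is the standard fact that adjoining a generic global element of a context $\Gamma$ amounts to working in context $\Gamma$. It yields, naturally in $\Psi$,
\[
  \VarParam{\ModS}.\SortTp\prn{\mathbf{1}.\Delta}_{\Psi}\cong \ModS.\SortTp\prn{\rho\prn{\Psi}.\Delta}
  \quad\text{and}\quad
  \VarParam{\ModS}.\SortTm\prn{\mathbf{1}.\Delta,B}_{\Psi}\cong \ModS.\SortTm\prn{\rho\prn{\Psi}.\Delta,B}
\]
for a telescope $\Delta$ of length at most one. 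I would prove this once, for the whole finite-limit theory $\mathbb{T}^{\mathsf{fo}}$, by the universal property of the free extension.

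The second step computes the right-hand sides. In $\Psh\prn{\mathbb{R}_{\ModS}}$ we have $\prn{\Var_{\ModS}\prn{A}\to P}_{\Psi}\cong \mathrm{Hom}\prn{y\Psi\times\Var_{\ModS}\prn{A},P}$, where $A\in\REN_{\ModS}.\SortTp\prn{\Psi}$ is fixed. Because $\REN_{\ModS}$ is a first-order $\ThyCwF$-model, the sort $\SortTm$ is representable. So the pullback of $\Var_{\ModS}\prn{A}$ along $y\Psi$ is represented by the context extension $\Psi.A$, and Yoneda gives $\prn{\Var_{\ModS}\prn{A}\to P}_{\Psi}\cong P\prn{\Psi.A}$. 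The same argument applies to $\LvlVar_{\ModS}$, with $\Psi.\SortLvl$ in place of $\Psi.A$. For the dependent products in the $\SortTm$ cases I would do this over the slice, and the same representability gives $\prn{\prod_{x}\ldots}_{\Psi}\cong\VarParam{\ModS}.\SortTm\prn{\mathbf{1},B}_{\Psi.A}$. Combining with the first step, each right-hand side is isomorphic to the corresponding $\ModS$-sort in context $\rho\prn{\Psi.A}$, respectively $\rho\prn{\Psi.\SortLvl}$.

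The third step shows that $\rho$ preserves context extension: $\rho\prn{\Psi.A}\cong\rho\prn{\Psi}.\rho_{\SortTp}\prn{A}$, and similarly for levels. This holds because $\rho$ is a homomorphism of first-order $\ThyCwF$-models, and such homomorphisms preserve the representing objects of representable sorts up to canonical isomorphism. The hypothesis that $\rho_{\SortTp}$ is an isomorphism is what lets the type $A$ range over all of $\ModS.\SortTp$, and it also makes the isomorphism $\VarParam{\ModS}.\SortTp\prn{\mathbf{1}}\cong\REN_{\ModS}.\SortTp$ coherent with this. The two sides are now both $\ModS.\SortTp\prn{\rho\prn{\Psi}.A}$, respectively $\ModS.\SortTm\prn{\rho\prn{\Psi}.A,B}$, and the analogues for levels.

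The step I expect to be the main obstacle is the last one: verifying that the abstract bijection just obtained really is $\mathsf{subst}$, and not merely some isomorphism. Unwinding $\mathsf{subst}$ at a renaming $r\colon\Phi\to\Psi$ and a variable $x\in\Var_{\ModS}\prn{A}_{\Phi}$, it substitutes the pair $\prn{\rho\prn{r},\mathsf{var}\prn{A,x}}$ into an open type or term. Under Yoneda it therefore suffices to evaluate at the generic element, namely $r=\mathrm{id}$ and $x=\mathsf{q}$ the last variable of $\Psi.A$. There substitution by $\prn{\mathsf{p},\mathsf{q}}$ is the identity by the $\eta$-law for context comprehension. This must be checked coherently with the implicit isomorphisms from the first and third steps, and doing that is where the care lies. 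The level cases are identical with $\lvlVar$ in place of $\mathsf{var}$.
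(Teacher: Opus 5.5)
Your proposal is correct and follows the same route as the paper. The paper's proof is only the one-line remark that these maps become isomorphisms once each construct in $\Psh\prn{\mathbb{R}_{\ModS}}$ is unfolded, and your steps are exactly that unfolding made explicit: identifying the component at $\Psi$ with $\ModS$ sliced at $\rho\prn{\Psi}$, computing the exponentials out of $\Var_{\ModS}\prn{A}$ and $\LvlVar_{\ModS}$ via Yoneda and representability of context extension in $\mathbb{R}_{\ModS}$, using that $\rho$ preserves context extension, and checking at the generic variable that the resulting bijection is $\mathsf{subst}$.
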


\begin{proof}
  By unfolding the definition of each construct in $\Psh\prn{\mathbb{R}_\ModS}$.
\end{proof}

\subsection{Normalisation of \texorpdfstring{\ThyMLTTNat}{NatHier}}

\subsubsection{Inductive specification of normal and neutral forms}
Letting $\ModS$ be the initial first-order model of \ThyMLTTNat{} and $\rho\colon\REN_\ModS\to \ModS$ be its initial renaming algebra over $\ModS$. In this section, we will work in $\Psh\prn{\mathbb{R}_{\ModS}}$ within the scope of the variable parameterisation $\VarParam{\ModS}$.
The first step toward a normalisation result is to specify what the normal (and neutral) forms ought to be; because normals are \emph{a priori} stable only under renamings, it is natural to phrase the normals and neutrals as a mutual indexed inductive definition internal to $\Psh\prn{\mathbb{R}_{\ModS}}$:
\begin{mathpar}
  \fbox{
    \ebrule{
      \hypo{A: \VarParam{\ModS}.\SortTp\prn{\mathbf{1}}}
      \infer1{\IsNfTp{A}:\Set}
    }
  }
  \and
  \fbox{
    \ebrule{
      \hypo{A: \VarParam{\ModS}.\SortTp\prn{\mathbf{1}}}
      \hypo{M : \VarParam{\ModS}.\SortTm\prn{\mathbf{1},A}}
      \infer2{\IsNfTm{A}{M}, \IsNeTm{A}{M}:\Set}
    }
  }
\end{mathpar}

In other words, for each syntactic type $A$ we shall define a type $\IsNfTp{A}$ of \emph{normal forms} of that type, and for each syntactic term $M$ of type $A$ we shall define types $\IsNfTm{A}{M}$ and $\IsNeTm{A}{M}$ of normal and neutral forms respectively of $M$. The goal of the normalisation theorem is to show that each $\IsNfTp{A}$ and $\IsNfTm{A}{M}$ are always \emph{singletons}, so that there exists exactly one normal form for every syntactic construct.

The inductive definition of neutral and normal form witnesses is summarised in \zcref{fig:normals-and-neutrals}, and is mostly standard (see \emph{e.g.}\ Sterling~\cite[\S5.2]{sterling-2021-thesis}).

\begin{figure}
  \begin{mathparsmall}
    \fbox{
      \ebrule{
        \hypo{A: \SortTp\prn{\mathbf{1}}}
        \infer1{\IsNfTp{A}:\Set}
      }
    }
    \and
    \fbox{
      \ebrule{
        \hypo{A: \SortTp\prn{\mathbf{1}}}
        \hypo{M : \SortTm\prn{\mathbf{1},A}}
        \infer2{\IsNfTm{A}{M}, \IsNeTm{A}{M}:\Set}
      }
    }
    \\
    \ebrule[nftp-pi, nftp-sg]{
      \hypo{\IsNfTp{A}}
      \hypo{\prn{x:\Var\prn{A}}\to \IsNfTp[\big]{B\brk{\mathsf{var}\prn{x}}}}
      \infer2{
        \IsNfTp{\Pi\prn{A,B}},
        \IsNfTp{\Sigma\prn{A,B}}
      }
    }
    \and
    \ebrule[nftp-lvl-pi]{
      \hypo{(i: \LvlVar) \rightarrow \IsNfTp[\big]{A\brk{\lvlVar\prn{i}}}}
      \infer1{
        \IsNfTp{\TpLvlPi\prn{A}}
      }
    }
    \and
    \ebrule[nftp-univ$_i$]{
      \infer0{\IsNfTp{\TpUniv_i}}
    }
    \and
    \ebrule[nftp-el]{
      \hypo{
        \IsNeTm{\TpUniv_i}{M}
      }
      \infer1{
        \IsNfTp{\TpEl_i\prn{M}}
      }
    }
    \and
    \ebrule[nf-lam]{
      \hypo{\prn{x:\Var\prn{A}}\to \IsNfTm[\big]{B\brk{\var\prn{x}}}{M\brk{\var\prn{x}}}}
      \infer1{
        \IsNfTm{\Pi\prn{A,B}}{\TmLam\brc{A,B}\prn{M}}
      }
    }
    \and
    \ebrule[nf-lvl-lam]{
      \hypo{(i: \LvlVar) \rightarrow \IsNfTm[\big]{A\brk{\mathsf{lvl}\prn{i}}}{u\brk{\lvlVar\prn{i}}}}
      \infer1{
        \IsNfTm{\TpLvlPi\prn{A}}{\TmLvlLam\brc{A}\prn{u}},
      }
    }
    \and
    \ebrule[nf-pair]{
      \hypo{\IsNfTm{A}{M}}
      \hypo{\IsNfTm[\big]{B\brk{M}}{N}}
      \infer2{
        \IsNfTm{\Sigma\prn{A,B}}{\TmPair\brc{A,B}\prn{M,N}}
      }
    }
    \and
    \ebrule[nf-ne-shift]{
      \hypo{\IsNeTm{\TpUniv_i}{M}}
      \hypo{\IsNeTm{\TpEl_i\prn{M}}{N}}
      \infer2{
        \IsNfTm{\TpEl_i\prn{M}}{N}
      }
    }
    \and
    \ebrule[nf-lift\ensuremath{_j}]{
      \hypo{\IsNeTm{\TpUniv_i}{M}}
      \infer1{
        \IsNfTm[\big]{\TpUniv_j}{\Lift{i}{j}{M}}
      }
    }
    \and
    \ebrule[nf-code-univ\ensuremath{_j}]{
      \infer0{
        \IsNfTm{\TpUniv_j}{\CodeUniv_i^j}
      }
    }
    \and
    \ebrule[nf-code-pi, nf-code-sg]{
      \hypo{\IsNfTm{\TpUniv_i}{M}}
      \hypo{\prn{x:\Var\prn{\TpEl_i\prn{M}}}\to \IsNfTm[\big]{\TpUniv_i}{N\brk{\var\prn{x}}}}
      \infer2{
        \IsNfTm{\TpUniv_i}{\CodePi_i\prn{M,N}},
        \IsNfTm{\TpUniv_i}{\CodeSg_i\prn{M,N}}
      }
    }
    \and
    \ebrule[nf-code-lvl-pi]{
      \hypo{\prn{x:\LvlVar}\to \IsNfTm[\big]{\TpUniv_i}{M\brk{\lvlVar\prn{x}}}}
      \infer1{
        \IsNfTm{\TpUniv_i}{\CodeLvlPi^i\prn{M}}
      }
    }
    \and
    \ebrule[ne-var]{
      \infer0{\IsNeTm{A}{\var\prn{A,x}}}
    }
    \and
    \ebrule[ne-app]{
      \hypo{\IsNeTm{\Pi\prn{A,B}}{M}}
      \hypo{\IsNeTm{A}{N}}
      \infer2{
        \IsNeTm[\big]{B\brk{N}}{\TmApp\brc{A,B}\prn{M,N}}
      }
    }
    \and
    \ebrule[ne-lvl-app]{
      \hypo{\IsNeTm{\TpLvlPi\prn{A}}{M}}
      \infer1{
        \IsNeTm[\big]{A\brk{\mathsf{lvl}\prn{i}}}{\TmLvlApp\brc{A}\prn{M,i}}
      }
    }
    \and
    \ebrule[ne-fst]{
      \hypo{\IsNeTm{\Sigma\prn{A,B}}{M}}
      \infer1{\IsNeTm{A}{\TmFst\brc{A,B}\prn{M}}}
    }
    \and
    \ebrule[ne-snd]{
      \hypo{\IsNeTm{\Sigma\prn{A,B}}{M}}
      \infer1{
        \IsNeTm[\big]{B\brk{\TmFst\brc{A,B}\prn{M}}}{
          \TmSnd\brc{A,B}\prn{M}
        }
      }
    }
  \end{mathparsmall}
  \caption{Indexed inductive definition of normal and neutral forms in $\Psh\prn{\mathsf{Ren}_\mathcal{S}}.$ We assume all terms appearing as indices are well-typed, leaving many evident premises implicit.}
  \label{fig:normals-and-neutrals}
\end{figure}

\subsubsection{The normalisation model}

The next step is to construct a \emph{displayed} model of \ThyMLTTNat{} over variable parameterisation in $\Psh\prn{\mathbb{R}_{\ModS}}$ that equips each type with a (proof-relevant) Tait computability structure and quotation/unquotation yoga~\citep{tait:1967,sterling-2021-thesis,BocquetPhD}. We do not describe this construction here for lack of space, but the key details are included in our supplementary appendices.

\subsection{The normalisation theorem; injectivity of type constructors \& decoding}

A level-polymorphic adaptation of Bocquet's relative induction principle applied to the normalisation displayed model yields the normalisation result.

\begin{theorem}[Internal normalisation of \ThyMLTTNat]\label{thm:normalisation}
  Every closed type and closed term of $\VarParam{\ModS}$ has a unique normal form in the sense that both $\IsNfTp{A}$ and $\IsNfTm{A}{M}$ are contractible for all $A:\VarParam{\ModS}.\SortTp\prn{\mathbf{1}}$  and $M:\VarParam{\ModS}.\SortTm\prn{\mathbf{1},A}$.
\end{theorem}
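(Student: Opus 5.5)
We prove the theorem by Tait computability in the style of Bocquet's relative induction principle, adapted to the level-polymorphic setting. Inside $\Psh\prn{\mathbb{R}_{\ModS}}$ we construct a displayed first-order model $\mathcal{N}$ of \ThyMLTTNat{} over $\VarParam{\ModS}$. A displayed type over $A$ consists of
\begin{enumerate}
  \item a normal form $\IsNfTp{A}$;
  \item a predicate $\mathsf{P}_A$ on $\VarParam{\ModS}.\SortTm\prn{\mathbf{1},A}$;
  \item a reflection map $\IsNeTm{A}{M}\to \mathsf{P}_A\prn{M}$;
  \item a reification map $\mathsf{P}_A\prn{M}\to\IsNfTm{A}{M}$.
\end{enumerate}
Level variables are interpreted trivially, as $\LvlVar$ itself. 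By the concretion isomorphisms of \zcref{lemma:substitution-is-iso}, binders in $\VarParam{\ModS}$ can be treated as $\Var$-indexed families. This lets the cases for $\Pi$, $\Sigma$ and $\TpLvlPi$ follow the usual pattern: reify by applying to a reflected fresh variable, and reflect by forming $\TmApp$ or $\TmLvlApp$ with reified arguments. The relative induction principle then gives a section of $\mathcal{N}$, which assigns to every closed $A$ and $M$ a normal form. This is the existence half of the theorem.

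For universes we take $\mathsf{P}_{\TpUniv_j}\prn{M}$ to be a \emph{semantic code}: an element of an inductive family, defined by induction-recursion internally, whose generators are
\begin{enumerate}
  \item neutral lifts $\Lift{i}{j}{M_0}$ with $\IsNeTm{\TpUniv_i}{M_0}$ for some $i\leq j$;
  \item $\CodeUniv_i^j$;
  \item $\CodePi_j$, $\CodeSg_j$ and $\CodeLvlPi^j$ applied to semantic codes.
\end{enumerate}
Each code is paired with a decoding to a displayed type over $\TpEl_j\prn{M}$. Reification of codes produces the normal forms \textsc{nf-lift}$_j$, \textsc{nf-code-univ}$_j$, \textsc{nf-code-pi} and so on. $\TpEl_j$ is interpreted by the recursive decoding, and neutral codes decode to the displayed type built from \textsc{nftp-el} and \textsc{nf-ne-shift}.

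The main obstacle is \emph{lifting}. We must define $\mathcal{N}.\Lift{i}{j}$ on semantic codes and verify four strict equations in the displayed model: functoriality, the naturality laws, and $\TpEl_j\prn{\Lift{i}{j}M}=\TpEl_i\prn{M}$. We define lifting by recursion on the code:
\begin{enumerate}
  \item a neutral lift $\Lift{k}{i}{M_0}$ goes to $\Lift{k}{j}{M_0}$, collapsing composites using proof irrelevance of $\leq$;
  \item $\CodeUniv_k^i$ goes to $\CodeUniv_k^j$;
  \item $\CodePi_i$ goes to $\CodePi_j$ of the lifted components.
\end{enumerate}
The required equations then hold by induction, because each case matches a syntactic law exactly. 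This is why neutral codes must be stored with their ``base'' level and a lift, rather than as bare neutrals. The decoding equation needs the decoding of a code to depend only on the underlying syntactic $\TpEl$ together with the structure. We will make the displayed type family over $\TpUniv$ definitionally indexed by the syntactic code so that this holds strictly. We expect this to require care with level inequalities, since $i\leq j$ may involve level variables; this works because $\leq$ is proof-irrelevant and we never compute with levels.

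Uniqueness, i.e.\ contractibility, comes from a separate stability argument. By induction on normal and neutral forms, we show that reifying the section at the embedded term of a normal form $\mathsf{nf}$ returns $\mathsf{nf}$ itself, and that reflecting a neutral agrees with its section value. The universe cases rely on the codes above having been chosen in bijection with the normal-form constructors for $\TpUniv_j$. Together, existence and uniqueness give that $\IsNfTp{A}$ and $\IsNfTm{A}{M}$ are singletons.
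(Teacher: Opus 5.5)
\paragraph{The gap: size stratification of the semantic universes.} Your overall architecture is the right one: a proof-relevant Tait computability model over $\VarParam{\ModS}$, relative induction for existence, and a stability argument for uniqueness. The uniqueness half is the standard one. The existence half, however, skips the point that makes universes hard, namely size.

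The semantic universe over $\TpUniv_j$ must contain a code over $\CodeUniv_i^j$ whose decoding is the displayed type $\mathcal{N}.\TpUniv_i$. That is a structure whose predicate is the entire family of semantic codes at level $i$. An inductive-recursive definition cannot let the decoding of a constructor return something built from the family being defined. The decoding may only use the decodings of inductive arguments. So your single ``inductive family, defined by induction-recursion'' containing all the $\CodeUniv_i^j$ generators is not a legitimate definition.

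To make it legitimate you must define the semantic universes by recursion on levels. The level-$j$ family has to live in a strictly larger metatheoretic universe than every level-$i$ family with $i^+\leq j$, and the sort of displayed types has to sit above all of them. That recursion needs a well-founded height on levels, and in the presence of level variables this is not automatic. By interpreting levels trivially, you discard exactly the information the construction needs.

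\paragraph{How the paper supplies the missing idea.} A displayed level over $i$ is a pair $(\alpha^?,n)\in(\mathsf{LvlVar}+\mathbf{1})\times\mathbb{N}$ lying over $i$. A level variable goes to $(\alpha,0)$, $0$ goes to $(*,0)$, and successor increments $n$. The computability sets of elements of a code at displayed level $n$ are then required to be $\mathcal{U}_n$-small, for a Hofmann--Streicher hierarchy in $\Psh\prn{\mathbb{R}_{\ModS}}$.

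This also dissolves what you call the main obstacle. A code witness is not a syntax-mirroring tree. It is a record containing:
\begin{itemize}
\item normal forms of all its lifts to higher levels;
\item a decoded normal form;
\item a small computability set with (un)quotation.
\end{itemize}
Lifting reuses the same data and reinterprets the set along the cumulative inclusion $\mathcal{U}_m\subseteq\mathcal{U}_n$. Functoriality, naturality and $\TpEl_j\prn{\Lift{i}{j}{M}}=\TpEl_i\prn{M}$ then hold on the nose, with no recursion on codes.

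Your recursive lifting on codes could probably be made to work once the stratification is in place. As written, though, the model you feed to the relative induction principle is not well-defined.
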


The result above is internal to $\Psh\prn{\mathbb{R}_{\ModS}}$, where anything said of closed terms ends up being effective (externally) for all open terms. Therefore, from \zcref{thm:normalisation} we obtain the standard \emph{external} normalisation theorem for open terms, with respect to an externalised version of the description of normal forms (\zcref{fig:normals-and-neutrals}).
Notably, this proves normalisation for a theory with explicit level polymorphism really close to the cumulative variant of \citet{bcde-2022}. It should be possible to adapt our proof to add level constraints, and deduce decidability from the algorithm described by \citet{bezem-coquand-2021}. 
We can deduce the standard corollaries of normalisation, including the injectivity and no-confusion laws for canonical type constructors.

\begin{corollary}[Injectivity of $\TpPi,\TpLvlPi,\TpSg$]\label{cor:injectivity-of-type-constructors}
  The $\TpPi$, $\TpLvlPi$, and $\TpSg$ constructors are injective in $\VarParam{\ModS}$. Given
  $A,A':\VarParam{\ModS}.\SortTp\prn{\mathbf{1}}$, $B:\VarParam{\ModS}.\SortTp\prn{\mathbf{1}.A}$, and $B':\VarParam{\ModS}.\SortTp\prn{\mathbf{1}.A'}$ such that $\TpPi\prn{A,B} = \TpPi\prn{A',B'}$ or $\TpSg\prn{A,B}=\TpSg\prn{A',B'}$, then we have $A=A'$ and $B=B'$. Similarly, given $B,B':\VarParam{\ModS}.\SortTp\prn{\mathbf{1}.\SortLvl}$ such that $\TpLvlPi\prn{B} = \TpLvlPi\prn{B'}$, we have $B=B'$.
  Hence, externally, the $\TpPi$-, $\TpLvlPi$-, and $\TpSg$-type constructors are injective in all contexts.
\end{corollary}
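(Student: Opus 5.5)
The plan is to derive injectivity directly from the contractibility of $\IsNfTp{-}$ given by \zcref{thm:normalisation}, by reading off normal forms. Suppose $\TpPi\prn{A,B} = \TpPi\prn{A',B'}$ in $\VarParam{\ModS}.\SortTp\prn{\mathbf{1}}$. Theorem~\ref{thm:normalisation} gives centres of contraction $a:\IsNfTp{A}$ and $b:\prn{x:\Var\prn{A}}\to\IsNfTp{B\brk{\var\prn{x}}}$, where $b$ is obtained pointwise from the contractibility of each $\IsNfTp{B\brk{\var\prn{x}}}$. Applying \textsc{nftp-pi} yields a normal-form witness $\textsc{nftp-pi}\prn{a,b}:\IsNfTp{\TpPi\prn{A,B}}$, and likewise $\textsc{nftp-pi}\prn{a',b'}:\IsNfTp{\TpPi\prn{A',B'}}$. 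Transporting along the assumed equation, both witnesses inhabit the same type $\IsNfTp{\TpPi\prn{A,B}}$, which is contractible, so they are equal.

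The key step is to extract index equalities from this equality of witnesses. Because $\IsNfTp{-}$ is an indexed inductive family in $\Psh\prn{\mathbb{R}_\ModS}$, I would define a projection function $\pi:\prod_{C}\IsNfTp{C}\to \textstyle\sum_{X:\SortTp\prn{\mathbf{1}}}\prn{\Var\prn{X}\to\SortTp\prn{\mathbf{1}}} + \cdots$ by recursion on the witness. On an \textsc{nftp-pi} witness built over indices $\prn{A,B}$, it returns the left summand $\prn{A,\lambda x\mathpunct{.}B\brk{\var\prn{x}}}$, with suitable other summands for the remaining constructors. Since the inductive family records its indices, $\pi$ applied to the two equal witnesses gives $A=A'$ and $\lambda x\mathpunct{.}B\brk{\var\prn{x}} = \lambda x\mathpunct{.}B'\brk{\var\prn{x}}$, with $B$'s type coerced along $A=A'$. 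The concretion isomorphism $\mathsf{subst}_{\SortTm,\SortTp}$ of \zcref{lemma:substitution-is-iso} then turns the second equation into $B=B'$ in $\VarParam{\ModS}.\SortTp\prn{\mathbf{1}.A}$.

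The $\TpSg$ case is identical, using \textsc{nftp-sg}. The $\TpLvlPi$ case uses \textsc{nftp-lvl-pi} together with $\mathsf{subst}_{\SortLvl,\SortTp}$. A by-product of the same argument is no-confusion: a $\TpPi$-witness never equals a $\TpSg$-witness, because $\pi$ lands them in distinct summands.

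I expect the main obstacle to be the justification of the projection step, which relies on the constructor disjointness and injectivity of indexed inductive types in the presheaf topos. Concretely, I would realise $\IsNfTp{-}$ as a $W$-type (initial algebra) in $\Psh\prn{\mathbb{R}_\ModS}$, whose structure map is an isomorphism and whose coproduct summands are disjoint. I would also keep careful track of the dependent transport of $B$ along $A=A'$.

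For the external statement, the argument is internal to $\Psh\prn{\mathbb{R}_\ModS}$, so evaluating at an arbitrary context $\Psi$ gives injectivity for open types in that context. Any context of $\ModS$ arises this way, since objects of $\mathbb{R}_\ModS$ and $\mathbb{C}_\ModS$ agree up to $\rho$, and this yields injectivity in all contexts.
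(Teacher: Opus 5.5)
Your proposal is correct and is the standard derivation the paper relies on when it calls these ``standard corollaries of normalisation'' (it gives no separate proof). Uniqueness of normal forms from Theorem~\ref{thm:normalisation} forces the two \textsc{nftp-pi} (resp.\ \textsc{nftp-sg}, \textsc{nftp-lvl-pi}) witnesses to coincide, freeness of the inductive family recovers the stored constructor arguments, and the concretion isomorphism of Lemma~\ref{lemma:substitution-is-iso} turns the pointwise equation into $B=B'$, as in the paper's proof of Theorem~\ref{thm:inj-decoding}; your passage to the external statement by evaluating at stages of $\Psh\prn{\mathbb{R}_{\ModS}}$ matches the paper's remark that internal results about closed terms are effective for all open terms.
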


\begin{corollary}[No-confusion]\label{cor:no-confusion}
  The constructors $\TpPi$, $\TpLvlPi$, $\TpSg$, and $\TpUniv$ are all disjoint.
\end{corollary}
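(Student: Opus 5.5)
The plan is to read disjointness off the normal forms supplied by \zcref{thm:normalisation}, exactly as for \zcref{cor:injectivity-of-type-constructors}. I work internally to $\Psh\prn{\mathbb{R}_{\ModS}}$ in the variable parameterisation $\VarParam{\ModS}$ and derive a contradiction from any equation between two types headed by different constructors among $\TpPi$, $\TpLvlPi$, $\TpSg$, $\TpUniv_i$. Since all of these are stable under renaming, the internal result externalises to ``disjoint in all contexts''.

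Suppose, for instance, that $\TpPi\prn{A,B} = \TpUniv_i$. I first build canonical normal-form witnesses for both sides. For $\TpUniv_i$, \textsc{nftp-univ$_i$} gives one directly. For $\TpPi\prn{A,B}$, \textsc{nftp-pi} needs witnesses for $A$ and for each $B\brk{\var\prn{x}}$; these are provided by the contractibility clause of \zcref{thm:normalisation}, using \zcref{lemma:substitution-is-iso} to treat $B$ as a family over variables. The other constructors are handled the same way, using \textsc{nftp-sg} and \textsc{nftp-lvl-pi}, the latter with $\LvlVar$ in place of $\Var\prn{A}$.

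Because $\IsNfTp{-}$ is an indexed family, transporting along the assumed equality turns the $\TpPi$-witness into an element of $\IsNfTp{\TpUniv_i}$. Contractibility of $\IsNfTp{\TpUniv_i}$ then forces it to equal the witness built by \textsc{nftp-univ$_i$}. Transport does not change which constructor of the inductive family a witness was built with, so this identifies an element built by \textsc{nftp-pi} with one built by \textsc{nftp-univ$_i$}. Distinct constructors of an inductive family are disjoint, which in the presheaf topos is provable by a standard code/decode or large-elimination argument into $\Omega$, so we get $\bot$.

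The only point needing care, and the main obstacle, is the bookkeeping for this constructor comparison. The cleanest route is to define, by recursion on the inductive family, a ``head'' function from $\sum_{A}\IsNfTp{A}$ to a finite decidable set of tags $\brc{\Pi,\Sigma,\TpLvlPi,\TpUniv,\TpEl}$. The tag of a pair $\prn{A,w}$ depends only on $w$, and contractibility of each fibre means the tag factors through $A$ itself. Distinct tags on the two sides of the assumed equation then give the contradiction, and running this uniformly over all pairs of the four constructors proves the corollary.
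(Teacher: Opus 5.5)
Your proposal is correct and follows the paper's route. The paper gives no explicit proof: it just records no-confusion as a standard corollary of the internal normalisation theorem. Your head-tag function on $\sum_A \IsNfTp{A}$, which descends to types because each fibre is contractible, is a sound way to read disjointness off unique normal forms before externalising.
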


Using normalisation, injectivity of type constructors, and no-confusion, we may finally deduce the injectivity of the decoding family.

\begin{restatable}[Injectivity of decoding]{theorem}{ThmInjDecoding}\label{thm:inj-decoding}
  For any $M,N : \VarParam{\ModS}.\SortTm\prn{\mathbf{1},\TpUniv_i}$ such that $\TpEl_i\prn{M} = \TpEl_i\prn{N}$, we have $M = N$. Hence, externally, the decoding family of \ThyMLTTNat{} is injective in all contexts.
\end{restatable}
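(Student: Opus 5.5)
The plan is to argue by induction on normal forms, using \zcref{thm:normalisation} to reduce equations between types to equations between normal form witnesses. By \zcref{thm:normalisation}, $M$ and $N$ have normal form witnesses $m:\IsNfTm{\TpUniv_i}{M}$ and $n:\IsNfTm{\TpUniv_i}{N}$. I will prove, by induction on $m$ and a case analysis on $n$, that $\TpEl_i\prn{M}=\TpEl_i\prn{N}$ implies $M=N$. Inspecting \zcref{fig:normals-and-neutrals}, a normal form at $\TpUniv_i$ is one of the following: $\Lift{k}{i}{M_0}$ with $M_0$ neutral at $\TpUniv_k$, a code $\CodeUniv_k^i$, or one of $\CodePi_i$, $\CodeSg_i$, $\CodeLvlPi^i$. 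A neutral term at $\TpUniv_i$ itself appears as $\Lift{i}{i}{M_0}$, using \textsc{univ-lift-func-idn}. The case \textsc{nf-ne-shift} cannot occur, since $\TpUniv_i$ is not of the form $\TpEl$.

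For each shape I compute $\TpEl_i$ using the decoding rules:
\begin{itemize}
  \item $\TpEl_i\prn{\Lift{k}{i}{M_0}}=\TpEl_k\prn{M_0}$, whose normal form is \textsc{nftp-el};
  \item $\TpEl_i\prn{\CodeUniv_k^i}=\TpUniv_k$, with normal form \textsc{nftp-univ};
  \item $\TpEl_i\prn{\CodePi_i\prn{M_1,N_1}}=\TpPi\prn{\TpEl_i\prn{M_1},\lambda x\mathpunct{.}\TpEl_i\prn{N_1\prn{x}}}$, and similarly for $\CodeSg_i$ and $\CodeLvlPi^i$.
\end{itemize}

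Mismatched cases are ruled out as follows. When the two head shapes differ, the resulting types have normal forms built by different rules of the inductive definition. Since $\IsNfTp{-}$ is contractible, equal types have equal normal form witnesses, and the constructors of an inductive family are disjoint, so these cases are absurd. This covers no-confusion among $\TpPi,\TpSg,\TpLvlPi,\TpUniv$ (\zcref{cor:no-confusion}) and also the separation of a decoded neutral $\TpEl_k\prn{M_0}$ from every canonical former.

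For matching cases:
\begin{itemize}
  \item \textbf{Universes.} $\TpUniv_k=\TpUniv_{k'}$ gives equal \textsc{nftp-univ} witnesses, hence $k=k'$ as level indices, so $\CodeUniv_k^i=\CodeUniv_{k'}^i$; the inequality witness is proof-irrelevant.
  \item \textbf{Product codes.} For $\CodePi_i$, \zcref{cor:injectivity-of-type-constructors} gives $\TpEl_i\prn{M_1}=\TpEl_i\prn{M_1'}$ and equality of the families. The induction hypothesis on the subderivation gives $M_1=M_1'$. For the families I apply \zcref{lemma:substitution-is-iso}: it suffices to compare $N_1\brk{\var\prn{x}}$ and $N_1'\brk{\var\prn{x}}$ for each $x:\Var\prn{\TpEl_i\prn{M_1}}$, and there the induction hypothesis applies to the premise of \textsc{nf-code-pi}. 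Congruence then finishes. The $\CodeSg_i$ and $\CodeLvlPi^i$ cases are the same, the latter quantifying over $\LvlVar$.
  \item \textbf{Lifted neutrals.} From $\TpEl_k\prn{M_0}=\TpEl_{k'}\prn{M_0'}$, contractibility gives equal \textsc{nftp-el} witnesses. The indices therefore agree: $k=k'$ and $M_0=M_0'$, including the implicit universe index of the neutral. Hence $\Lift{k}{i}{M_0}=\Lift{k'}{i}{M_0'}$.
\end{itemize}

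The step I expect to be most delicate is extracting equalities from equality of normal form witnesses. This needs the inductive family of \zcref{fig:normals-and-neutrals} to be a genuine indexed inductive type in $\Psh\prn{\mathbb{R}_{\ModS}}$, so that constructors are disjoint and injective on their indices. In the \textsc{nftp-el} case, the level $k$ and neutral $M_0$ must be recorded as indices or arguments, so that they are determined by the type rather than merely existing. I would first check that the internal formulation makes $\IsNfTp{\TpEl_k\prn{M_0}}$ carry $k$ and $M_0$ as data; if it does not, I would refine the rule to carry them explicitly. Finally, because the whole argument is internal to $\Psh\prn{\mathbb{R}_{\ModS}}$ and concerns closed terms of $\VarParam{\ModS}$, externalising it yields injectivity of decoding in every context.
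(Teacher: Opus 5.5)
Your proposal is correct and follows the paper's own proof. You generalise over normal forms via the normalisation theorem and dismiss mismatched heads by no-confusion. In the matching $\CodePi_i$-style cases you combine injectivity of type constructors, the induction hypothesis and the concretion lemma to recover equality of the families. You are in fact more thorough than the paper, which spells out only the $\CodePi_i$ case: you also handle the universe-code and lifted-neutral cases, and the separation of decoded neutrals from canonical formers, using contractibility of $\IsNfTp{-}$ together with disjointness and injectivity of the constructors of the inductive family.
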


\begin{corollary}
  The following rule holds in the initial model of\ \ThyMLTTNat{}:
  \begin{mathparpagebreakable}
    \ebrule{
    \hypo{i:\SortLvl}
    \hypo{M,N:\SortTm\prn{\TpUniv_i}}
    \hypo{
      \TpEl_i\prn{M} = \TpEl_i\prn{N} : \SortTp
    }
    \infer[admissible]3{
      M = N : \SortTm\prn{\TpUniv_i}
    }
  }
  \end{mathparpagebreakable}
\end{corollary}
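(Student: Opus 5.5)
The plan is to obtain the rule by externalising \zcref{thm:inj-decoding}, which is already phrased internally to $\Psh\prn{\mathbb{R}_{\ModS}}$ about closed terms of the variable parameterisation $\VarParam{\ModS}$. Admissibility in the initial model means the following. For every syntactic context $\Gamma\in\mathbb{C}_{\ModS}$, every level $i$ in context $\Gamma$, and all $M,N:\ModS.\SortTm\prn{\Gamma,\TpUniv_i}$ with $\TpEl_i\prn{M}=\TpEl_i\prn{N}$ in $\ModS.\SortTp\prn{\Gamma}$, we must show $M=N$. So the task reduces to transporting such data along $\rho\colon\mathbb{R}_{\ModS}\to\mathbb{C}_{\ModS}$ into a single stage of the presheaf topos.

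The first step is to check that every syntactic context is, up to isomorphism, of the form $\rho\prn{\Psi}$ for some $\Psi\in\mathbb{R}_{\ModS}$. I would argue this by induction on context length. The empty context is $\rho\prn{\mathbf{1}}$. A level extension $\Gamma.\SortLvl$ is matched by $\Psi.\SortLvl$, because $\rho$ is a $\ThyCwF$-homomorphism. A type extension $\Gamma.A$ is matched by $\Psi.A$, because the component $\rho_{\SortTp}$ is an isomorphism. This means that any type over $\rho\prn{\Psi}$ comes from a type of $\REN_{\ModS}$ at stage $\Psi$, and context extension is preserved. I expect this to be the only genuinely delicate point, since it is exactly where the definition of a level-polymorphic renaming algebra (\zcref{def:renaming-algebra}) is used. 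If the syntax has a context extension not covered by $\ThyCwF$, it would have to be treated separately. For \ThyMLTTNat{}, however, the representable sorts are only $\SortLvl$ and $\SortTm\prn{-}$, so the induction closes.

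The second step is to fix $\Psi$ with $\rho\prn{\Psi}\cong\Gamma$ and read off closed data of $\VarParam{\ModS}$ at stage $\Psi$. By definition, $\VarParam{\ModS}\prn{X}_{\Psi}=\ModS\brk{\psi\colon\mathbf{1}\to\rho\prn{\Psi}}\prn{X}$, and closed objects of the free extension by a global point of $\rho\prn{\Psi}$ are exactly objects of $\ModS$ in context $\rho\prn{\Psi}$. Hence $i$, $M$ and $N$ determine a closed level $i$ and closed terms $M,N:\VarParam{\ModS}.\SortTm\prn{\mathbf{1},\TpUniv_i}$ at stage $\Psi$. The hypothesis $\TpEl_i\prn{M}=\TpEl_i\prn{N}$ transports to an equation of sections of $\VarParam{\ModS}.\SortTp\prn{\mathbf{1}}$ at stage $\Psi$. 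By the Kripke–Joyal reading of equality in a presheaf topos, this equation holds internally after restricting to the representable $\mathbf{y}\Psi$.

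Finally, I would apply \zcref{thm:inj-decoding} in the slice over $\mathbf{y}\Psi$ (internal statements are stable under such restriction) to conclude that $M=N$ as sections at stage $\Psi$. Transporting back along the identification $\VarParam{\ModS}\prn{\SortTm\prn{\mathbf{1},\TpUniv_i}}_{\Psi}\cong\ModS.\SortTm\prn{\rho\prn{\Psi},\TpUniv_i}$ and the isomorphism $\rho\prn{\Psi}\cong\Gamma$ gives $M=N:\ModS.\SortTm\prn{\Gamma,\TpUniv_i}$, which is the admissible rule. Apart from the essential surjectivity argument above, every step is an unfolding of definitions, in the same spirit as \zcref{lemma:substitution-is-iso}.
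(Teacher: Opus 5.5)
Your proposal is correct and takes the same route as the paper. The paper obtains this corollary as the externalisation of \zcref{thm:inj-decoding} (``hence, externally, the decoding family of \ThyMLTTNat{} is injective in all contexts''), using its remark that statements about closed terms of $\VarParam{\ModS}$ in $\Psh\prn{\mathbb{R}_{\ModS}}$ are effective for all open terms; you have made explicit the unfolding the paper leaves implicit: every syntactic context arises as $\rho\prn{\Psi}$, closed data of $\VarParam{\ModS}$ at stage $\Psi$ are open data in context $\rho\prn{\Psi}$, and the internal implication is evaluated at that stage by Kripke--Joyal.
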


\subsection{Corollary: admissible equivalence of \texorpdfstring{\ThyFussFree}{FussFree} and \texorpdfstring{\ThyMLTTNat}{NatHier}}

As a consequence of \zcref{thm:inj-decoding}, the initial model of \ThyMLTTNatInj{}, if regarded as a model of \ThyMLTTNat{}, is in fact the \emph{initial} model of the latter. This can be thought of as an invariant version of \emph{admissibility} that can be stated in categorical semantics without reference to any particular presentation by inference rules; from a concrete perspective, it means that syntax written in either language can be translated back and forth coherently:
\[\begin{tikzcd}
	\ThyMLTTNat && \ThyMLTTNatInj && \ThyFussFree
	\arrow["{\text{Admissible}}", Leftrightarrow, from=1-3, to=1-1]
	\arrow["{\text{Derivable}}"', Leftrightarrow, from=1-3, to=1-5]
\end{tikzcd}\]

The equivalence above pertains only to syntax: the two theories do \emph{not} have the same models. Syntactic equivalences like this are very useful: you can write your code in the fuss-free calculus, translate it into the natural hierarchy calculus, and then interpret it in any model of the latter.

\section{Bidirectional elaboration of the fuss-free hierarchy}\label{sec:elaboration}

In a practical system, the user should not have to manipulate the decoding, lifts or codes themselves;
our goal is to make working with universe levels as simple as possible, and to minimise the number of
places where they need to appear. This means that the raw syntax (\emph{i.e.}\ what the user writes) should be
as \textit{implicit} as possible, and it is natural in this case to conflate types and terms as a matter of notation so that a term $A : \TpUniv_i$ may also be used as a proper type, and a type that happens to be small can be used as an element of a suitable universe.

The usual way to achieve this is to \textit{elaborate} the raw syntax to the core theory \ThyFussFree, by inserting
the administrative ``coding'' ($\Shrink{}$) and ``decoding'' ($\Enlarge{}$) operations automatically everywhere they are needed. It would of course be possible to take \ThyMLTTNatInj{} as the target of elaboration, but \ThyFussFree{} is considerably simpler to implement and is justified by our metatheoretic results (\zcref{sec:normalisation}). In particular, an implementation of \ThyFussFree{} involves far fewer term and type constructors, leading to a streamlined treatment of conversion relative to the more bureaucratic \ThyMLTTNatInj.

To describe the bidirectional elaboration algorithm, we divide the raw syntax into two different sorts: the
\textit{synthesizable} and \textit{checkable} terms. The former will be the subject of the type synthesis problem, which
\textit{infers} the type of such a given term. The latter will be the subject of the type checking problem, which checks if a
term has a given type in a given context.

\begin{tabular-bnf}
    Levels & \raw{i},\raw{j} ::= & \raw{\alpha}\mid \raw{i^+}\\
    Synthesis & \raw{E},\raw{F} ::= & \raw{x} \mid \raw{E}\,\raw{M} \mid \raw{\ConcLvlApp{E}{i}} \mid \raw{\TmFst} \raw{E} \mid \raw{\TmSnd} \raw{E}\\
    Checking & \raw{M},\raw{N},\raw{A},\raw{B} ::= & \raw{E} \mid \raw{\lambda x\mathpunct{.} M} \mid \raw{\ConcLvlLam{\alpha}{M}} \mid \raw{(M,N)} \mid \raw{(x:A)\to B} \mid \raw{\ConcLvlPi{\alpha}{A}} \mid \raw{(x:A)\times B} \mid \raw{\TpUniv_i}
\end{tabular-bnf}

Along with this raw syntax, we will rely on \textit{resolver environments} $\core{\varrho}$ to handle the surface level variable names written by
the user. A resolver environment $\core{\varrho} : \VarIdent \rightharpoonup \CoreTypedTerm + \SortLvl$ is a partial map from the raw syntactic identifiers
$\raw{x} \in \VarIdent$ to core typed terms $\prn{\core{M}\in\core{A}}$, and from raw level identifiers $\raw{\alpha}$ to core levels $\prn{\core{i}\in \SortLvl}$. Resolver environment are always considered with respect to a context $\core{\Gamma}$, and must
satisfy the following invariant: $\core{\varrho}$ is said to be \textit{valid} with respect to $\core{\Gamma}$ if for every identifier $\raw{x}$ in the domain of $\core{\varrho}$ we have:
\[\core{\varrho}\prn{\raw{x}} \equiv \prn{\core{M} \in A} \implies \core{\Gamma} \vdash \core{M : A}
  \qquad
  \text{and}
  \qquad
  \core{\varrho}\prn{\raw{\alpha}} \equiv \prn{\core{i}\in\SortLvl} \implies \core{\Gamma} \vdash \core{i}: \SortLvl
.\]

Now we can describe the bidirectional algorithm itself, by defining four mutually recursive judgements.
\begin{enumerate}
    \item The judgement $\core{\Gamma};\core{\varrho}\vdash\raw{i}\syn\core{i'}\in\SortLvl$ constructs a typed core level $\core{\Gamma}\vdash\core{i'}:\SortLvl$ given a context $\core{\Gamma}$ and a $\Gamma$-valid environment $\core{\varrho}$. 
    \item The judgement $\core{\Gamma};\core{\varrho} \vdash \raw{E} \syn \core{M} \in \core{A}$ constructs a typed core term $\core{\Gamma}\vdash \core{M}:\core{A}$ from a synthesis-mode raw term $\raw{E}$, given a context $\core{\Gamma}$ and a $\Gamma$-valid environment $\core{\varrho}$.
    \item The judgement $\core{\Gamma};\core{\varrho} \vdash \core{A} \ni \raw{M} \chk \core{M'}$ constructs a typed core term $\core{\Gamma}\vdash \core{M'}:\core{A}$ from a checking-mode raw term $\raw{M}$, given a context $\core{\Gamma}$ and a $\Gamma$-valid environment $\core{\varrho}$.
    \item The judgement $\core{\Gamma};\core{\varrho} \vdash \raw{A} \chktp[i^?] \core{A'}$ constructs a well-formed type $\core{\Gamma}\vdash \core{A'}\ \mathit{type}$ from a checking-mode raw type $\raw{A}$, given a context $\core{\Gamma}$ and a $\Gamma$-valid environment $\core{\varrho}$. This judgement takes an optional parameter $\core{i^?}$ which, when provided, must be a universe level $\core{\Gamma}\vdash \core{i}:\SortLvl$ such that $\core{\Gamma}\vdash \SortSmall_i\prn{\core{A}}$ holds.
\end{enumerate}

\begin{notation}
  Given a level $\core{j}$ and an optional level $\core{i^?}$, we write $\core{j}\leq\core{i^?}$ to mean $\core{j}\leq\core{i}$ when $\core{i}$ is defined; otherwise, the inequality holds by fiat.
\end{notation}

The main addition to standard type-checking algorithms \cite{coquand-type-checking} is the size parameter of the judgement $\mathsf{chktp}$.
Concretely, this judgement corresponds to a bidirectional version of the formation rules, and adding the size parameter takes into account the rules involving the smallness predicate.

\begin{mathparsmall}
  \ebrule[syn-hyp]{
    \hypo{
      \core{\varrho}\prn{\raw{x}} \equiv \prn{\core{M}\in\core{A}}
    }
    \infer1{
      \core{\Gamma};\core{\varrho} \vdash \raw{x} \syn \core{M} \in \core{A}
    }
  }
  \and
  \ebrule[syn-lvl-hyp]{
    \hypo{
      \core{\varrho}\prn{\raw{\alpha}} \equiv \prn{\core{i}\in\SortLvl}
    }
    \infer1{
      \core{\Gamma};\core{\varrho}\vdash \raw{\alpha}\syn i\in\SortLvl
    }
  }
  \and
  \ebrule[syn-lvl-succ]{
    \hypo{
      \core{\Gamma};\core{\varrho}\vdash\raw{i}\syn \core{i'}\in\SortLvl
    }
    \infer1{
      \core{\Gamma};\core{\varrho}\vdash\raw{i^+}\syn \core{i'^+}\in\SortLvl
    }
  }
  \and
  \ebrule[syn-app]{
    \hypo{
      \core{\Gamma};\core{\varrho} \vdash \raw{E} \syn \core{E'} \in \core{\TpPi\prn{A,B}}
    }
    \hypo{
      \core{\Gamma};\core{\varrho} \vdash \core{A} \ni \raw{N} \chk \core{N'}
    }
    \infer2{
      \core{\Gamma};\core{\varrho} \vdash \raw{E\,N} \syn \core{\TmApp\brc{A,B}\prn{E',N'}} \in \core{B[N]}
    }
  }
  \and
  \ebrule[syn-lvl-app]{
    \hypo{
      \core{\Gamma};\core{\varrho} \vdash \raw{E} \syn \core{E'} \in \core{\TpLvlPi\prn{A}}
    }
    \hypo{
      \core{\Gamma};\core{\varrho}\vdash \raw{i} \syn \core{i'}\in\SortLvl
    }
    \infer2{
      \core{\Gamma};\core{\varrho} \vdash \raw{\ConcLvlApp{E}{i}} \syn \core{\TmLvlApp\brc{A}\prn{E',i'}} \in \core{A\brk{i'}}
    }
  }
  \and
  \ebrule[syn-proj1]{
    \hypo{
      \core{\Gamma};\core{\varrho} \vdash \raw{E} \syn \core{E'} \in \core{\TpSigma\prn{A,B}}
    }
    \infer1{
      \core{\Gamma};\core{\varrho} \vdash \raw{\TmFst E} \syn \core{\TmFst\brc{A,B}\prn{E'}} \in \core{A}
    }
  }
  \and
  \ebrule[syn-proj2]{
    \hypo{
      \core{\Gamma};\core{\varrho} \vdash \raw{E} \syn \core{E'} \in \core{\TpSigma\prn{A,B}}
    }
    \infer1{
      \core{\Gamma};\core{\varrho} \vdash \raw{\TmSnd E} \syn \core{\TmSnd\brc{A,B}\prn{E'}} \in \core{B[x]}
    }
  }
  \and
  \ebrule[chk-code]{
    \hypo{\core{\Gamma};\core{\varrho} \vdash \raw{A} \chktp[i] \core{A'}}
    \infer1{\core{\Gamma};\core{\varrho} \vdash \core{\TpUniv_i} \ni \raw{A} \chk \core{\Shrink{i}{A'}} }
  }
  \and
  \ebrule[chk-pair]{
    \hypo{
      \core{\Gamma,x:A}; \core{\varrho} \vdash \core{A} \ni \raw{M} \chk \core{M'}
    }
    \hypo{
      \core{\Gamma,x:A}; \core{\varrho}\brk{\raw{x}\mapsto \core{x}\in\core{A}} \vdash \core{B\brk{x}} \ni \raw{N} \chk \core{N'\brk{x}}
    }
    \infer2{
      \core{\Gamma};\core{\varrho} \vdash \core{\TpSigma\prn{A,B}}
      \ni
      \raw{\prn{M,N}} \chk \core{\TmPair\brc{A,B}\prn{M',N'}}
    }
  }
  \and
  \ebrule[chk-lambda]{
    \hypo{
      \core{\Gamma,x:A}; \core{\varrho}\brk{\raw{x}\mapsto \core{x}\in\core{A}} \vdash \core{B\brk{x}} \ni \raw{M} \chk \core{M'\brk{x}}
    }
    \infer1{
      \core{\Gamma};\core{\varrho} \vdash \core{\TpPi\prn{A,B}}
      \ni
      \raw{\lambda x\mathpunct{.}M} \chk \core{\TmLam\brc{A,B}\prn{M'}}
    }
  }
  \and
  \ebrule[chk-lambda-lvl]{
    \hypo{
      \core{\Gamma,\alpha:\SortLvl}; \core{\varrho}\brk{\raw{\alpha}\mapsto \core{\alpha}\in\core{\SortLvl}} \vdash \core{B\brk{\alpha}} \ni \raw{M} \chk \core{M'\brk{\alpha}}
    }
    \infer1{
      \core{\Gamma};\core{\varrho} \vdash \core{\TpLvlPi\prn{A}}
      \ni
      \raw{\ConcLvlLam{\alpha}{M}} \chk \core{\TmLvlLam\brc{A}\prn{M'}}
    }
  }
  \and
  \ebrule[switch]{
    \hypo{\core{\Gamma};\core{\varrho} \vdash \raw{E} \syn \core{E'} \in \core{B}}
    \hypo{\core{\Gamma}\vdash \coe \core{E'} : \core{B} \Rightarrow \core{A} \rightsquigarrow \core{E'}}
    \infer2{\core{\Gamma};\core{\varrho} \vdash \core{A} \ni \raw{E} \chk \core{E'}}
  }
  \and
  \ebrule[chktp-univ]{
    \hypo{\core{\Gamma} \vdash \core{i^+} \leq \core{j^?}}
    \infer1{\core{\Gamma};\core{\varrho} \vdash \raw{\TpUniv_i} \chktp[j^?] \core{\TpUniv_i} }
  }
  \and
  \ebrule[chktp-pi]{
    \hypo{\core{\Gamma};\core{\varrho} \vdash \raw{A} \chktp[i^?] \core{A'} }
    \hypo{\core{\Gamma,x:A'};\core{\varrho}\brk{\raw{x}\mapsto \prn{\core{x}\in \core{A'}}} \vdash \raw{B} \chktp[i^?] \core{B'\brk{x}} }
    \infer2{\core{\Gamma};\core{\varrho} \vdash \raw{(x:A)\to B} \chktp[i^?] \core{\TpPi(A',B')} }
  }
  \and
  \ebrule[chktp-pi-lvl]{
    \hypo{\core{\Gamma,\alpha:\SortLvl};\core{\varrho}\brk{\raw{\alpha}\mapsto \prn{\core{\alpha}\in \core{\SortLvl}}} \vdash \raw{A} \chktp[i^?] \core{A'\brk{\alpha}} }
    \infer1{\core{\Gamma};\core{\varrho} \vdash \raw{\ConcLvlPi{\alpha}{A}} \chktp[i^?] \core{\TpLvlPi\prn{A'}} }
  }
  \and
  \ebrule[chktp-sg]{
    \hypo{\core{\Gamma};\core{\varrho} \vdash \raw{A} \chktp[i^?] \core{A'} }
    \hypo{\core{\Gamma,x:A'};\core{\varrho}\brk{\raw{x}\mapsto \prn{\core{x}\in \core{A'}}} \vdash \raw{B} \chktp[i^?] \core{B'\brk{x}} }
    \infer2{\core{\Gamma};\core{\varrho} \vdash \raw{(x:A)\times B} \chktp[i^?] \core{\TpSigma\prn{A',B'}} }
  }
  \and
  \ebrule[switch-ty]{
    \hypo{\core{\Gamma};\core{\varrho} \vdash \raw{E} \syn \core{E'} \in \core{\TpUniv_j} }
    \hypo{
      \core{\Gamma}\vdash \core{j}\leq \core{i^?}
    }
    \infer2{
      \core{\Gamma};\core{\varrho} \vdash \raw{E} \chktp[i^?] \core{\Enlarge{j}{E'}}
    }
  }
\end{mathparsmall}

We now describe the coercion judgement $\Gamma \vdash \coe \core{M} : \core{B} \Rightarrow \core{A} \rightsquigarrow \core{M'}$, which coerces
the term $\core{M}$ of type $\core{A}$ to a term $\core{M'}$ of type $\core{B}$, and falls back to the conversion checker in the appropriate cases.
\begin{mathparsmall}
  \ebrule[coe-pi]{
    \hypo{
      \core{\Gamma,x:A'} \vdash \coe \core{x}: \core{A'} \Rightarrow \core{A} \rightsquigarrow \core{u\brk{x}}
    }
    \hypo{
      \core{\Gamma,x:A'} \vdash \coe \core{M\brk{u\brk{x}}}: \core{B\brk{u\brk{x}}} \Rightarrow \core{B'\brk{x}} \rightsquigarrow \core{N\brk{x}}
    }
    \infer2{\Gamma \vdash \coe \core{M} : \core{\TpPi(A,B)} \Rightarrow \core{\TpPi(A',B')} \rightsquigarrow \core{\TmLam\brc{A',B'}\prn{N}}}
  }
  \and
  \ebrule[coe-lvl-pi]{
    \hypo{
      \core{\Gamma,\alpha:\SortLvl} \vdash \coe \core{M\brk{\alpha}}: \core{A\brk{\alpha}} \Rightarrow \core{A'\brk{\alpha}} \rightsquigarrow \core{N\brk{\alpha}}
    }
    \infer1{\core{\Gamma} \vdash \coe \core{M} : \core{\TpLvlPi\prn{A}} \Rightarrow \core{\TpLvlPi\prn{A'}} \rightsquigarrow \core{\TmLvlLam\brc{A'}\prn{N}}}
  }
  \and
  \ebrule[coe-sg]{
    \hypo{
      \core{\Gamma} \vdash \coe \core{\TmFst\brc{A,B}\prn{M}}: \core{A} \Rightarrow \core{A'} \rightsquigarrow \core{N_0}
    }
    \hypo{
      \core{\Gamma,x:A} \vdash \coe \core{\TmSnd\brc{A,B}\prn{M}}: \core{B\brk{x}} \Rightarrow \core{B'\brk{N_0}} \rightsquigarrow \core{N_1\brk{N_0}}
    }
    \infer2{\core{\Gamma} \vdash \coe \core{M} : \core{\TpSigma\prn{A,B}} \Rightarrow \core{\TpSigma\prn{A',B'}} \rightsquigarrow \core{\TmPair\brc{A',B'}\prn{N_0, N_1}}}
  }
  \and
  \ebrule[coe-lift]{
    \hypo{\strut\core{\Gamma} \vdash \core{i} \leq \core{j}}
    \infer1{\Gamma \vdash \coe \core{M} : \core{\TpUniv_i} \Rightarrow \core{\TpUniv_j} \rightsquigarrow\core{\Shrink{j}\Enlarge{i} M}}
  }
  \and
  \ebrule[coe-conv]{
    \hypo{\strut\core{\Gamma} \vdash \core{A} \equiv \core{B}}
    \infer1{\core{\Gamma} \vdash \coe \core{M} : \core{A} \Rightarrow \core{B} \rightsquigarrow\core{M}}
  }
\end{mathparsmall}

\section{Tutorial implementation of cumulative universes}\label{sec:tutorial-implementation}

In this section, we describe more concretely how to implement a theory with a cumulative universe hierarchy, by taking advantage of the
fuss-free presentation and the algorithm described in \zcref{sec:elaboration}. The running example will be a practical and minimal (with a core of only 500 lines) implementation in Haskell, which
is provided as supplementary material to the present paper, and is building upon the minimal and optimised implementations of Kov\'acs \cite{elabzoo, smalltt}.

In addition to what is described in this section, we also provide an implementation of the fuss-free datatypes from \zcref{sec:fuss-free-datatypes} with a user-friendly surface syntax \`a la \citet{dagand:2013}, along with record types. Our implementation of datatypes differs from the discussion in \zcref{sec:cumulative-inductive-types} in one way: in order to allow the definition of datatypes like $\mathsf{List}$ that are parameterised in other types, we extend the universe hierarchy by a maximal level instead of allowing the user to write code that targets the \emph{sort} of all types directly.

\subsection{Evaluation and Conversion checking}

As described in the introduction, our fuss-free approach provides great simplification for the core language of a practical system, notably for the evaluation algorithm.
In the following, we describe in more detail how to implement a simple and efficient evaluation algorithm for the theory \ThyFussFree.
We will use strings for variable names (\texttt{Name}), along with integers for De Bruijn indexes (\texttt{Ix}).

\begin{lstlisting}
newtype Ix = Ix Int deriving (Eq, Show, Num) via Int
type Name = String
\end{lstlisting}

The user-facing raw syntax described in \zcref{sec:elaboration} corresponds to the following datatype.

\begin{lstlisting}
data RawSize = RSzVar Name | RSz Int | RSucc RawSize deriving Show
data Raw
  = RVar Name              $\qquad \raw{x}$
  | RPi Name Raw Raw       $\qquad \raw{(x : A) \to B}$
  | RLam Name Raw          $\qquad \raw{\lambda x\mathpunct{.}}$
  | RApp Raw Raw           $\qquad \raw{E}\;\raw{M}$
  | RU RawSize             $\qquad \raw{\TpUniv_i}$
  | RLPi Name Raw          $\qquad \raw{\ConcLvlPi{\alpha}{A}}$
  | RLAbs Name Raw         $\qquad \raw{\ConcLvlLam{\alpha}{M}}$
  | RLApp Raw RawSize      $\qquad \raw{\ConcLvlApp{M}{i}}$
\end{lstlisting}

We will elaborate this to the core theory \ThyFussFree, which itself is implemented as follows. We deviate from the fully algebraic syntax in a few ways: we leave out some annotations (for efficiency), and we remember the concrete names of variables (for printing error messages). The missing annotations in the core representations are present in the trace of the bidirectional elaboration algorithm, but it is not efficient to explicitly store them.

\begin{lstlisting}
data Size = LVar Ix | Zero | Succ Size
data Tp
  = Pi Name ~Tp Tp         $\qquad \core{\TpPi\prn{A,B}}$
  | U Size                 $\qquad \core{\TpUniv_i}$
  | Decode Size Tm         $\qquad \core{\Enlarge{i}{M}}$
  | LPi Name Tp            $\qquad \core{\ConcLvlPi{\alpha}{A}}$
data Tm
  = Var Ix                 $\qquad \core{x}$
  | App Tm ~Tm             $\qquad \core{\TmApp\brc{A,B}\prn{M,N}}$
  | Lam Name Tm            $\qquad \core{\TmLam\brc{A,B}\prn{M}}$
  | Code Size Tp           $\qquad \core{\Shrink{i}{A}}$
  | LAbs Name Tm           $\qquad \core{\TmLvlLam\brc{A}\prn{M}}$
  | LApp Tm Size           $\qquad \core{\TmLvlApp\brc{A}\prn{M,i}}$
\end{lstlisting}

Following \citet{abel-2013}, we use De Bruijn \emph{levels} for variables in the semantic domains, which counts in the opposite way to indices.
This technicality allows us to define an optimised algorithm that does not rely on any lifting of De Bruijn indices, with an helper function \texttt{lvl2Ix}.

\begin{lstlisting}
newtype Lvl = Lvl Int deriving (Eq, Show, Num) via Int
lvl2Ix (Lvl l) (Lvl x) = Ix (l - x - 1)
\end{lstlisting}

The core terms (\texttt{Tm}) and types (\texttt{Tp}) will evaluate to the semantic domains \texttt{VTp} and \texttt{VTm}, defined by the following datatypes.

\begin{lstlisting}
data VSize = VLVar Lvl | VZero | VSucc VSize
data VTp
  = VPi Name ~VTp (VTm -> VTp)
  | VU VSize
  | VDecode VSize VTm
  | VLPi Name (VSize -> VTp)
data VTm
  = VVar Lvl
  | VApp VTm ~VTm
  | VLam Name (VTm -> VTm)
  | VCode VSize VTp
  | VLAbs Name (VSize -> VTm)
  | VLApp VTm VSize
\end{lstlisting}

Normalisation by Evaluation (NbE) amounts to writing an evaluation function from the syntax to this semantic domain, along
with a quotation function which maps back values to the corresponding normal forms in the syntax. Any term $\core{t}$ will be evaluated
with respect to an environment (\texttt{Env}), which is a list of semantic terms providing values for the free variables.

\begin{lstlisting}
data VEnvVal = VETm VTm | VESize VSize
type Env = [VEnvVal]

evalSize :: Env -> Size -> VSize
evalTm :: Env -> Tm -> VTm
evalTp :: Env -> Tp -> VTp
quoteSize :: Lvl -> VSize -> Size
quoteTm :: Lvl -> VTm -> Tm
quoteTp :: Lvl -> VTp -> Tp
convTm :: Lvl -> VTm -> VTm -> Bool
convTp :: Lvl -> VTp -> VTp -> Bool
\end{lstlisting}

We now describe the evaluation functions, emphasizing rules associated to universes.
The functions \texttt{quoteTm} and \texttt{quoteTp} and the conversion algorithm are rather standard, and no real complication is added by the Fuss-Free presentation.

\begin{snippet}[\texttt{evalTm} and \texttt{evalTp}]
We present the rules associated to level polymorphism, which are almost identical to the standard rules for dependent products and lambda abstractions. To evaluate a decoding, we apply the $\beta$-rule for the universe; this is a key point at which our algorithm deviates from NbE for non-fuss-free universes.

\begin{lstlisting}
evalTm env = \case
  {- ... -}
  LAbs x t -> VLAbs x \i -> evalTm (VESize i : env) t
  LApp t s -> case evalTm env t of
    VLAbs _ f -> f (evalSize env s)
    f -> VLApp f (evalSize env s)

evalTp env = \case
  {- ... -}
  LPi x t -> VLPi x \i -> evalTp (VESize i : env) t
  Decode i t -> case evalTm env t of
    VCode j a -> a  -- beta-rule for the Universe
    v -> VDecode (evalSize env i) v
\end{lstlisting}
\end{snippet}

\subsection{Elaboration in practice}
We now describe the type-checking algorithm in a detailed way.
The judgements of \zcref{sec:elaboration} correspond to mutually defined recursive functions, with signatures as follows.
\begin{lstlisting}
infer :: Cxt -> Raw -> M (Tm, VTp)
check :: Cxt -> Raw -> VTp -> M Tm
checkTp :: Cxt -> Raw -> Maybe Size -> M Tp
coe :: Cxt -> Lvl -> VTp -> VTp -> Tm -> M Tm
\end{lstlisting}

Where \texttt{M} is the type-checking monad, which handles the two possible outcomes of the elaboration algorithm: either we return a pure term/type, or an error annotated with a source position.
\begin{lstlisting}
type M = Either (String, SourcePos)
\end{lstlisting}

Along with this, we need a function \texttt{elabSize} which elaborates raw levels $\raw{i}$.
\begin{lstlisting}
elabSize :: Cxt -> RawSize -> M Size
\end{lstlisting}

More concretely, the elaboration rules of \zcref{sec:elaboration} translate to the different match cases defining the corresponding recursive functions.
If we compare to standard type-checking algorithms \cite{coquand-type-checking}, our main structural difference is the optional size parameter of the function \texttt{checkTp}.
Level polymorphism is handled rather easily, with very similar rules to the ones for regular products.

\begin{snippet}[\texttt{checkTp}]
The rules defining the judgement $\core{\Gamma};\core{\varrho} \vdash \raw{A} \chktp[i^?] \core{A'}$ give rise to the following recursive function \texttt{checkTp}.
Note that \texttt{size} is either \texttt{Nothing} or \texttt{Just sz}, depending on whether we need to check the type to be small or not, as explained in \zcref{sec:elaboration}.
\begin{lstlisting}[mathescape=false]
checkTp cxt t size = case t of
  {- ... -}
  RU s -> do
    s <- elabSize cxt a
    case size of
      Nothing -> pure $ U s
      Just sz -> do
        let vs = evalSize (env cxt) s
        if vs < sz then pure $ U s
        else report cxt "Size issue"
  _ -> do
    (t, a) <- infer cxt t
    case a of
      VU i -> case size of
        Nothing -> pure (Decode (quoteSize (lvl cxt) i) t)
        Just sz -> do
          if i <= sz then pure (Decode (quoteSize (lvl cxt) i) t)
          else report cxt "Size issue"
      _ -> report cxt "Elaboration error: Expected a code"
\end{lstlisting}
\end{snippet}

For most cases, the functions \texttt{check} and \texttt{infer} are very similar to standard bidirectional type checking algorithms, and many rules will be omitted to focus on those related to universes and levels.

\begin{snippet}[\texttt{check} and \texttt{infer}]
The rules defining the checking and synthesis judgements  $\core{\Gamma};\core{\varrho} \vdash \core{A} \ni \raw{M} \chk \core{M'}$ and $\core{\Gamma};\core{\varrho} \vdash \raw{E} \syn \core{M} \in \core{A}$ give rise to the following recursive functions.
\begin{lstlisting}[mathescape=false]
check cxt t a = case (t, a) of
  (_, VU i) -> do
    u <- checkTp cxt t (Just i)
    pure $ Code (quoteSize (lvl cxt) i) u
  _ -> do
    (m, bTp) <- infer cxt t
    coe cxt (lvl cxt) bTp a m

infer cxt = \case
  {- ... -}
  RLApp t s -> do
    (tTm, tTp) <- infer cxt t
    case tTp of
      VLPi _ b -> do
        sz <- elabSize cxt s
        let vsz = evalSize (env cxt) sz
        pure (LApp tTm sz, b vsz)
      _ -> report cxt "Expected a level-polymorphic type"
\end{lstlisting}
\end{snippet}

The mode switch rule of the function \texttt{check} calls the function \texttt{coe}, which we now describe.

\begin{snippet}[\texttt{coe}]
The function \texttt{coe} correspond to the operation $\coe \core{E} : \core{B} \Rightarrow \core{A} \rightsquigarrow \core{E'}$.

\begin{lstlisting}[mathescape=false]
coe cxt l sourceTp targetTp m = case (sourceTp, targetTp) of
  (VU i, VU j) | i <= j ->
      pure $ Code j (Decode i m)
  (VPi n1 a1 b1, VPi n2 a2 b2) -> do
    let cxt' = bind n2 a2 cxt
    u_x <- coe cxt' (lvl cxt') a2 a1 (Var (Ix 0))

    let vux = evalTm (env cxt') ux
    let vm = evalTm (env cxt) m
    let mux = quoteTm (lvl cxt') (vApp vm vux)

    nx <- coe cxt' (lvl cxt') (b1 vux) (b2 (VVar l)) mux

    pure $ Lam n2 nx
  _ ->
    if convTp l sourceTp targetTp then pure m
    else report cxt "Error: Invalid coercion"
\end{lstlisting}
\end{snippet}

\section{Conclusions and future work}

We have contributed the \emph{fuss-free} presentation of cumulative universes, which we believe is the closest thing possible to an algebraic version of Martin-L\"of's universes \`a la Russell extended by universe polymorphism. We proved that system equivalent to a more intricate theory with coherent level coercions in the style of \citet{bcde-2022} by proving the normalisation theorem for the latter. We have seen that the fuss-free presentation scales effortlessly to an account of cumulative inductive datatypes via judgemental version of datatype descriptions. We believe that our design is appropriate for implementation in future dependently typed programming languages and proof assistants.
There are a few areas where further work could be beneficial.

\paragraph{Unification and coercions.}
It is well known that higher-order unification and subtyping coercions do not play well together in existing systems; the issue has to do with \emph{scheduling}. If one is not especially careful, solving a metavariable could be the thing that makes a type small, which means that elaboration could become sensitive to the order in which cumulativity coercions and solutions to unification problems are resolved. 

As ever, the solution must be a principled account of the \emph{postponement} of both coercion and unification problems. Our point of view is that we should deal with those two problems in a \textit{uniform} way, by formally treating unification as a special case of coercion.
In recent work, \citet{kovacs-observing-equality} has introduced a new algorithm to deal with postponed problems using \emph{observational} coercions.
We believe that extending the observational coercions of Kov\'acs' algorithm to handle subtyping coercions would improve the behaviour of such a system. 
Relating subtyping and observational type theory closely relates to the work of \citet{adaptt}, which gives a unifying framework for functorial casts. Whether the full power of \emph{op.\ cit.}\ is required remains to be seen.

\paragraph{Models of cumulativity.}
In this paper we have focused almost entirely on syntax, but even if it is internally consistent, there is little use in a type theory that cannot be interpreted in a \emph{variety} of different categories. For non-univalent type theory, the situation is very good: using the results of \citet{gratzer-shulman-sterling-2026}, we have direct models of \ThyFussFree{} in any Grothendieck topos (including sheaf toposes). For higher toposes, the matter is slightly murkier: \citet{shulman-2019} describes a model construction in which the universe lifting coercions are not injective, but it seems very likely that by unravelling his 2-dimensional small object argument more explicitly we would be able to choose injective lifting coercions.\footnote{One might have hoped instead to factor the interpretation through the equivalence between \ThyMLTTNat{} and \ThyFussFree{}, but the usual way to obtain codes for connectives that are natural with respect to universe levels is to use \emph{realignment}---which depends on the coercions already being injective.} Finally, a constructive version of these model constructions remains forthcoming.
 
\bibliography{refs}

@article{bezem-coquand-2021,
    title = {Loop-checking and the uniform word problem for join-semilattices with an inflationary endomorphism},
    journal = {Theoretical Computer Science},
    year = {2022},
    doi = {https://doi.org/10.1016/j.tcs.2022.01.017},
    author = {Marc Bezem and Thierry Coquand},
}

@misc{int-tree-issues,
    author = {Zakowski, Yannick},
    title = {Universe inconsistency when paired with RelationAlgebra (\#254)},
    note = {Ticket on DeepSpec/InteractionTrees repository related to universe inconsistencies when importing other libraries},
    year = {2023},
    month = mar,
    day = {6},
    url = {https://github.com/DeepSpec/InteractionTrees/issues/254}
}

@article{dfk-2026,
    author = {Danielsson, Nils Anders and Favier, Na\"{\i}m Camille and Kub\'{a}nek, Ond\v{r}ej},
    title = {Normalisation for First-Class Universe Levels},
    year = {2026},
    number = {POPL},
    doi = {10.1145/3776645},
    journal = {Proc. ACM Program. Lang.},
}

@software{foveran,
    title = {Foveran},
    url = {https://github.com/bobatkey/foveran/tree/master},
	author = {Atkey, Bob},
    year = {2016}
}

@software{elabzoo,
    note = {A series of Haskell implementations for elaborating dependently typed languages},
    title = {Elaboration Zoo},
    url = {https://github.com/AndrasKovacs/elaboration-zoo/tree/master},
	author = {Kov\'acs, Andr\'as},
    year = {2016}
}

@software{smalltt,
    title = {SmallTT},
    note = {Demo project for several techniques for high-performance elaboration with dependent types},
    url = {https://github.com/AndrasKovacs/smalltt/tree/master},
	author = {Kov\'acs, Andr\'as},
    year = {2021}
}

@misc{coquand-sterbac-2026,
	title = {Cumulative hierarchies of universes and their equivalence in dependent type theory},
	howpublished = {Talk at TYPES},
	author = {Coquand, Thierry and Sterbac, Raphaël},
	year         = {2026},
    url = {https://types2026.cse.chalmers.se/abstracts/23.pdf},
}

@misc{bezem-sozeau-2025,
	title = {Algebraic Universes and Variances For All},
	howpublished = {Talk at TYPES},
	author = {Sozeau, Matthieu and Bezem, Marc},
	year         = {2025},
    url = {https://msp.cis.strath.ac.uk/types2025/abstracts/TYPES2025_paper21.pdf},
}

@misc{lynch-2026,
	title = {An Algebraic Approach to the Static/Dynamic Phase Distinction for Module Calculi},
	howpublished = {Talk at TYPES},
	author = {Lynch, Owen},
	year         = {2026},
    url = {https://types2026.cse.chalmers.se/abstracts/34.pdf},
}

@InProceedings{timany-jacobs-2015,
author="Timany, Amin
and Jacobs, Bart",
editor="Leucker, Martin
and Rueda, Camilo
and Valencia, Frank D.",
title="First Steps Towards Cumulative Inductive Types in CIC",
booktitle="Theoretical Aspects of Computing - ICTAC 2015",
year="2015",
publisher="Springer International Publishing",
address="Cham",
pages="608--617",
isbn="978-3-319-25150-9"
}

@article{coquand-type-checking,
title = {An algorithm for type-checking dependent types},
journal = {Science of Computer Programming},
year = {1996},
doi = {https://doi.org/10.1016/0167-6423(95)00021-6},
author = {Thierry Coquand},
}

@misc{kovacs-observing-equality,
	title = {Observing {Definitional} {Equality}},
	howpublished = {A talk given at WITS},
	author = {Kov\'acs, Andr\'as},
	year         = {2026},
}

@inproceedings{fiore:2002,
	author = {Fiore, Marcelo P.},
	address = {Pittsburgh, PA, USA},
	publisher = acm,
	booktitle = {Proceedings of the 4th ACM SIGPLAN International Conference on Principles and Practice of Declarative Programming},
	year = {2002},
	doi = {10.1145/571157.571161},
	isbn = {1-58113-528-9},
	pages = {26--37},
	series = {PPDP '02},
	title = {Semantic Analysis of Normalisation by Evaluation for Typed Lambda Calculus},
}

@article{adaptt,
    author = {Adjedj, Arthur and Lennon-Bertrand, Meven and Benjamin, Thibaut and Maillard, Kenji},
    title = {AdapTT: Functoriality for Dependent Type Casts},
    year = {2026},
    number = {POPL},
    doi = {10.1145/3776664},
    journal = {Proc. ACM Program. Lang.},
}

@misc{BocquetPhD,
  author =	{Bocquet, Rafa\"{e}l},
  title =	{{Relative induction principles for second-order generalized algebraic theories}},
  year =	{2026},
  howpublished = {PhD thesis},
  url = {https://rafaelbocquet.gitlab.io/pdfs/thesis.pdf}
}

@InProceedings{sconing,
  author =	{Bocquet, Rafa\"{e}l and Kaposi, Ambrus and Sattler, Christian},
  title =	{{For the Metatheory of Type Theory, Internal Sconing Is Enough}},
  year =	{2023},
  doi =		{10.4230/LIPIcs.FSCD.2023.18},
}

@article{mclarty-2020, title={The large structures of {Grothendieck} founded on finite-order arithmetic}, volume={13}, DOI={10.1017/S1755020319000340}, number={2}, journal={The Review of Symbolic Logic}, author={McLarty, Colin}, year={2020}, pages={296–325}}

@unpublished{shulman-2019,
	author = {Shulman, Michael},
	year = {2019},
	month = apr,
	eprint = {1904.07004},
	eprinttype = {arXiv},
	note = {Unpublished manuscript},
	title = {All $(\infty,1)$-toposes have strict univalent universes},
}

@book{martin-lof-1984,
	author = {Martin-L\"{o}f, Per},
	publisher = {Bibliopolis},
	year = {1984},
	isbn = {88-7088-105-9},
	pages = {iv+91},
	series = {Studies in Proof Theory},
	subtitle = {Notes by Giovanni Sambin},
	title = {Intuitionistic type theory},
	volume = {1},
}

@misc{dybjer-2025-universes,
	author       = {Peter Dybjer},
	title        = {Are Universes Open or Closed?},
	howpublished = {A talk given at \emph{Types and Topology: A workshop in honour of Martín Escardó's 60th birthday}},
	month        = dec,
	year         = {2025},
	note         = {Birmingham, 17--18 December 2025},
	url = {https://tdejong.com/mhe60/slides/dybjer.pdf}
}

@phdthesis{barras-2012,
	author       = {Bruno Barras},
	title        = {Semantical Investigations in Intuitionistic Set Theory and Type Theories with Inductive Families},
	howpublished = {Université Paris 7 --- Denis Diderot},
	year         = {2012},
	type         = {Habilitation à Diriger des Recherches (HDR)},
	url = {https://lsv.ens-paris-saclay.fr/~barras/habilitation/barras-habilitation.pdf}
}

@InProceedings{timany-sozeau-2018,
	author =	{Timany, Amin and Sozeau, Matthieu},
	title =	{{Cumulative Inductive Types In Coq}},
	booktitle =	{3rd International Conference on Formal Structures for Computation and Deduction (FSCD 2018)},
	pages =	{29:1--29:16},
	series =	{Leibniz International Proceedings in Informatics (LIPIcs)},
	ISBN =	{978-3-95977-077-4},
	ISSN =	{1868-8969},
	year =	{2018},
	volume =	{108},
	editor =	{Kirchner, H\'{e}l\`{e}ne},
	publisher =	{Schloss Dagstuhl -- Leibniz-Zentrum f{\"u}r Informatik},
	address =	{Dagstuhl, Germany},
	URL =		{https://drops.dagstuhl.de/entities/document/10.4230/LIPIcs.FSCD.2018.29},
	URN =		{urn:nbn:de:0030-drops-91991},
	doi =		{10.4230/LIPIcs.FSCD.2018.29}
}

@phdthesis{dagand:2013,
	author = {Dagand, Pierre-\'{E}variste},
	address = {Glasgow, Scotland},
	school = {University of Strathclyde},
	year = {2013},
	month = aug,
	title = {A Cosmology of Datatypes: Reusability and Dependent Types},
}

@inproceedings{levitation, author = {Chapman, James and Dagand, Pierre-\'{E}variste and McBride, Conor and Morris, Peter}, title = {The gentle art of levitation}, year = {2010}, isbn = {9781605587943}, publisher = {Association for Computing Machinery}, address = {New York, NY, USA}, url = {https://doi.org/10.1145/1863543.1863547}, doi = {10.1145/1863543.1863547}, booktitle = {Proceedings of the 15th ACM SIGPLAN International Conference on Functional Programming}, pages = {3–14}, numpages = {12}, location = {Baltimore, Maryland, USA}, series = {ICFP '10} }

@incollection{barendregt-1992,
	author = {Barendregt, Hendrik Pieter},
	booktitle = {Handbook of Logic in Computer Science},
	doi = {10.1093/oso/9780198537618.003.0002},
	isbn = {9780198537618},
	month = {12},
	publisher = {Oxford University Press},
	title = {Lambda Calculi with Types},
	year = {1992},
}

@inproceedings{luo-1997,
	author = {Luo, Zhaohui},
	editor = {van Dalen, Dirk and Bezem, Marc},
	address = {Berlin, Heidelberg},
	publisher = {Springer Berlin Heidelberg},
	booktitle = {Computer Science Logic},
	year = {1997},
	isbn = {978-3-540-69201-0},
	pages = {275--296},
	title = {Coercive subtyping in type theory},
}

@unpublished{coquand-2012-presheaf-model,
	author       = {Thierry Coquand},
	title        = {Presheaf model of type theory},
	note = {Unpublished manuscript},
	year         = {2012},
	url         = {http://www.cse.chalmers.se/~coquand/presheaf.pdf},
}

@article{harper-pollack-1991,
title = {Type checking with universes},
journal = {Theoretical Computer Science},
volume = {89},
number = {1},
pages = {107-136},
year = {1991},
issn = {0304-3975},
doi = {10.1016/0304-3975(90)90108-T},
author = {Robert Harper and Robert Pollack},
}

@InProceedings{sozeau-tabareau-2014,
author="Sozeau, Matthieu
and Tabareau, Nicolas",
editor="Klein, Gerwin
and Gamboa, Ruben",
title="Universe Polymorphism in Coq",
booktitle="Interactive Theorem Proving",
year="2014",
publisher="Springer International Publishing",
address="Cham",
pages="499--514",
isbn="978-3-319-08970-6"
}

@InProceedings{bcde-2022,
	author =	{Bezem, Marc and Coquand, Thierry and Dybjer, Peter and Escard\'{o}, Mart{\'\i}n},
	title =	{{Type Theory with Explicit Universe Polymorphism}},
	booktitle =	{28th International Conference on Types for Proofs and Programs (TYPES 2022)},
	pages =	{13:1--13:16},
	series =	{Leibniz International Proceedings in Informatics (LIPIcs)},
	ISBN =	{978-3-95977-285-3},
	ISSN =	{1868-8969},
	year =	{2023},
	volume =	{269},
	editor =	{Kesner, Delia and P\'{e}drot, Pierre-Marie},
	publisher =	{Schloss Dagstuhl -- Leibniz-Zentrum f{\"u}r Informatik},
	address =	{Dagstuhl, Germany},
	URL =		{https://drops.dagstuhl.de/entities/document/10.4230/LIPIcs.TYPES.2022.13},
	URN =		{urn:nbn:de:0030-drops-184564},
	doi =		{10.4230/LIPIcs.TYPES.2022.13}
}

@incollection{feferman-2004,
title = {Typical Ambiguity: Trying to Have Your Cake and Eat It Too},
booktitle = {One Hundred Years of Russell's Paradox},
author = {Solomon Feferman},
editor = {Godehard Link},
publisher = {De Gruyter},
address = {Berlin, New York},
pages = {135--152},
doi = {10.1515/9783110199680.135},
isbn = {9783110199680},
year = {2004},
}

@misc{mcbride-2002-crude,
	author       = {Conor McBride},
	title        = {Crude but Effective Stratification},
	year         = {2002},
	url = {https://personal.cis.strath.ac.uk/conor.mcbride/Crude.pdf},
}

@article{favonia-angiuli-mullanix-2023,
author = {Hou (Favonia), Kuen-Bang and Angiuli, Carlo and Mullanix, Reed},
title = {An Order-Theoretic Analysis of Universe Polymorphism},
year = {2023},
issue_date = {January 2023},
publisher = {Association for Computing Machinery},
address = {New York, NY, USA},
volume = {7},
number = {POPL},
url = {https://doi.org/10.1145/3571250},
doi = {10.1145/3571250},
journal = {Proc. ACM Program. Lang.},
month = jan,
articleno = {57},
numpages = {27}
}

@inproceedings{bmss-2011,
	author = {Birkedal, Lars and M{\o{}}gelberg, Rasmus Ejlers and Schwinghammer, Jan and St\o{}vring, Kristian},
	address = {Washington, DC, USA},
	publisher = {IEEE Computer Society},
	booktitle = {Proceedings of the 2011 IEEE 26th Annual Symposium on Logic in Computer Science},
	year = {2011},
	doi = {10.1109/LICS.2011.16},
	eprint = {1208.3596},
	eprintclass = {cs.LO},
	eprinttype = {arXiv},
	isbn = {978-0-7695-4412-0},
	pages = {55--64},
	title = {First Steps in Synthetic Guarded Domain Theory: Step-Indexing in the Topos of Trees},
}

@article{cherubini-coquand-hutzler-2024,
	author = {Cherubini, Felix and Coquand, Thierry and Hutzler, Matthias},
	year = {2024},
	doi = {10.1017/S0960129524000239},
	journal = {Mathematical Structures in Computer Science},
	number = {9},
	pages = {1008--1053},
	title = {A foundation for synthetic algebraic geometry},
	volume = {34},
}

@phdthesis{reus-1995,
	author = {Reus, Bernhard},
	address = {M\"{u}nchen},
	school = {Ludwig-Maximilians-Universit\"{a}t M\"{u}nchen},
	year = {1995},
	month = nov,
	title = {Program Verification in Synthetic Domain Theory},
}

@article{grodin-li-harper-2026,
author = {Grodin, Harrison and Li, Runming and Harper, Robert},
title = {Abstraction Functions as Types: Modular Verification of Cost and Behavior in Dependent Type Theory},
year = {2026},
issue_date = {January 2026},
publisher = {Association for Computing Machinery},
address = {New York, NY, USA},
volume = {10},
number = {POPL},
doi = {10.1145/3776673},
journal = {Proc. ACM Program. Lang.},
month = jan,
articleno = {31},
numpages = {28},
}

@misc{kammar-2021-pihoc,
	author       = {Ohad Kammar},
	title        = {Higher-order building blocks for statistical modelling},
	howpublished = {Talk at the Joint PPS-PIHOC-DIAPASoN Workshop},
	month        = feb,
	year         = {2021},
	url = {https://denotational.co.uk/talks/pihoc-2021-building-blocks.pdf},
}

@inproceedings{sterling-angiuli-2021,
	author = {Sterling, Jonathan and Angiuli, Carlo},
	address = {Los Alamitos, CA, USA},
	publisher = {IEEE Computer Society},
	booktitle = {2021 36th Annual ACM/IEEE Symposium on Logic in Computer Science (LICS)},
	year = {2021},
	month = jul,
	doi = {10.1109/LICS52264.2021.9470719},
	eprint = {2101.11479},
	eprintclass = {cs.LO},
	eprinttype = {arXiv},
	pages = {1--15},
	title = {Normalization for Cubical Type Theory},
}

@phdthesis{sterling-2021-thesis,
	author = {Sterling, Jonathan},
	school = {Carnegie Mellon University},
	year = {2021},
	doi = {10.5281/zenodo.6990769},
	note = {Version 1.1, revised May 2022},
	number = {CMU-CS-21-142},
	title = {First Steps in Synthetic {Tait} Computability: The Objective Metatheory of Cubical Type Theory},
}

@inproceedings{altenkirch-kaposi-2016-nbe,
	author = {Altenkirch, Thorsten and Kaposi, Ambrus},
	editor = {Kesner, Delia and Pientka, Brigitte},
	address = {Dagstuhl, Germany},
	publisher = {Schloss Dagstuhl--Leibniz-Zentrum fuer Informatik},
	url = {http://drops.dagstuhl.de/opus/volltexte/2016/5972},
	booktitle = {1st International Conference on Formal Structures for Computation and Deduction (FSCD 2016)},
	year = {2016},
	doi = {10.4230/LIPIcs.FSCD.2016.6},
	isbn = {978-3-95977-010-1},
	issn = {1868-8969},
	pages = {6:1--6:16},
	series = {Leibniz International Proceedings in Informatics (LIPIcs)},
	title = {{Normalisation by Evaluation for Dependent Types}},
	volume = {52},
}

@incollection{dybjer-1996,
	author = {Dybjer, Peter},
	editor = {Berardi, Stefano and Coppo, Mario},
	address = {Berlin, Heidelberg},
	publisher = {Springer Berlin Heidelberg},
	booktitle = {Types for Proofs and Programs: International Workshop, TYPES '95 Torino, Italy, June 5--8, 1995 Selected Papers},
	year = {1996},
	isbn = {978-3-540-70722-6},
	pages = {120--134},
	title = {Internal type theory},
}

@phdthesis{uemura-2021-thesis,
	author = {Uemura, Taichi},
	address = {Amsterdam},
	publisher = {Institute for Logic, Language and Computation},
	school = {Universiteit van Amsterdam},
	url = {https://www.illc.uva.nl/cms/Research/Publications/Dissertations/DS-2021-09.text.pdf},
	year = {2021},
	title = {Abstract and Concrete Type Theories},
}

@misc{bcde-2021,
			title={A Note on Generalized Algebraic Theories and Categories with Families},
			author={Marc Bezem and Thierry Coquand and Peter Dybjer and Martín Escardó},
			year={2021},
			eprint={2012.08370},
			archivePrefix={arXiv},
			primaryClass={math.CT},
			url={https://arxiv.org/abs/2012.08370},
}

@inproceedings{allen-1987-lics,
	author = {Allen, Stuart Frazier},
	address = {Ithaca, NY},
	publisher = {IEEE Computer Society},
	booktitle = {Proceedings of the Symposium on Logic in Computer Science (LICS '87)},
	year = {1987},
	pages = {215--221},
	title = {A Non-Type-Theoretic Definition of {Martin-L\"{o}f's} Types},
}

@incollection{martin-lof-1994,
	author = {Martin-L\"{o}f, Per},
	editor = {Parrini, Paolo},
	address = {Dordrecht},
	publisher = {Springer Netherlands},
	booktitle = {Kant and Contemporary Epistemology},
	year = {1994},
	isbn = {978-94-011-0834-8},
	pages = {87--99},
	title = {Analytic and Synthetic Judgements in Type Theory},
}

@article{gratzer-kavvos-birkedal-2021,
		title      = {Multimodal Dependent Type Theory},
		author     = {Daniel Gratzer and G. A. Kavvos and Andreas Nuyts and Lars Birkedal},
		url        = {https://lmcs.episciences.org/7571},
		doi        = {10.46298/lmcs-17(3:11)2021},
		journal    = {Logical Methods in Computer Science},
		issn       = {1860-5974},
		volume     = {Volume 17, Issue 3},
		eid        = 11,
		year       = {2021},
		month      = {Jul},
}

@article{gratzer-shulman-sterling-2026,
		title      = {Strict universes for Grothendieck topoi},
	    author = {Gratzer, Daniel and Shulman, Michael and Sterling, Jonathan},
		url        = {http://www.tac.mta.ca/tac/volumes/45/30/45-30.pdf},
		journal    = {Theory and Applications of Categories},
		year       = {2026},
}

@inproceedings{allais-mcbride-boutillier-2013,
author = {Allais, Guillaume and McBride, Conor and Boutillier, Pierre},
title = {New equations for neutral terms: a sound and complete decision procedure, formalized},
year = {2013},
isbn = {9781450323840},
publisher = {Association for Computing Machinery},
address = {New York, NY, USA},
doi = {10.1145/2502409.2502411},
booktitle = {Proceedings of the 2013 ACM SIGPLAN Workshop on Dependently-Typed Programming},
pages = {13–24},
numpages = {12},
location = {Boston, Massachusetts, USA},
series = {DTP '13}
}

@phdthesis{abel-2013,
	author = {Abel, Andreas},
	school = {Ludwig-Maximilians-Universit\"{a}t M\"{u}nchen},
	url = {http://www2.tcs.ifi.lmu.de/~abel/habil.pdf},
	year = {2013},
	title = {Normalization by Evaluation: Dependent Types and Impredicativity},
	type = {Habilitation},
}

@InProceedings{kaposi-xie-2024,
	author =	{Kaposi, Ambrus and Xie, Szumi},
	title =	{{Second-Order Generalised Algebraic Theories: Signatures and First-Order Semantics}},
	booktitle =	{9th International Conference on Formal Structures for Computation and Deduction (FSCD 2024)},
	pages =	{10:1--10:24},
	series =	{Leibniz International Proceedings in Informatics (LIPIcs)},
	ISBN =	{978-3-95977-323-2},
	ISSN =	{1868-8969},
	year =	{2024},
	volume =	{299},
	editor =	{Rehof, Jakob},
	publisher =	{Schloss Dagstuhl -- Leibniz-Zentrum f{\"u}r Informatik},
	address =	{Dagstuhl, Germany},
	URL =		{https://drops.dagstuhl.de/entities/document/10.4230/LIPIcs.FSCD.2024.10},
	URN =		{urn:nbn:de:0030-drops-203396},
	doi =		{10.4230/LIPIcs.FSCD.2024.10}
}

@unpublished{gratzer-sterling-2020,
	author = {Gratzer, Daniel and Sterling, Jonathan},
	year = {2020},
	eprint = {2012.10783},
	eprintclass = {cs.LO},
	eprinttype = {arXiv},
	note = {Unpublished manuscript},
	title = {Syntactic categories for dependent type theory: sketching and adequacy},
}

@article{shulman:2019:set-theory,
title = {Comparing material and structural set theories},
journal = {Annals of Pure and Applied Logic},
volume = {170},
number = {4},
pages = {465-504},
year = {2019},
issn = {0168-0072},
doi = {https://doi.org/10.1016/j.apal.2018.11.002},
url = {https://www.sciencedirect.com/science/article/pii/S0168007218301295},
author = {Michael Shulman},
}

@article{tait:1967,
	author = {Tait, W. W.},
	publisher = {Association for Symbolic Logic},
	url = {http://www.jstor.org/stable/2271658},
	year = {1967},
	issn = {00224812},
	journal = {Journal of Symbolic Logic},
	number = {2},
	pages = {198--212},
	title = {{Intensional Interpretations of Functionals of Finite Type I}},
	volume = {32},
}

@article{sterling-harper:2021,
  author = {Sterling, Jonathan and Harper, Robert},
  address = {New York, NY, USA},
  publisher ={Association for Computing Machinery},
  year = {2021},
  month = oct,
  doi = {10.1145/3474834},
  eprint = {2010.08599},
  eprintclass = {cs.PL},
  eprinttype = {arXiv},
  issn = {0004-5411},
  journal ={Journal of the ACM},
  number = {6},
  title = {Logical Relations as Types: Proof-Relevant Parametricity for Program Modules},
  volume = {68},
}
\bibliographystyle{ACM-Reference-Format}

\clearpage
\appendix
\section{Excursion: what is natural about a natural universe hierarchy?}\label{sec:what-is-natural-about-natural-hierarchy}

We have alluded to the categorical concepts of \emph{functoriality} and \emph{naturality} in our design of the so-called ``natural universe hierarchy'' above. We now substantiate this in a technical sense, working internal to a conservative higher-order extension of the logical framework of second-order algebraic theories.\footnote{See \citet{gratzer-sterling-2020} for details of conservativity of the higher-order extension.}

\begin{definition}
  Let $B:\SORT$ be a sort and let $E:B\to\SORT$ be a family of sorts. Then for any category $\mathbb{X}$ we may define a category
  $
    \Fam_E\prn{\mathbb{X}}
  $
  as follows:
  \begin{enumerate}
    \item An object of $\Fam_E\prn{\mathbb{X}}$ is a pair $(b,x)$ where $b:B$ and $x\colon E\prn{b}\to \mathbb{X}$.
    \item A morphism from $(b,x)$ to $(b',x')$ is given by a pair $(f,h)$ where of a function $f\colon E\prn{b}\to E\prn{b'}$ together with a family of morphisms $h:x\to (x'\circ f) \colon \mathbb{X}^{E\prn{b}}$.
  \end{enumerate}
\end{definition}

\begin{definition}
  Let $\mathbb{X}$ be a category; we define the meta-category $\FAM(\mathbb{X})$ to consist of pairs $(I,x)$ where $I$ is a sort and $x\colon I\to\mathbb{X}$ is a family of objects of $\mathbb{X}$. An arrow $(I,x) \to (I', x')$ is given by a function $f\colon I\to I'$ together with an arrow $h \colon x \to (x'\circ f)$ in $\mathbb{X}^I$.
\end{definition}

There is an evident inclusion $\Fam_E(\mathbb{X}) \hookrightarrow \FAM(\mathbb{X})$ sending a pair $\prn{b,x}$ to $\prn{E\prn{b}, x}$.

\begin{construction}[The functorial hierarchy]
  If we write $\CatLvl$ for the preorder of levels $i:\SortLvl$ under the inequality relation $i\leq j$, then the assignment of types $\TpUniv_i:\SortTp$ and lifting operations operations $\Lift{i}{j} : \SortTm\prn{\TpUniv_i}\to \SortTm\prn{\TpUniv_j}$ satisfying the various equational laws for $\Lift{i}{i}$ and $\Lift{j}{k}\circ\Lift{i}{j}$ and $\TpEl_j\circ \Lift{i}{j}$ can be bundled into a single functor
  \[
    (\TpUniv_\bullet, \TpEl_\bullet) \colon \CatLvl\to \Fam_\SortTm\prn{\SortTp}\text{,}
  \]
  where we regard $\SortTp$ as a discrete category.
\end{construction}

\begin{construction}[Codes for the universes]
  We now consider the \emph{coproduct} of all the co-slice preorders $i^+/\CatLvl$, which we shall write as $+/\CatLvl$. This is equivalently the \emph{inverted Artin glueing} of the functor $(-)^+\colon \SortLvl\to\CatLvl$ in which $\SortLvl$ is taken to be a discrete preorder:
  \[
    \begin{tikzcd}
      +/\CatLvl
      \ar[r, "\Cod_+"] \ar[d,"\Dom_+"']\ar[dr, phantom, "\Uparrow"] & \CatLvl \ar[d,equals,nfold]\\
      \SortLvl \ar[r, "{(-)^+}"'] & \CatLvl
    \end{tikzcd}
    \quad
    \text{or equivalently}
    \quad
    \begin{tikzcd}
      & \CatLvl
      \\
      |[elbow nw]|+/\CatLvl \ar[r]\ar[ur, sloped,"\Cod_+"] \ar[d, "\Dom_+"'] & \CatLvl^\to \ar[d, "\Dom"]\ar[u, "\Cod"']
      \\
      \SortLvl \ar[r, "{(-)^+}"'] & \CatLvl
    \end{tikzcd}
  \]

  An object of $+/\CatLvl$ is a pair $(i,j)$ with $i^+\leq j$; we have an inequality $(i,j)\leq (i',j')$ in $+/\CatLvl$ if and only if $i = i'$ and $j \leq j'$.
  Now the rules for forming $\CodeUniv_i^j$ and the equations governing $\TpEl_j\prn[\big]{\CodeUniv_i^j}$ and $\Lift{j}{k}\CodeUniv_i^j$ can be expressed most parsimoniously as a single natural transformation:
  \[
    \begin{tikzcd}[column sep=huge]
      +/\CatLvl
        \ar[r, "\Dom_+"]
        \ar[d, "\Cod_+"']
        \ar[dr,phantom, "{\Downarrow}\CodeUniv_\circ^\bullet"]
      & \SortLvl
        \ar[d, "\prn{\mathbf{1}, \TpUniv_\circ}"]
      \\
      \CatLvl
        \ar[r, "\prn{\SortTm\prn{\TpUniv_\bullet}, \TpEl_\bullet}"']
      & \FAM(\SortTp)
    \end{tikzcd}
  \]

  Indeed, given $(i,j)$ such that $i^+\leq j$ the component $\CodeUniv_i^j\colon \prn{\mathbf{1},\TpUniv_i} \to \prn{\SortTm\prn{\TpUniv_j}, \TpEl_j}$ amounts to a commuting triangle as follows:
  \[
    \begin{tikzcd}
      \mathbf{1}
        \ar[r, densely dashed, "\CodeUniv_i^j"]
        \ar[dr, sloped, "\TpUniv_i"']
      & \SortTm\prn{\TpUniv_j}
        \ar[d,"\TpEl_j"]
      \\
      &\SortTp
    \end{tikzcd}
  \]

  Naturality of $\CodeUniv_\circ^\bullet$ means precisely that for any $i,j,k$ such that $i^+\leq j \leq k$, we have $\Lift{j}{k}\CodeUniv_i^j = \CodeUniv_i^k$.
\end{construction}

\begin{construction}[Codes for dependent function types]
  The rules for forming the dependent function type codes $\CodePi_i\prn{M,N}$ and the equations governing $\TpEl_i\prn{\CodePi_i\prn{M,N}}$ and $\Lift{i}{j}\prn{\CodePi_i\prn{M,N}}$ can also be bundled into a single natural transformation:
  \[
    \begin{tikzcd}[column sep=large, row sep=small]
      \CatLvl
      \ar[d, equals, nfold]
      \ar[drrrrr, phantom, "{\Downarrow} \CodePi_\bullet"]
      \ar[rrrrr, "{
        \prn[\big]{
          \sum_{\prn{M:\SortTm\prn{\TpUniv_\bullet}}}
          \SortTm\prn{\TpUniv_\bullet}^{\SortTm\prn{\TpEl_\bullet(x)}}
          ,\
          \lambda\prn{M,N}\mathpunct{.}\Pi\prn{\TpEl_\bullet\prn{M}, \TpEl_\bullet\circ N}
        }
      }"]
      &&&&&
        \smash{\FAM\prn{\SortTp}}\vphantom{\CatLvl}
      \ar[d, equals, nfold]
      \\
      \CatLvl
        \ar[rrrrr, "\prn{\SortTm\prn{\TpUniv_\bullet},\TpEl_\bullet}"']
      &&&&& \smash{\FAM\prn{\SortTp}}\vphantom{\CatLvl}
    \end{tikzcd}
  \]

  In each component, the natural transformation above gives us a family of maps \[ \CodePi_i\colon \prn[\big]{\textstyle\sum_{\prn{M:\SortTm\prn{\TpUniv_\bullet}}}
  \SortTm\prn{\TpUniv_\bullet}^{\SortTm\prn{\TpEl_\bullet(x)}}} \to \SortTm\prn{\TpUniv_i}\] satisfying the equation
  $
    \TpEl_i\prn{
      \CodePi_i\prn{M,N}
    }=
    \TpPi\prn{
      \TpEl_i\prn{M},
      \lambda x \mathpunct{.}
      \TpEl_i\prn{N\prn{x}}
    }
  $. The naturality law states precisely that we have
  $
    \Lift{i}{j}\prn{
      \CodePi_i\prn{M,N}
    }
    =
    \CodePi_j\prn[\big] {
      \Lift{i}{j}{M},
      x\mapsto
      \Lift{i}{j}{N\prn{x}}
    }
  $
  for all $i\leq j$.
\end{construction}

\section{Metatheory of \texorpdfstring{\ThyMLTTNat}{NatHier}}

\subsection{Displayed higher-order models over first-order models}

Let $\mathbb{T}$ be a SOGAT and let $\mathcal{M}$ be a first-order model of $\mathbb{T}$. Bocquet~\cite{BocquetPhD} defined a \emph{displayed higher-order model} $\widetilde{\mathcal{M}}$ over $\mathcal{M}$ to be a higher-order model of $\mathbb{T}$ in the \emph{scone}\footnote{The \emph{scone}, or Sierpinski cone, of a category $\mathcal{C}$ is the gluing along the global section functor $\Gamma_\mathcal{C} = \mathcal{C}(1_\mathcal{C},-)$, which is explicitely the comma category $(\Set \downarrow \Gamma_\mathcal{C})$.} of $\Psh\prn{\mathbb{C}_{\mathcal{M}}}$ that restricts to $\mathcal{M}$ under the codomain functor:
\[
  \begin{tikzcd}
    \mathbb{T}
      \ar[dr,sloped,"\mathcal{M}"']
      \ar[r,densely dashed,"\widetilde{\mathcal{M}}"]
    &
    \mathbf{Scn}\prn{\Psh\prn{\mathbb{C}_{\mathcal{M}}}}
      \ar[d, "\mathsf{cod}"]
      \ar[r, "\mathsf{dom}"]
      \ar[dr,phantom,"\Swarrow"]
    &
    \Set
      \ar[d, equals, nfold]
    \\
    &
    \Psh\prn{\mathbb{C}_{\mathcal{M}}}
    \ar[r, "\mathsf{eval}_{\mathbf{1}}"']
    &
    \Set
  \end{tikzcd}
\]

In type theoretic terms, this amounts to defining for each sort in $\mathbb{T}$ a family of sets indexed in the \emph{closed} or \emph{global} elements of $\mathcal{M}.X$, and so on. In the case that $\mathbb{T}$ is the SOGAT generated by a single representable family $A:\SortTp\vdash \SortTm\prn{A}:\REPSORT$, then a displayed higher-order model over a first-order model $\mathcal{M}$ would amount to the following data:

\iblock{
  \mrow{\widetilde{\mathcal{M}}.\SortTp\colon \mathcal{M}.\SortTp\prn{\mathbf{1}}\to\Set}
  \mrow{
    \widetilde{\mathcal{M}}.\SortTm\colon\prn{
      A:\mathcal{M}.\SortTp\prn{\mathbf{1}},
      A^\bullet : \widetilde{\mathcal{M}}.\SortTp\prn{A},
      M:\mathcal{M}.\SortTm\prn{\mathbf{1},A}
    }\to \Set
  }
}

\subsubsection{From displayed higher-order models to displayed first-order models}

Now, a displayed higher-order model $\widetilde{\mathcal{M}}$ may be ``recontextualised'' to a first-order model $\mathsf{Cxl}\prn{\widetilde{\mathcal{M}}}$ displayed over $\mathcal{M}$. The category of contexts of $\mathsf{Cxl}\prn{\widetilde{\mathcal{M}}}$ is simply the scone $\mathbf{Scn}\prn{\mathbb{C}_{\mathcal{M}}}$ displayed over $\mathbb{C}_{\mathcal{M}}$ via the codomain functor; in other words, a context of $\mathsf{Cxl}\prn{\widetilde{\mathcal{M}}}$ over $\Gamma\in\mathbb{C}_{\mathcal{M}}$ is given by a family of sets $\Gamma^\bullet$ indexed in the global elements $\mathbb{C}_{\mathcal{M}}\prn{\mathbf{1},\Gamma}$. Then, in our example, a type in context $\prn{\Gamma,\Gamma^\bullet}$ over $A\in\mathcal{M}.\SortTp\prn{\Gamma}$ is given by a family of sets
\[
  A^\bullet \colon
  \prn{
    \gamma : \mathbb{C}_{\mathcal{M}}\prn{\mathbf{1},\Gamma},
    \gamma^\bullet : \Gamma^\bullet\prn{\gamma}
  }\to
  \widetilde{\mathcal{M}}.\SortTp\prn{A[\gamma]}\text{,}
\]
and a term of type $\prn{A,A^\bullet}$ displayed over $M :\mathcal{M}.\SortTm\prn{\Gamma,A}$ is given by a section
\[
  M^\bullet \colon
  \prn{
    \gamma : \mathbb{C}_{\mathcal{M}}\prn{\mathbf{1},\Gamma},
    \gamma^\bullet : \Gamma^\bullet\prn{\gamma}
  }\to
  \widetilde{\mathcal{M}}.\SortTm\prn{A[\gamma],A^\bullet\prn{\gamma,\gamma^\bullet},M[\gamma]}
\]
and so on.

\subsubsection{Sections of displayed higher-order models}
A \emph{section} of a displayed higher-order model $\widetilde{\mathcal{M}}$ over a first-order model $\mathcal{M}$ is defined to be a section (in the first-order sense) of the contextualisation $\mathsf{Cxl}\prn{\widetilde{\mathcal{M}}}$ over $\mathcal{M}$. By unraveling, we see that this amounts to a section $\bbrk{-}\colon \mathbb{C}_{\mathcal{M}}\to \mathbf{Scn}\prn{\mathbb{C}_{\mathcal{M}}}$ of the codomain projection together with the following assignments:

\begin{mathparpagebreakable}
  \ebrule{
    \hypo{A:\mathcal{M}.\SortTp\prn{\Gamma}}
    \hypo{\gamma:\mathbb{C}_{\mathcal{M}}\prn{\mathbf{1},\Gamma}}
    \hypo{\gamma^\bullet:\bbrk{\Gamma}\gamma}
    \infer3{
      \bbrk{A}\,\gamma\,\gamma^\bullet : \widetilde{\mathcal{M}}.\SortTp\prn{A[\gamma]}
    }
  }
  \and
  \ebrule{
    \hypo{A:\mathcal{M}.\SortTp\prn{\Gamma}}
    \hypo{\gamma:\mathbb{C}_{\mathcal{M}}\prn{\mathbf{1},\Gamma}}
    \hypo{\gamma^\bullet:\bbrk{\Gamma}\gamma}
    \hypo{M: \mathcal{M}.\SortTm\prn{\Gamma,A}}
    \infer4{
      \bbrk{M}\,\gamma\,\gamma^\bullet :
      \widetilde{\mathcal{M}}.\SortTm\prn{A[\gamma],\bbrk{A}\,\gamma\,\gamma^\bullet, M\brk{\gamma}}
    }
  }
\end{mathparpagebreakable}

\emph{Mutatis mutandis}, the scheme above carries on to any second-order generalised algebraic theory.

\subsection{The relative induction principle for renamings}

Let $\mathcal{M}$ be a displayed first-order model of $\mathbb{T}$ over the variable parameterisation $\VarParam{\ModS}$. A \emph{variable structure} on $\mathcal{M}$ is given by the following data:
\[
  \ebrule{
    \hypo{A:\VarParam{\ModS}.\SortTp\prn{\mathbf{1}}}
    \hypo{A^\bullet : \mathcal{M}.\SortTp\prn{\mathbf{1},A}}
    \hypo{x:\mathsf{Var}_{\ModS}\prn{A}}
    \infer3{
      \mathcal{M}.\mathsf{var}\prn{A,A^\bullet,x} : \mathcal{M}.\SortTm\prn{\mathbf{1},A,A^\bullet,\mathsf{var}\prn{A,x}}
    }
  }
\]
Similarily, a \emph{level structure} on $\mathcal{M}$ is given by the following data:
\[
  \ebrule{
    \hypo{\alpha : \LvlVar_{\ModS}}
    \infer1{
      \mathcal{M}.\lvlVar\prn{\alpha} : \mathcal{M}.\SortLvl\prn{\lvlVar\prn{i}}
    }
  }
\]

We adapt Bocquet's main result~\cite[Thm.~6.3.10]{BocquetPhD}, which he dubs the \emph{relative induction principle for renamings} to our setting with level variables.

\begin{theorem}\label{thm:relative-induction-principle}
  When $\mathcal{S}$ is the initial first-order $\mathbb{T}$-model, then $\VarParam{\ModS}$ (trivially displayed over itself) is the \emph{initial} displayed first-order model over $\VarParam{\ModS}$ with a variable and level structure. In other words, if $\mathcal{M}$ is a displayed first-order model of $\VarParam{\ModS}$ equipped with a variable and level structure, then there is a universal section $\bbrk{-}$ of $\mathcal{M}$ satisfying
  \[
    \bbrk{\mathsf{var}\prn{A,x}}
    =
    \mathcal{M}.\mathsf{var}\prn{
      A,
      \bbrk{A},
      x
    }
    \qquad
    \text{and}
    \qquad
    \bbrk{\lvlVar\prn{\alpha}}
    =
    \mathcal{M}.\lvlVar\prn{
      \alpha
    }\text{.}
  \]
\end{theorem}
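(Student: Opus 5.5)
The plan is to follow the structure of Bocquet's proof of \cite[Thm.~6.3.10]{BocquetPhD}. The only new ingredient is a second representable sort $\SortLvl$, and I will treat it on the same footing as the family $\SortTm$. The argument has two halves, existence and uniqueness of the section. Existence I would obtain from the initiality of $\ModS$ among first-order $\mathbb{T}$-models. Uniqueness I would obtain from the initiality of $\REN_{\ModS}$ among level-polymorphic renaming algebras (\zcref{def:renaming-algebra}), together with the concretion isomorphisms of \zcref{lemma:substitution-is-iso}.

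\textbf{Existence.} From $\mathcal{M}$ I will build an auxiliary first-order $\mathbb{T}$-model $\mathcal{N}$ displayed over $\ModS$ itself, so that initiality of $\ModS$ yields a section of $\mathcal{N}$.
\begin{enumerate}
\item A context of $\mathcal{N}$ over $\Gamma\in\mathbb{C}_{\ModS}$ is a presheaf over $\rho^*\!\yo(\Gamma)$ in $\Psh\prn{\mathbb{R}_{\ModS}}$. Internally, this is a family $\Gamma^\bullet(\gamma)$ indexed by closed substitutions $\gamma:\VarParam{\ModS}(\Gamma)(\mathbf{1})$.
\item A type over $A:\ModS.\SortTp(\Gamma)$ is an assignment $\gamma,\gamma^\bullet\mapsto A^\bullet:\mathcal{M}.\SortTp(\mathbf{1},A[\gamma])$, and terms are sections in the same way. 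All operations of $\mathbb{T}$ are inherited from $\mathcal{M}$ pointwise.
\item For a binder, such as the body $B$ of a $\TpPi$ or $\TpLvlPi$, I first rewrite it via \zcref{lemma:substitution-is-iso} as a family indexed by $\Var_{\ModS}(A)$ or $\LvlVar_{\ModS}$. I then feed this family to $\mathcal{M}$'s higher-order operation, using that $\mathcal{M}$ is itself displayed over $\VarParam{\ModS}$.
\item Context extension $\Gamma.A$ is interpreted by pairs $(\gamma^\bullet,a^\bullet)$ with $a^\bullet$ a displayed term. The generic variable of $\Gamma.A$ is sent to the second projection, and likewise for $\Gamma.\SortLvl$.
\end{enumerate}
The section $\mathcal{S}\to\mathcal{N}$ given by initiality is then evaluated at the generic environment of a renaming context $\Psi$. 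This environment is built by sending each variable $x:\Var_{\ModS}(A)$ to $\mathcal{M}.\mathsf{var}\prn{A,\bbrk{A},x}$ and each level variable $\alpha$ to $\mathcal{M}.\lvlVar\prn{\alpha}$. The result is the desired section $\bbrk{-}$ of $\mathcal{M}$ over $\VarParam{\ModS}$, and the two displayed equations hold by construction, since variables are interpreted by projections out of the generic environment.

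\textbf{Uniqueness.} Given another section $\bbrk{-}'$ satisfying the two equations, I restrict both sections along the renaming algebra $\REN_{\ModS}$. They agree on variables and level variables, and $\rho_{\SortTp}$ is an isomorphism, so initiality of $\REN_{\ModS}$ forces the induced environments to coincide. Initiality of $\ModS$ then forces the two sections to agree on all of $\VarParam{\ModS}$.

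\textbf{Main obstacle.} I expect the hardest step to be stability of the generic environment under renamings, which is needed for $\mathcal{N}$ to be a well-defined presheaf-valued model. Concretely, I must check that $\mathcal{M}.\mathsf{var}$ and $\mathcal{M}.\lvlVar$, viewed internally in $\Psh\prn{\mathbb{R}_{\ModS}}$, commute with weakening by a fresh term or level variable. My first attempt would be to show this directly from the concretion isomorphisms of \zcref{lemma:substitution-is-iso}. If that is insufficient, I would instead present environments as a morphism out of the initial renaming algebra, so that the required naturality comes from its universal property. The case of the level sort should be identical to the term case, because $\SortLvl$ is representable and carries no type index.
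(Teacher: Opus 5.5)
Your proposal is correct and follows the route the paper intends, which is simply to adapt Bocquet's proof of \cite[Thm.~6.3.10]{BocquetPhD} with the level sort treated exactly like the term sort: contextualise $\mathcal{M}$ into a model displayed over the external initial model $\ModS$, take the section given by initiality, and instantiate it at a generic environment whose stability under renamings is proved by induction over the initial renaming algebra. That stability lemma is indeed the crux, and it is your fallback, not the concretion lemma, that discharges it. One small correction: concretion is not needed to interpret binders in your auxiliary model, because $B[\gamma^+]$ already lives over the extended context, and the displayed body must be supplied at \emph{arbitrary} closed terms $a$ with arbitrary witnesses $a^\bullet$ (which your pairing interpretation of $\Gamma.A$ provides), not only at variables.
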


\subsection{The normalisation model of \texorpdfstring{\ThyMLTTNat}{NatHier}}\label{sec:normalisation-model}

We now define a displayed higher-order model $\widetilde{\ModS}$ over the variable parameterisation $\VarParam{\ModS}$, showing only a representative fragment of the full definition. Given $A:\SortTp\prn{\mathbf{1}}$, a displayed type $A^\bullet : \widetilde{\ModS}.\SortTp\prn{A}$ is given by the following data:
\begin{enumerate}
  \item a \emph{normal form} $A^\bullet\!.\mathsf{nf} : \IsNfTp{A}$;
  \item for each $M:\SortTm\prn{\mathbf{1},A}$, a set $A^\bullet\prn{M}$ of \emph{computability witnesses} equipped
   with functions
   \[
      \IsNeTm{A}{M} \xrightarrow[\textit{unquote}]{A^\bullet\!.\mathsf{u}_M} A^\bullet\prn{M}\xrightarrow[\textit{quote}]{A^\bullet\!.\mathsf{q}_M} \IsNfTm{A}{M}\text{.}
   \]
\end{enumerate}

Given $M:\SortTm\prn{\mathbf{1},A}$, a displayed term $M^\bullet : \widetilde{\ModS}.\SortTm\prn{A,A^\bullet,M}$ is given by an element of $A^\bullet\prn{M}$. The displayed interpretations of the basic type constructors aside from universes are standard (see \emph{e.g.}\ Bocquet \emph{et al.}~\cite{sconing}, so we will focus solely on universes.

\begin{construction}[Universe levels]
    We have a projection map
    $
      p: \prn{\mathsf{LvlVar} + \mathbf{1}} \times \mathbb{N} \longrightarrow \mathcal{S}.\SortLvl
    $
    sending $(\alpha,n)$ to $\alpha^{+^n}$ and $(*,n)$ to $\bar{n}$.
    Given a universe level $i:\SortLvl\prn{\mathbf{1}}$, a \emph{displayed universe level} over $i$ is defined to be the fibre of this projection map over $i$:
    \[
      \widetilde{\mathcal{S}}.\SortLvl\prn{i} \coloneq
      p^{-1}\prn{i}
      =
      \brc[\big]{\prn{\alpha^?,n} \bigm\vert p\prn{\alpha^?,n} = i}
    \]

    We lift the basic level operations to computability witnesses as follows:
    \[
      \widetilde{\ModS}.0 \coloneq \prn{*,0}
      \qquad
      \widetilde{\ModS}.\prn{-}^+\prn{i,\prn{\alpha^?,n}} \coloneq \prn{\alpha^?,n+1}
    \]

\end{construction}

Next we shall define the displayed universes $\widetilde{\ModS}.\TpUniv_{\prn{i,n}} : \widetilde{\ModS}.\SortTp\prn{\TpUniv_i}$ for each $n:\widetilde{\ModS}.\SortLvl\prn{i}$.

\begin{construction}[Computability witnesses for type codes]
  Given a code $M:\SortTm\prn{\mathbf{1},\TpUniv_i}$, we define a \emph{computability witness} $M^\bullet : \widetilde{\ModS}.\TpUniv_{\prn{i,(\alpha,n)}}\prn{M}$ to consist of the following data:
  \begin{enumerate}
    \item for each $m\geq n$, a normal form $M^\bullet\!.\mathsf{nf}_m : \IsNfTm{\TpUniv_{\bar{m}}}{\Lift{i}{\bar{m}}\prn{M}}$;
    \item a decoded normal form $M^\bullet\!.\mathsf{dnf} : \IsNfTp{\TpEl_i\prn{M}}$;
    \item for each $N:\SortTm\prn{\mathbf{1},\TpEl_i\prn{M}}$, a $\mathcal{U}_n$-small set $M^\bullet\prn{N}$ of computability witnesses equipped with functions
    \[
      \IsNeTm{\TpEl_i\prn{M}}{N} \xrightarrow[\textit{unquote}]{M^\bullet\!.\mathsf{u}_N} M^\bullet\prn{N}\xrightarrow[\textit{quote}]{M^\bullet\!.\mathsf{q}_N} \IsNfTm{\TpEl_i\prn{M}}{N}\text{.}
    \]
  \end{enumerate}
\end{construction}

\begin{construction}[(Un)quotation in the universe]
  We shall now define the unquotation map
  \[
    \IsNeTm{\TpUniv_i}{M}\xrightarrow[\textit{unquote}]{\widetilde{\ModS}.\TpUniv_{\prn{i,n}}.\mathsf{u}_M} \widetilde{\ModS}.\TpUniv_{\prn{i,n}}\prn{M}
  \]
  for each universe so that the computability witnesses of elements of a neutral type are given by neutral elements of that neutral type.

  \iblock{
    \mrow{
      \widetilde{\ModS}.\TpUniv_{\prn{i,n}}.\mathsf{u}_M\prn{\mathfrak{m}} : \widetilde{\ModS}.\TpUniv_{\prn{i,n}}\prn{M}
    }
    \mrow{
      \widetilde{\ModS}.\TpUniv_{\prn{i,n}}.\mathsf{u}_M\prn{\mathfrak{m}}.\mathsf{nf}_m \coloneq
      \textsc{nf-lift}_m\prn{\mathfrak{m}}
    }
    \mrow{
      \widetilde{\ModS}.\TpUniv_{\prn{i,n}}.\mathsf{u}_M\prn{\mathfrak{m}}.\mathsf{dnf} \coloneq
      \textsc{nftp-el}\prn{\mathfrak{m}}
    }
    \mrow{
      \widetilde{\ModS}.\TpUniv_{\prn{i,n}}.\mathsf{u}_M\prn{\mathfrak{m}}\prn{N} \coloneq \IsNeTm{\TpEl_i\prn{M}}{N}
    }
    \mrow{
      \widetilde{\ModS}.\TpUniv_{\prn{i,n}}.\mathsf{u}_M\prn{\mathfrak{m}}.\mathsf{u}_N\prn{\mathfrak{n}} \coloneq \mathfrak{n}
    }
    \mrow{
      \widetilde{\ModS}.\TpUniv_{\prn{i,n}}.\mathsf{u}_M\prn{\mathfrak{m}}.\mathsf{q}_N\prn{\mathfrak{n}} \coloneq \textsc{nf-ne-shift}\prn{\mathfrak{n}}
    }
  }

  The quotation map
  $
    \widetilde{\ModS}.\TpUniv_{\prn{i,n}}\prn{M}
    \xrightarrow[\textit{quote}]{\widetilde{\ModS}.\TpUniv_{\prn{i,n}}.\mathsf{q}_M}
    \IsNfTm{\TpUniv_i}{M}
  $
  is defined by projection:

  \iblock{
    \mrow{
      \widetilde{\ModS}.\TpUniv_{\prn{i,n}}.\mathsf{q}_M\prn{M^\bullet} : \IsNfTm{\TpUniv_i}{M}
    }
    \mrow{
      \widetilde{\ModS}.\TpUniv_{\prn{i,n}}.\mathsf{q}_M\prn{M^\bullet} \coloneq M^\bullet\!.\mathsf{nf}_n
    }
  }
\end{construction}

\begin{construction}[Universe lifting and decoding]
  We next define the displayed interpretation of the universe lifting and decoding maps. Here we are taking advantage of the fact that in any presheaf topos, the Hofmann--Streicher universes satisfy the laws specified in \ThyMLTTNat.

  \iblock{
    \mrow{
      \widetilde{\ModS}.\Lift{\prn{i,m}}{\prn{j,n}}\prn{M,M^\bullet} : \widetilde{\ModS}.\TpUniv_{\prn{j,n}}\prn[\big]{\Lift{i}{j}{M}}
    }
    \mrow{
      \widetilde{\ModS}.\Lift{\prn{i,m}}{\prn{j,n}}\prn{M,M^\bullet}.\mathsf{nf}_o
      \coloneq
      M^\bullet\!.\mathsf{nf}_o
    }
    \mrow{
      \widetilde{\ModS}.\Lift{\prn{i,m}}{\prn{j,n}}\prn{M,M^\bullet}.\mathsf{dnf}
      \coloneq
      M^\bullet\!.\mathsf{dnf}
    }
    \mrow{
      \widetilde{\ModS}.\Lift{\prn{i,m}}{\prn{j,n}}\prn{M,M^\bullet}\prn{N}
      \coloneq
      \Lift{m}{n}{
        M^\bullet\prn{N}
      }
    }
    \mrow{
      \widetilde{\ModS}.\Lift{\prn{i,m}}{\prn{j,n}}\prn{M,M^\bullet}.\mathsf{u}_N\prn{\mathfrak{n}}
      \coloneq
      M^\bullet\!.\mathsf{u}_N\prn{\mathfrak{n}}
    }
    \mrow{
      \widetilde{\ModS}.\Lift{\prn{i,m}}{\prn{j,n}}\prn{M,M^\bullet}.\mathsf{q}_N\prn{N^\bullet}
      \coloneq
      M^\bullet\!.\mathsf{q}_N\prn{N^\bullet}
    }
  }

  The decoding map is defined similarly:

  \iblock{
    \mrow{
      \widetilde{\ModS}.\TpEl_{\prn{i,n}}\prn{M,M^\bullet} : \widetilde{\ModS}.\SortTp\prn{M}
    }
    \mrow{
      \widetilde{\ModS}.\TpEl_{\prn{i,n}}\prn{M,M^\bullet}.\mathsf{nf} \coloneq M^\bullet.\mathsf{dnf}
    }
    \mrow{
      \widetilde{\ModS}.\TpEl_{\prn{i,n}}\prn{M,M^\bullet}\prn{N} \coloneq \TpEl_n\prn{
        M^\bullet\prn{N}
      }
    }
    \mrow{
      \widetilde{\ModS}.\TpEl_{\prn{i,n}}\prn{M,M^\bullet}.\mathsf{u}_N\prn{\mathfrak{n}} \coloneq
      M^\bullet\!.\mathsf{u}_N\prn{\mathfrak{n}}
    }
    \mrow{
      \widetilde{\ModS}.\TpEl_{\prn{i,n}}\prn{M,M^\bullet}.\mathsf{q}_N\prn{N^\bullet} \coloneq
      M^\bullet\!.\mathsf{q}_N\prn{N^\bullet}
    }
  }
\end{construction}

The displayed interpretation of level products and abstraction is a direct adaptation of the usual method for function types, which we omit for space. 

\begin{construction}[Variable and level structure]\label{con:var-and-lvl-str}
  Finally we exhibit variable and level structure.

  \iblock{
    \mrow{
      \widetilde{\ModS}.\var\prn{A,A^\bullet,x} \coloneq A^\bullet.\mathsf{u}_{\var\prn{A,x}}~\prn{
        \textsc{ne-var} : \IsNeTm{A}{\var\prn{A,x}}
      }
    }
    \mrow{
      \widetilde{\ModS}.\lvlVar\prn{\alpha} \coloneq
      \prn{\alpha, 0}
    }
  }
\end{construction}

\subsection{Injectivity of decoding for \texorpdfstring{\ThyMLTTNat}{NatHier}}

\ThmInjDecoding*
\begin{proof}
  By \zcref{thm:normalisation} we may generalise the goal over normal forms of $M$ and $N$ to show:

  \iblock{
    \mrow{
      M,N:\VarParam{\ModS}.\SortTm\prn{\mathbf{1},\TpUniv_i}
      \mid
      \IsNfTm{\TpUniv_i}{M}, \IsNfTm{\TpUniv_i}{N},
      \TpEl_i\prn{M} = \TpEl_i\prn{N}
      \vdash
      M = N
    }
  }

  We now proceed by induction on normal forms; cross-cases are ruled out by \zcref{cor:no-confusion}. The other sort of case we must consider is when the two normal forms have the same head constructor. For example, in the case of codes for $\Pi$-types:

  \iblock{
    \mrow{
      \IsNfTm{\TpUniv_i}{M},
      \IsNfTm{\TpUniv_i}{M'},
      \forall x\mathpunct{.}\IsNfTm{\TpUniv_i}{N\brk{\var\prn{x}}},
      \forall x\mathpunct{.}\IsNfTm{\TpUniv_i}{N'\brk{\var\prn{x}}},
    }
    \mrow{
      \TpPi\prn{\TpEl_i\prn{M},\TpEl_i\prn{N}} =
      \TpPi\prn{\TpEl_i\prn{M'},\TpEl_i\prn{N'}}
    }
    \iblock{
      \mrow{
        \vdash
        \CodePi_i\prn{M,N} = \CodePi_i\prn{M',N'}
      }
    }
  }

  By \zcref{cor:injectivity-of-type-constructors}, we may assume
  $\TpEl_i\prn{M} = \TpEl_i\prn{M'}$ and $\TpEl_i\prn{N}=\TpEl_i\prn{N'}$.

  \iblock{
    \mrow{
      \IsNfTm{\TpUniv_i}{M},
      \IsNfTm{\TpUniv_i}{M'},
      \forall x\mathpunct{.}\IsNfTm{\TpUniv_i}{N\brk{\var\prn{x}}},
      \forall x\mathpunct{.}\IsNfTm{\TpUniv_i}{N'\brk{\var\prn{x}}},
    }
    \mrow{
      \TpEl_i\prn{M} = \TpEl_i\prn{M'},
      \TpEl_i\prn{N}=\TpEl_i\prn{N'}
    }
    \iblock{
      \mrow{
        \vdash
        \CodePi_i\prn{M,N} = \CodePi_i\prn{M',N'}
      }
    }
  }

  By our induction hypothesis, we have $M = M'$ and $\forall x\mathpunct{.}N\brk{\var\prn{x}} = N'\brk{\var\prn{x}}$. By \zcref{lemma:substitution-is-iso} we have $N=N'$ so we are done.
\end{proof}

\end{document}